\tikzset{mycolor/.style = {dashed,rounded corners,line width=1bp,color=#1}}%
\tikzset{myfillcolor/.style = {draw,fill=#1}}%
\tikzset{
	declare function={
		normcdf(\x,\m,\s)=1/(1 + exp(-0.07056*((\x-\m)/\s)^3 - 1.5976*(\x-\m)/\s));
	}
}
\renewcommand{\baselinestretch}{1.6} 
\newcommand{\single}{\renewcommand{\baselinestretch}{1.2}\normalsize}
\newcommand{\double}{\renewcommand{\baselinestretch}{1.63}\normalsize}
  \renewenvironment{thebibliography}[1]{
    \begin{oldthebibliography}{#1}
      \setlength{\parskip}{0ex}
      \setlength{\itemsep}{0ex}
  }
  {
    \end{oldthebibliography}
  }
\newcommand{\bea}{\begin{eqnarray*}}
\newcommand{\eea}{\end{eqnarray*}}
\newcommand{\be}{\begin{eqnarray}}
\newcommand{\ee}{\end{eqnarray}}
\newcommand{\ed}{\end{document}}
\newcommand{\btab}{\begin{tabular}}
\newcommand{\etab}{\end{tabular}}
\newcommand{\bi}{\begin{itemize}}
\newcommand{\ei}{\end{itemize}}
\newcommand{\bfi}{\begin{figure}}
\newcommand{\efi}{\end{figure}}
\newcommand{\ben}{\begin{enumerate}}
\newcommand{\een}{\end{enumerate}}
\newcommand{\bay}{\begin{array}}
\newcommand{\eay}{\end{array}}
\def\vs{\vspace{.5cm}}
\definecolor{DarkBlue}{rgb}{0,.08,.45}
\definecolor{DarkRed}{rgb}{.7,0,.4}
\def\hg #1 {\texcolor{cyan}{{\it Hans:}   #1}}
\def\bco{\iffalse}
\def\cp{\citep}
\newcommand{\no}{\noindent}
\newcommand{\bc}{\begin{center}}
\newcommand{\ec}{\end{center}}
\newcommand{\bsp}{\begin{split}}
\newcommand{\esp}{\end{split}}
\newcommand{\bdes}{\begin{description}}
\newcommand{\edes}{\end{description}}
\newcommand{\bass}{\begin{assumption}}
\newcommand{\eass}{\end{assumption}}
\newcommand{\bthm}{\begin{theorem}}
\newcommand{\ethm}{\end{theorem}}
\newcommand{\blem}{\begin{lemma}}
\newcommand{\elem}{\end{lemma}}
\def\bco{\iffalse}
\def\cp{\citep}
\DeclareMathOperator*{\argmin}{argmin}
\newtheorem{theorem}{Theorem}
\newtheorem{lemma}{Lemma}
\begin{document}
\thispagestyle{empty} \single \bc {\bf \sc \Large Autoregressive optimal transport models}
\vspace{0.15in}\\

Changbo Zhu \\
Department of Applied and Computational Mathematics and Statistics \\ University of Notre Dame \\
Notre Dame, IN 46556 USA \vspace{0.1in} \\
Hans-Georg M\"uller \\
Department of Statistics \\ University of California, Davis \\
Davis, CA 95616 USA \ec \centerline{30 March 2022}



\begin{abstract}
Series of univariate distributions indexed by equally spaced time points are ubiquitous in applications and their analysis constitutes one of the challenges of the emerging field of distributional data analysis. To quantify such distributional time series, we propose a class of intrinsic autoregressive models that operate in the space of optimal transport maps. The autoregressive transport models that we introduce here are based on regressing optimal transport maps on each other, where predictors can be
 transport maps from an overall barycenter to a current distribution or transport maps between past consecutive distributions of the  distributional time series.  Autoregressive transport models and their associated distributional regression models specify the link between predictor  and response  transport maps by moving along geodesics in Wasserstein space. These models emerge as natural extensions of the classical autoregressive models in Euclidean space. Unique stationary solutions of  autoregressive transport models are  shown to exist under a geometric moment contraction condition of  \cite{wu2004limit}, using properties of  iterated random functions.  We also discuss an extension to a varying coefficient model for first order autoregressive transport models. In addition to simulations, the proposed models are illustrated  with distributional time series of house prices across U.S. counties  and annual summer temperature distributions.\\   
\end{abstract}

\no {KEY WORDS:\quad Distributional Data Analysis, Distributional Regression, Distributional Time Series, Iterated Random Function, Optimal Transport, Wasserstein space}.
\thispagestyle{empty} \vfill
\noindent \vspace{-.2cm}\rule{\textwidth}{0.5pt}\\
{\small Research supported in part by  NSF DMS-2014626 and NIH Echo UH3OD023313} 

\newpage
\pagenumbering{arabic} \setcounter{page}{1} \double


\section{INTRODUCTION}

\no Distributional data analysis (DDA) deals with data that include random distributions as data elements. While such data  are prevalent in many applied problems \citep{mena:18,mata:21}, this area is still in its early development.   An important instance where one encounters distributional  data arises for sequences of dependent distributions 
that are indexed by discrete time.  Such distributional time series are ubiquitous. For instance, the distribution of the log returns of the stocks  included in a stock  index is expected to contain more information than the  index itself, which only conveys  the mean of the distribution but not any further information inherent in the distribution such as quantiles.  Elucidating the nature of such financial time series is for example of interest for risk management \citep{bekierman2016mixed, kokoszka2019forecasting}. We will illustrate the proposed methods with the time series of distributions of 
 house prices that are formed from U.S. county house price data  and may inform economic policy \citep{oikarinen2018us, bogin2019local} and also with time series of annual distributions of temperatures aggregated over the summer, where a rise in night time temperatures and more frequent extremes have been related to global warming.

Other pertinent examples include the analysis of  sequences of age-at-death distributions over calendar years, which is instrumental for  the study of human longevity \citep{mazzuco2015fitting, shang2017grouped, ouellette2011changes} and also the study of the distributions of correlations between pairs of voxels within brain regions that can be derived from fMRI Bold signals 
\citep{petersen2016functional}, where such distributions may be observed repeatedly for the same subject in longitudinal studies that include fMRI brain imaging and where measurements are taken at regular time intervals.  

Distributions can be equivalently represented as either density, quantile or cumulative distribution functions, assuming that all of these exist. Each of these representations comes with certain constraints (for example, density functions are nonnegative and integrate to 1). An important observation is that  the spaces where these objects live are nonlinear. As a consequence, common statistical tools that are available in linear function spaces  such as the Hilbert space $L^2$ that is utilized in functional time series analysis \citep{bosq:00}  
are inadequate and there is a need for the development of adequate statistical methodology. It is the goal of this paper to contribute to the development of autoregressive models for one-dimensional distributions, given that autoregressive models are  popular in time series analysis and have been also considered for distributional time series  in recent work based on mapping to tangent spaces in the Wasserstein manifold  \citep{mull:20:7,zhang2020wasserstein}.

Existing approaches for distributional regression are based on various  transformation approaches that include mapping the distributions into a Hilbert space as implemented in the log quantile distribution approach \citep{kokoszka2019forecasting, petersen2016functional} or through logarithmic maps in the Wasserstein manifold \citep{mull:20:7}, where one 
uses the Wasserstein metric in the distribution space and maps the distributions to a tangent space that is a $L^2$ space, anchored at a suitable distribution, often chosen as a barycenter. One then can implement functional regression models in the 
ensuing $L^2$ space, and analyze these models by employing parallel transport. While the log quantile distribution transformation approach to distributional regression can lead to large metric deformations, the tangent bundle approach is extrinsic and there are some difficulties with the required inverse exponential maps that are caused  by the injectivity requirement that one needs to numerically address 
in finite sample situations. 
Various projection methods have been devised to tackle  this problem  \citep{bigo:17,mull:20:7,pego:21}, while in other recent work on extrinsic modeling it has  been ignored \citep{zhang2020wasserstein}, which can lead to inferior performance.

Since the autoregressive transport models we propose here are intrinsic, they bypass the construction of a tangent space and the ensuing problems with mapping and projection. 
For the case where only the responses are distributional but predictors are vectors, one can apply Fr\'echet regression \citep{mull:19:3}. 
Concurrently with this paper, a distributional regression model with one predictor was proposed for the independent case  \citep{ghod:21}, where the goal is to learn a single best transport map that maps the predictor distribution to the response distribution, so the model parameter is the transport map learned from the data. This is akin to fitting a linear regression model where the slope parameter is fixed at 1. A nice feature of this simple model is that finding the best transport map has been shown to be equivalent to an isotonic regression problem, which can be solved by standard optimization techniques.

In this paper, we propose a novel class of intrinsic distributional regression models for the autoregressive modeling of distributional time series. The proposed models are based on transports of the probability measures. The most popular notion of transport of distributions is optimal transport, which commonly refers  to moving distributions along geodesics in the Wasserstein  space, i.e., the space of distributions equipped with the Wasserstein metric.  The key innovation  in the proposed regression model is that both predictors and responses are taken to be transports of distributions, rather than distributions themselves, in contrast to the currently available distributional regression models. Our  focus is on univariate distributions with bounded support on the real line, which is the most relevant case in statistical data analysis. 
Moreover, in data applications the  distributions that are part of the data sample  are not known a priori and in practice need  to be estimated from data they generate by nonparametric methods. Such methods  include  kernel density estimation and related approaches,  and for practical implementations a  bounded interval that defines the domain needs  to be fixed beforehand. For the relatively uncommon applications that require the distributions to be supported on the entire real line it is common practice to truncate the target distribution at a large enough interval and to target the truncated distribution, with negligible error. 

Typical examples for predictor or response transports are the transports defined by pushing distributional barycenters (Fr\'echet means) forward to individual distributions, and  in the distributional time series framework also the transports pushing   the distribution at time $(j-1)$ to that at  time $j$,  which may serve as predictors for the transports pushing the distribution at time $j$ forward to that at  time $j+1$. The idea of considering transports rather than distributions as predictors or responses, especially transports from barycenters,  is motivated by the classical simple linear regression model for scalar predictors and responses.  This model  can be written in transport form as $E(Y-\mu_Y|X-\mu_X)=\beta(X-\mu_X)$, where $\mu_Y=EY,\, \mu_X=EX$ and 
$\beta$ is the slope parameter, where  both responses $Y-\mu_Y$ and predictors $X-\mu_X$ can be interpreted as transports pushing the barycenters $\mu_X, \mu_Y$ forward to the individual data $X,Y$. As we show here, this transport interpretation of linear regression provides a natural approach to extend classical regression to distributional regression modeling by regressing transports on each other.

We focus here on  autoregressive transport models (ATM) that permit an inherent  geometrical interpretation by relating geodesics in transport space to each other, where  a first order ATM (or ATM(1)) connects transports related to time $(j-1)$ to transports related to time $j$.  As in the independent case, geometric transport interpretations can also be applied to the case of  scalar or vector time series in Euclidean space, motivating the extension to distributional time series where transports
are very natural.
 One of our main results is the existence and uniqueness of a  stationary solution for ATM(1) processes, for which we utilize  the geometric-moment contraction condition \citep{wu2004limit} for iterated random maps. While the proposed models generally involve scalar coefficients and 
 are well interpretable, we also consider an extension for  ATM(1)  processes, where the ATM features a  functional rather than scalar coefficient. We show that this functional coefficient can  also be estimated consistently from samples. The definition  of ATMs of order $p$  (ATM($p$)) is obtained as a  straightforward extension;  these models possess a  multi-layer structure. We demonstrate that  ATMs  are useful to capture the dynamic evolution  of distributions for both real and synthetic data.

The rest of the paper is organized as follows. In Section 2, we provide some preliminary discussion on basic concepts such as Wasserstein space, optimal transport maps and geodesics. We also introduce addition and scalar multiplication operations for the space of  transport maps. Section 3 includes methodology and theoretical results for  ATM(1) models.  Extensions to ATM($p$) models and versions of ATM(1) models with functional coefficients are the topics  of 
Sections 4 and 5.  Numerical considerations and applications to simulated and real data can be found in Section 6.  Conclusions are in Section 7, while the Appendix contains proofs and technical details.

\section{THE SPACE OF TRANSPORT MAPS}

\no Defining  $\mathcal{W}$ to be the set of probability distributions on $(\mathcal{S}, \mathscr{B}(\mathcal{S}))$ with finite second moments, where $\mathcal{S} = [s_1,s_2]$ is a bounded closed interval in $\mathbb{R}$ and  $\mathscr{B}(\mathcal{S})$ is the Borel $\sigma$-algebra on $\mathcal{S}$, let
$\mathcal{W}$ 
be the set of probability measures on $\mathcal{S}$. We assume there is an underlying probability space $(\Omega, \mathcal{A}, P)$ 
of $\mathcal{W}$-valued random variables that induces a probability measure on the space $\mathcal{W}$ with respect to which we can calculate moments for random variables taking values in $\mathcal{W}$.

For any measurable function $T:\mathcal{S} \rightarrow \mathcal{S}$ and $\mu \in \mathcal{W}$, let $T_{\#}\mu$ denote the pushforward measure of $\mu$, i.e. for any $B \in \mathscr{B}(\mathcal{S})$, $T_{\#}\mu (B) = \mu( \{ x : T(x) \in B \} )$. If $\mu_1$ is absolutely continuous with respect to the Lebesgue measure, then the 2-Wasserstein metric ($d_{\mathcal{W}}$) on $\mathcal{W}$ can be written using the Monge formulation \citep{vill:03}
\begin{multline} \label{eq:monge}
d_{\mathcal{W}}(\mu_1, \mu_2)  = \inf_{T:T_{\#}\mu_1 = \mu_2} \left\{ \int_{\mathcal{S}} ( T(x) - x )^2 d\mu_1(x) \right\}^{1/2} \\
 =\left\{ \int_{\mathcal{S}} ( T_{12}(x) - x )^2 d\mu_1(x) \right\}^{1/2} 
 = \left\{ \int_{0}^1 ( F_{2}^{-1}(u) - F_1^{-1}(u) )^2 d u \right\}^{1/2}.
\end{multline}
Here  $\mu_1, \mu_2 \in \mathcal{W}$, $ F_1=F(\mu_1)$ and  $F_2=F(\mu_2) $ are the cumulative distribution functions (cdf)  of $ \mu_1 ,\mu_2 $ respectively, and 
 \begin{align*}
F_{1}^{-1}(u) := \inf \{ x \in \mathcal{S}: F_1(x) \geq u \}, \;
F_{2}^{-1}(u) := \inf \{ x \in \mathcal{S}: F_2(x) \geq u \}
 \end{align*}
are the corresponding quantile functions, defined as left-continuous inverses of the cdf.  
A map $T$ that satisfies $T_{\#}\mu_1 = \mu_2$ is a transport map from $\mu_1$ to $\mu_2$ and $T_{12} = F_2^{-1} \circ F_1$ is referred to as the optimal transport map that pushes the probability measure $\mu_1$ forward to the measure $\mu_2$.

For a nonempty interval $I \subset \mathbb{R}$, the length of a given curve $ \gamma:I \rightarrow \mathcal{W}$ is $L(\gamma) := \sup \sum_{i=1}^k d_{\mathcal{W}}(\gamma(t_{i-1}), \gamma(t_i))$, where the supremum is taken  over all $k \in \mathbb{N}$ and $t_0 \leq t_1 \leq \dots \leq t_k $ in $I$. For absolutely continuous  $\mu_1$,  McCann's interpolant \citep{mcca:97} is the curve $\gamma: [0,1] \rightarrow \mathcal{W}$ given by 
\begin{align*}
\gamma(a) =\left( id + a ( T_{12} - id ) \right)_{\#} \mu_1, 
\end{align*}
where $a \in [0,1]$ and $id$ is the identity map. McCann's interpolant is the geodesic in $\mathcal{W}$ that corresponds to the optimal transport from $\mu_1$ to $\mu_2$, where we do not distinguish  between this geodesic and the transport map  $T_{12}$; we note  that  $L(\gamma) = d_{\mathcal{W}}(\mu_1, \mu_2)$ and $\gamma$ has constant speed $d_{\mathcal{W}}(\gamma(a_1), \gamma(a_2)) = (a_2-a_1) d_{\mathcal{W}}(\mu_1, \mu_2)$ for any $ 0 \leq a_1 \leq a_2 \leq 1 $. 

Our focus is on a time series of distributions $ \{  \mu_{i} \}_{i=1,2, \dots, n} \subset \mathcal{W}$,  which is assumed to possess some stationarity properties, including stationarity of the mean. This  means that there exists a  common Fr\'{e}chet mean or barycenter $\mu_{\mathcal{F}}$, given by 
\begin{align*}
\mu_{\mathcal{F}} := \argmin_{\nu \in \mathcal{W}} \mathbb{E} d_{\mathcal{W}}^2(\nu, \mu_i) \text{ for all } i =1,2, \dots, n, 
\end{align*}
where existence and uniqueness are assured by the fact that the Wasserstein space for one-dimensional distributions is a Hadamard space  \citep{kloe:10}.

We now consider the space of all Lebesgue integrable functions on $\mathcal{S}$, $L^p(\mathcal{S}) = \{ f:\mathcal{S} \rightarrow \mathbb{R}\; | \; \| f \|_{\mathcal{L}^p} < \infty \} $, where $1 \leq p < \infty$, $\lambda$ is the Lebesgue measure and $ \| f \|_{\mathcal{L}^p} := (\int_{\mathcal{S}} |f|^p d\lambda)^{1/p} $ is the usual $L^p$-norm. Define the set $\mathcal{T}$ as
\begin{align} \label{eq:inc}
\mathcal{T} = \left\lbrace T:\mathcal{S} \rightarrow \mathcal{S} \; | \; T(s_1)=s_1, T(s_2) =s_2, \,\, T \text{ is non-decreasing} \right\rbrace.
\end{align}
Since $\mathcal{T}$ is a closed subset of $L^p(\mathcal{S})$, it is a complete metric space with respect to the $L^p$-norm, i.e. the limit of every Cauchy sequence of points in $\mathcal{T}$ is still in $\mathcal{T}$.  In addition, $\mathcal{T} \subset L^p(\mathcal{S})$ can be equivalently identified as  
$
\mathcal{T}= \left\{ T: \mathcal{S} \rightarrow \mathcal{S} \; \left| \; T := F_{1}^{-1} \circ F_{2} \right. \right\}, 
$ 
where, as above,  $F_1, F_2$ are the cdfs of probability measures  $\mu_1, \mu_2 \in \mathcal{W}$. 
Here, $F_1, F_2$ may not be continuous and are not necessarily strictly increasing. For any $T \in \mathcal{T}$, the representation $T=F_1^{-1} \circ F_2$ is not unique and one may choose $F_2$ to be the cdf of a uniform distribution, in which case
$T$ is represented by $F_1^{-1}$ only, which then is unique.  This not only induces a metric on $\mathcal{T}$ but also shows that   $\mathcal{W}$ and $\mathcal{T}$ are isometric with this induced metric. This isometry  induces a probability measure on   $\mathcal{T}$ that is inherited from the corresponding measure on $\mathcal{W}$.
Furthermore,  for every $T \in \mathcal{T}$, there exists a uniquely defined inverse transport map $T^{-1}\in \mathcal{T}$; for any given representation $T=F_1^{-1} \circ F_2$,  $T^{-1}=F_2^{-1} \circ F_1$.

To build an autoregressive model for elements in $\mathcal{T}$, we introduce addition and scalar multiplication operations   in the transport space   $\mathcal{T}$
as follows. 
\begin{itemize}
	\item Addition: $T_1 \oplus T_2 := T_2 \circ T_1$, where $T_1, T_2 \in \mathcal{T}$.
	\item Scalar multiplication: For any $x \in \mathcal{S}$ and $T \in \mathcal{T}$, for any  $\alpha \in \mathbb{R}$ with $-1 \leq \alpha  \leq 1$, let 
		\begin{align*}
		\alpha \odot T (x) := \left\lbrace  \begin{array}{cc}
		x + \alpha (T(x)-x), & 0 < \alpha \leq 1 \\
		x, & \alpha=0 \\
		x + \alpha (x-T^{-1}(x)), & -1 \leq \alpha <0
		\end{array} \right. .
		\end{align*}
		For any  $|\alpha| > 1$, let $b = \lfloor |\alpha| \rfloor$, the integer part of $\alpha$, and set $a = |\alpha| - b$. We then define  a scalar multiplication in transport space by
		\begin{align*}
		\alpha \odot T (x) := \left\lbrace  \begin{array}{cl}
		(a \odot T) \circ \underbrace{ T\circ T \circ \dots \circ T}_{b \text{ compositions of } T} (x) , &  \alpha > 1  \\
		(a \odot T^{-1}) \circ \underbrace{T^{-1} \circ T^{-1} \circ \dots \circ T^{-1} }_{b \text{ compositions of } T^{-1} } (x) , &  \alpha < -1
		\end{array} \right.,
		\end{align*}
\end{itemize}

These operations are motivated as follows.  Addition of transports is defined as their simple concatenation, which is a straightforward  extension from the case of transports in $\mathbb{R}^p$,  where transports correspond to vectors $V$ that are added to a vector argument $c$, so that $T_V(c)=V+c.$  Consecutively applying two transport maps $T_{V_1}$ and then $T_{V_2}$ then means 
 adding the sum of the two vectors $V_1 + V_2$ to the argument vector $c$, so that
$T_{V_2}\circ T_{V_1}(c)=V_1 + V_2 +  c.$ 
For scalar multiplication, given  $0 < \alpha <1$, a transport vector  $\alpha V$ defines the transport  $T_{\alpha V}(c)=\alpha V + c$ and therefore transports an argument vector $c$ to a point on the straight line (geodesic) between $c$ and $c + V$.  So if $T_1$ is the optimal transport that pushes $\mu_{\mathcal{F}}$ to $(T_1)_{\#} \mu_{\mathcal{F}}$, it is natural to define  $\alpha \odot T_1$  such that it pushes $\mu_{\mathcal{F}}$ to a distribution lying  on the geodesic from $ \mu_{\mathcal{F}} $ to $ (T_1)_{\#} \mu_{\mathcal{F}}$ where its location on the geodesic is characterized by a  fraction of length $\alpha$ when measuring length from the starting point $ \mu_{\mathcal{F}} $.
   When $\alpha$ is negative,  $ c+ \alpha V = c + |\alpha| (-V)$, where $-V$ can be   interpreted as the transport  map that pushes $c+V$ to $c$ and thus is the inverse transport $T_V^{-1}$  of the transport $T_V$. The obvious extension to optimal transport maps in distribution spaces then leads to the above definition of scalar multiplication, which is further illustrated in 
 Figure \ref{fig:operationillustration}.
A  distinction from  the vector space case is  that the addition $\oplus$ for optimal transport maps is not commutative. For scalar multiplication with factors $\alpha$ that are such that $|\alpha|>1$, if $\alpha$ is an integer we decompose the map $T^\alpha$ into an iterative sum of maps $T$,  and if $\alpha$ is not an integer we apply the integer part of $\alpha$ first and after this apply an additional transport map that is a scalar multiplication of $T$ with the left-over fractional part of $\alpha$.
\begin{figure}[t]
	\centering
	\begin{tikzpicture}
	
	\draw[very thick,->] (0,0) -- (6,0) ;
	\draw[very thick, ->] (0,0) -- (0,6) ;
	\draw[very thick,rounded corners=3mm] (7,3)--(7.5,5)--(8.4,5.3)--(8.6,5.8)--(11,6)--(12.5,5.7)--(14.1,4.6)--(14.7,3.3)--(14.9,2.5)--(14.6,0.9)--(13,0.2)--(11,0)--(9,0.2)--(8.3,0.3)--(7.3,1)--cycle;
	\filldraw (1,1) node[anchor=south east] {$c$};
    \filldraw [gray] (1,1) circle (2pt);
    \filldraw [gray] (3,5) circle (2pt);
	\filldraw (8,1) node[anchor=south east] {$\mu_{\mathcal{F}}$};
	\draw (3,6) node[anchor = south] {$\mathbb{R}^2$};
	\draw (10.5,6) node[anchor = south] {$\mathcal{W}$};
	\filldraw [gray] (8,1) circle (2pt);
	\draw[arrows={->[scale=1.5]}, dashed] (2.5,4) -- (3,5) node[anchor=south]  {$c + V_{1}$};
	\filldraw [gray] (11,5) circle (2pt);
	\draw (11,5) node[anchor=south] {$ (T_1)_{\#}\mu_{\mathcal{F}}$};
	\filldraw [gray] (10.1,4) circle (2pt);
	\draw[dashed,arrows={->[scale=1.5]}] (10.1,4) to[out=80,in=-150] (11,5);
	\draw[thick, arrows={->[scale=1.5]}, dashed, red] (8,1) to[out=50,in=-100,] (10.1,4);
	\filldraw (10.1,4) node[anchor=east] {$(\alpha \odot T_{1})_{\#} \mu_{\mathcal{F}}$};
	\draw[thick, arrows={->[scale=1.5]}, dashed, red] (1,1) -- (2.5,4) ;
	
	\draw[thick, arrows={->[scale=1.5]}, dashed, blue] (3,5) --(4.5, 1.5) ;
	\filldraw [gray] (2.5,4) circle (2pt);
	\draw (2.5,4) node[anchor=east] {$c+ \alpha V_{1} $};
	\draw (4.5,1.5) node[anchor=north west] {$c+ V_{1} + V_2$};
	\filldraw [gray] (4.5,1.5) circle (2pt);
	\draw[thick, arrows={->[scale=1.5]}, dashed, green] (1,1) -- (4.5, 1.5) ;
	
	\draw[thick, arrows={->[scale=1.5]}, dashed, blue] (11,5) to[out = -50, in =170] (12, 2.5);
	\draw (12,2.5) node[anchor=west] {$(T_1 \oplus T_2)_{\#}\mu_{\mathcal{F}}$};
	\filldraw [gray] (12,2.5) circle (2pt);
	\draw[thick, arrows={->[scale=1.5]}, dashed, green] (8,1) to[out=20, in=-120] (12, 2.5) ;
	\end{tikzpicture}
	\caption{Motivating the definition of the  addition $\oplus$ and scalar multiplication $\odot$ operations for $0 < \alpha <1$ in the Wasserstein optimal transport space for transports $T_1, T_2$ (right), while in 
	$\mathbb{R}^2$ optimal transports are defined by vectors  $V_1, V_2$ (left).}
	\label{fig:operationillustration}
\end{figure}

Observe that($\mathcal{T}$,$\oplus$) is a (non-Abelian) group with the identity map as identity.  For any $T \in \mathcal{T}$, the inverse is $T^{-1}$. By the definition of $\oplus$, we have
\begin{align*}
(T_1 \oplus T_2) \oplus T_3 = T_3 \circ (T_2 \circ T_1) = (T_3 \circ T_2) \circ T_1 = T_1 \oplus ( T_2 \oplus T_3 ),
\end{align*}
which entails the associativity of $\oplus$. Regarding the relation between $\odot$ and $\oplus$, distributive laws do not hold, i.e. there exists $\alpha, \beta \in \mathbb{R}$ and $T_1, T_2 \in \mathcal{T}$ such that
\begin{align*}
     \alpha \odot ( T_1 \oplus T_2 ) \neq (\alpha \odot T_1 ) \oplus (\alpha \odot T_2), \quad\quad
     (\alpha + \beta) \odot T_1 \neq (\alpha \odot T_1) \oplus (\beta \odot T_1).
\end{align*}

A simple example is as follows. Set  $\mathcal{S} = [0,1]$, $ T_1(x) = x^2 $, $T_2 (x) = (x+x^2)/2$ and $\alpha = 0.6$, $\beta =0.7$. Simple algebra shows that  $ (\alpha + \beta) \odot T_1 = 0.7x^2 + 0.3x^4 \neq 0.3(0.4x+0.6x^2) + 0.7 (0.4x+ 0.6x^2)^2 = (\alpha \odot T_1) \oplus (\beta \odot T_1).$ In addition, the coefficient of $x^4$ in the $4$th order polynomial (with respect to $x$) $\alpha \odot ( T_1 \oplus T_2 )$ is 0.3, while $x^4$ has coefficient $0.6^3$ in $ (\alpha \odot T_1 ) \oplus (\alpha \odot T_2) $, which indicates that $\alpha \odot ( T_1 \oplus T_2 ) \neq   (\alpha \odot T_1 ) \oplus (\alpha \odot T_2)$.

\section{AUTOREGRESSIVE TRANSPORT MODELS OF ORDER 1}

\subsection{Model and Stationary Solution}

\no We first consider a time series $\{ X_i \}_{i=1,2, \dots, n} \subset \mathbb{R}^p$ with constant mean $E[X_i] = \mu \in \mathbb{R}^p$. The vector autoregressive model of order 1 ($\text{VAR}$(1)) with scalar coefficient is
\begin{align}
\label{eq:ar1}
X_i - \mu = \beta (X_{i-1} - \mu) + \epsilon_i,
\end{align}
where $\beta \in \mathbb{R}$ and $\{ \epsilon_i \}_{i=1,2, \dots, n} \subset \mathbb{R}^p$ are the i.i.d innovations with mean 0. In this Euclidean time series model, the  vector $X_i - \mu$ can be interpreted as the optimal transport map pushing  $\mu$ to $X_i$, which provides the inspiration for the proposed ATM.  

In general metric spaces, differences cannot be formed and thus a direct extension of model \eqref{eq:ar1} is not feasible.  However, in transport spaces with uniquely defined optimal transports along geodesics we can reinterpret differences of elements in terms of such optimal transports. 
Specifically, in Wasserstein space, we define the difference between two distributions $\mu_2$ and $\mu_1$ to be the optimal transport map that pushes $\mu_{1}$ to $\mu_2$, i.e., 
\begin{align} \label{diff} 
\mu_2 \ominus \mu_1 = F_2^{-1} \circ F_{1},
\end{align}
where in \eqref{diff} $ F_{1}=F(\mu_1)$, $F_{2}=F(\mu_{2})$ are the cdfs of measures $\mu_1$, $\mu_{2}$, respectively. We also require appropriate generalizations for  the random innovations $\epsilon_i$ that now become random transports. 
Extending the notion of  additive noise for Euclidean data, we model noise in transport space  as random transport maps  in $\mathcal{T}$ constrained in such a way that  their Fr\'echet mean (barycenter) is the identity transport. A noise contaminated version of a transport 
map $T \in \mathcal{T}$ is thus $T \oplus \epsilon$,  where $E(\epsilon)=id.$

Motivated by model \eqref{eq:ar1}, the autoregressive transport model of order 1 (ATM(1)) is  
\begin{align}
\label{eq:atmm}
T_i =  \alpha \odot T_{i-1} \oplus \varepsilon_{i}, \text{ where } T_i  = \mu_i \ominus \mu_{\mathcal{F}},
\end{align}
where $\alpha \in \mathbb{R}$ is the model parameter and  the  $\varepsilon_i$ are 
i.i.d random distortion transport maps with mean $E (\varepsilon_i) = id$. 
The proposed ATM approximates the optimal transport map at time $t=i$ with the scaled transport map $\alpha \odot T_{i-1}$, in analogy to the VAR(1) model $ X_i -c = \beta (X_{i-1} - c) + \epsilon_i$, which can  be interpreted as approximating the optimal transport map $X_i-c$ with the scaled  transport map $ \beta (X_{i-1} - c) $; see  Figure \ref{fig:atmillustration} for an illustration. While  \eqref{eq:ar1}  provides the usual formulation of the VAR(1) model, another way to view the model is by relating past differences to current differences, i.e., model \eqref{eq:ar1} gives rise to the alternative model 
\begin{align}
\label{eq:ar1diff}
    X_i - X_{i-1} = \beta (X_{i-1} - X_{i-2}) + \epsilon_i. 
\end{align}

The difference $ X_i - X_{i-1} $ can be interpreted as the optimal transport map between $X_{i-1}$ and $X_i$. In Wasserstein space, autoregressive transport models of order 1 (ATM(1)) can analogously be built with optimal transports between adjacent distributions,
\begin{align}
\label{eq:atmd}
T_i =  \alpha \odot T_{i-1} \oplus \varepsilon_{i},  \text{ where } T_i = \mu_{i+1} \ominus \mu_i, 
\end{align}
where  the $\varepsilon_i$  
are again i.i.d random distortion maps with $E (\varepsilon_i) = id$.

Next we show the existence of stationary solutions for models \eqref{eq:atmm} and \eqref{eq:atmd}. For any $S, T \in \mathcal{T}$, $1 \leq q  < \infty$  and random distortion map $ \varepsilon $, we utilize the distances $d_q(S,T) = \| S-T \|_{\mathcal{L}^q}$ on $\mathcal{T}$ and define  $\phi_{\varepsilon}, \widetilde{\phi}_{i,m} : \mathcal{T} \rightarrow \mathcal{T}$ by 
\begin{align*}
    \phi_{\varepsilon}( S ) =  \alpha \odot S \oplus \varepsilon, \qquad 
    \widetilde{\phi}_{i,m}(S)  = \phi_{\varepsilon_i} \circ \phi_{\varepsilon_{i-1}} \circ \dots \circ \phi_{\varepsilon_{i-m+1}}(S).
\end{align*}
Then under a suitable contraction condition, stationary solutions exist.

\begin{theorem}\label{thm1}
Suppose there exists $ \eta >0$, $ S_0 \in \mathcal{T}$, $C >0$ and $r \in (0,1)$ such that 
\begin{align}
\label{eq:moment}
E \left[ d_q^{\eta} \left( \widetilde{ \phi}_{i, m}(S_0), \widetilde{\phi}_{i, m}(T) \right) \right] \leq C r^{m} d_q^{\eta} ( S_0, T)
\end{align}
holds for a given  $1\leq q < \infty$ and all $m \in \mathbb{N}$ and  all $ T\in \mathcal{T}$. Then, for all $S \in \mathcal{T}$, $\widetilde{T}_i := \lim\limits_{m \rightarrow \infty} \widetilde{ \phi}_{i, m}(S) \in \mathcal{T}$ exists almost surely and does not depend on $S$. In addition, $\widetilde{T}_i$ is a stationary solution to the following system of stochastic transport equations 
\begin{align}
\label{eq:arr1}
T_i =  \alpha \odot T_{i-1} \oplus \varepsilon_{i}, \; i \in \mathbb{Z}
\end{align}
and is unique almost surely.
\end{theorem}

The proof utilizes the theory of iterated random function systems \citep{diaconis1999iterated}, where a crucial element is 
the 
geometric-moment contraction condition \eqref{eq:moment} of \cite{wu2004limit}. Regarding  sufficient conditions for  \eqref{eq:moment} when $q=1$, easy algebra shows that $ d_1( \alpha \odot S, \alpha \odot T ) = \alpha d_1(  S,  T ) $ for a positive $\alpha$. From the corresponding result on the $L_1$ distance of cdfs \cite[see, e.g.,][]{shor:09},
one immediately finds   
\begin{align*}
d_1(S,T)=\int_{\mathcal{S}} |S(x) - T(x)| dx = \int_{\mathcal{S}} |S^{-1}(x) - T^{-1}(x)| dx,
\end{align*}
which then entails that $d_1( \alpha \odot S, \alpha \odot T ) = -\alpha d_1(  S,  T )$ when $\alpha <0$. Suppose for any $S, T \in \mathcal{T}$,
$E[ d_1( \varepsilon_i\circ S, \varepsilon_i \circ T )] \leq L d_1( S, T ),$
where $L$ is some positive constant such that $\alpha L \in (0,1) $, then \eqref{eq:moment} is seen to hold with $\eta =1$ and $r = \alpha L$ by iterating the argument. Moreover, $E[ d_1( \varepsilon_i\circ S, \varepsilon_i \circ T )] \leq L d_1( S, T )$ holds if the  $\{ \varepsilon_i \}$ satisfy  $E[ | \varepsilon_i(x) - \varepsilon_i(y) | ] \leq L |x-y|  $.


\begin{figure}[t]
	\centering
	\begin{tikzpicture}
	
	\draw[very thick,->] (0,0) -- (6,0) ;
	\draw[very thick, ->] (0,0) -- (0,6) ;
	\draw[very thick,rounded corners=3mm] (7,3)--(7.5,5)--(8.4,5.3)--(8.6,5.8)--(11,6)--(12.5,5.7)--(14,4.6)--(13.6,3.3)--(13.675,0.9)--(11,0)--(9,0.2)--(8.3,0.3)--(7.3,1)--cycle;
	\filldraw (1,1) node[anchor=south east] {$c$};
    \filldraw [gray] (1,1) circle (2pt);
    \filldraw [gray] (3,5) circle (2pt);
	\filldraw (8.5,1) node[anchor=south east] {$\mu_{\mathcal{F}}$};
	\draw (3,6) node[anchor = south] {$\mathbb{R}^2$};
	\draw (10.5,6) node[anchor = south] {$\mathcal{W}$};
	\filldraw [gray] (8.5,1) circle (2pt);
	\draw[arrows={->[scale=1.5]}, dashed] (1,1) -- (3,5) node[anchor=south]  {$X_{i-1}$};
	\filldraw [gray] (10.5,5) circle (2pt);
	\draw (10.5,5) node[anchor=south] {$\mu_{i-1}$};
	\filldraw (2.5,3) node[anchor=east] {$\beta(X_{i-1}-c)$};
	\draw[dashed] (8.5,1) to[out=50,in=-100,] (9.6,4);
	\draw[dashed, arrows={->[scale=1.5]}] (9.6,4) to[out=80,in=-150,] (10.5,5);
	\draw[thick, arrows={->[scale=1.5]}, dashed, red] (8.5,1) to[out=50,in=-100,] (9.6,4);
	\filldraw (9.3,3) node[anchor=east] {$\alpha \odot T_{i-1}$};
	\draw[thick, arrows={->[scale=1.5]}, dashed, red] (1,1) -- (2.5,4) ;
	
	\draw[thick, arrows={->[scale=1.5]}, dashed, blue] (2.5,4) --(4, 3.5) ;
	\draw (3.4,3.6) node[anchor=south] {$\epsilon_i$};
	\draw (4,3.5) node[anchor=west] {$X_{i}$};
	\filldraw [gray] (4,3.5) circle (2pt);
	\draw[thick, arrows={->[scale=1.5]}, dashed, green] (1,1) -- (4, 3.5) ;
	\draw (2.5,2) node[anchor=west] {$\beta(X_{i-1}-c) +\epsilon_i$};
	
	\draw[thick, arrows={->[scale=1.5]}, dashed, blue] (9.6,4) to[out = -50, in =170] (11.5, 3);
	\draw (10.6,3.2) node[anchor=south] {$\varepsilon_i$};
	\draw (11.5,3) node[anchor=west] {$\mu_{i}$};
	\filldraw [gray] (11.5,3) circle (2pt);
	\draw[thick, dashed, green,arrows={->[scale=1.5]} ] (8.5,1) to[out=20, in=-120] (11.5, 3) ;
	\draw (10.1,1.5) node[anchor=west] {$\alpha \odot T_{i-1} \oplus \varepsilon_i$};
	\end{tikzpicture}
	\caption{Illustration of the VAR(1) model $X_i -c = \beta (X_{i-1}-c) + \epsilon_i  $ in $\mathbb{R}^2$ (left) and the  $\text{ATM}(1)$ model $ T_{i} = \alpha \odot T_{i-1} \oplus \varepsilon_i, T_i = F_i^{-1} \circ F_{\mathcal{F}} $ in $\mathcal{W}$ (right). The colored dashed lines are geodesics and correspond to the respective optimal transport maps.}
	\label{fig:atmillustration}
\end{figure} \vs

\subsection{Estimation}

 \no Assuming that the true model parameter is $-1 <\alpha <1$,  a  consistent estimator for $\alpha$ is obtained as follows. In distributional data analysis and distributional time series the distributions that serve as data atoms are usually not known but one rather has available i.i.d. samples of real-valued data that have been generated by these distributions and this needs to be taken into account in the analysis. We denote the available estimates of  transport maps $T_i$ by $ \widehat{T}_i, \, i=1,\dots,n.$ 
 Depending on whether $\alpha$ is positive or negative, $\widehat{T}_i$  or $\widehat{T}_{i}^{-1}$ is used accordingly in the proposed method. If $ \{T_1, \dots, T_n\} $ satisfies model \eqref{eq:arr1}, then it holds that
\begin{align*}
 \alpha = \left\lbrace \begin{array}{cc}
  \frac{\int_{\mathcal{S}} E[ (T_{i+1}(x)-x)(T_i(x)-x) ] \,dx}{ \int_{\mathcal{S}} E[(T_{i}(x)-x)^2] \,dx}, & \text{ if } \alpha \geq 0, \\
  \frac{\int_{\mathcal{S}} E[ (T_{i+1}(x)-x)(x-T_i^{-1}(x)) ] \,dx}{ \int_{\mathcal{S}} E[(x-T_{i}^{-1}(x))^2] \,dx}, & \text{ if } \alpha < 0. 
\end{array} \right.
\end{align*} 
This motivates the following least squares type estimators of $\alpha$,  
\begin{align*}
\widehat{\alpha} =  \left\lbrace
\begin{array}{cc}
\widehat{\alpha}_{+} & \quad\text{if} \quad  l_{+}( \widehat{\alpha}_{+} ) \leq  l_{-}( \widehat{\alpha}_{-} ),   \\
\widehat{\alpha}_{-} & \quad\text{if} \quad l_{+}( \widehat{\alpha}_{+} ) > l_{-}( \widehat{\alpha}_{-} ).
\end{array} \right.
\end{align*} 
where $ \widehat{\alpha}_{+}  = \argmin_{\alpha} l_{+}(\alpha) $, $ \widehat{\alpha}_{-}  = \argmin_{\alpha} l_{-}(\alpha) $ and 
\begin{align*}
  l_{+}(\alpha)  & = \sum_{i=2}^n \int_{\mathcal{S}} \left( \widehat{T}_i(x) - x - \alpha ( \widehat{T}_{i-1}(x) - x  ) \right)^2 \, d x, \\
  l_{-}(\alpha) & = \sum_{i=2}^n \int_{\mathcal{S}} \left( \widehat{T}_i(x) - x - \alpha ( x- \widehat{T}_{i-1}^{-1}(x)   ) \right)^2 \, d x.
\end{align*} 

\begin{theorem}\label{thm2}
    Suppose $T_0 \sim^{i.i.d}\widetilde{T}_{0}$ and $\{ T_i \}_{i=1}^n$ are strictly increasing, continuous and generated from equation \eqref{eq:arr1} with $-1 < \alpha < 1$ and $T_0$ as the initial transport. Under the assumptions of Theorem 1 with $q=1$, if 
$
\int_{\mathcal{S}}E[ ( T_{1}(x)-x)^2 ] \,dx > 0,
$
\begin{align*}
|\widehat{\alpha} - \alpha| = O_p\left(\tau + \frac{1}{ \sqrt{n} } \right),
\end{align*}
where $\tau  = \sup_i E[ d_{1}(\widehat{T}_i, T_i) ]$. 
\end{theorem}

Intuitively, the condition $\int_{\mathcal{S}}E[ ( T_{1}(x)-x)^2 ] \,dx > 0 $ ensures that the sequence of transport maps deviates from a sequence of identity maps. This is required to arrive at a consistent estimator, since if $T_i = id$ almost surely, equation \eqref{eq:arr1} would hold for any $\alpha \in \mathbb{R}$ and it is then not possible to estimate $\alpha$ consistently. More specifically,  if $\int_{\mathcal{S}}E[ ( T_{1}(x)-x)^2 ] \,dx > 0$, then the following  application of the Cauchy-Schwarz  inequality excludes the case of equality and therefore gives rise  to the  strict inequality
\begin{multline*}
c' = \left(\int_{\mathcal{S}} E[(T_{1}(x)-x)^2] \,dx\right) \left(\int_{\mathcal{S}} E[(x - T_{1}^{-1}(x))^2] \,dx\right) \\  -  \left(\int_{\mathcal{S}} E[(T_{1}(x)-x)(x - T_{1}^{-1}(x))] \,dx\right)^2 >0, 
\end{multline*}
where  $1/c'$ is an implicit  constant in the $O_p$ for the rate of convergence  result $O_p (\tau + 1/\sqrt{n})$.

For practical applications it needs to be taken into account that  the underlying distributions are almost always unknown. Accordingly, a realistic starting point is that one has  available samples  of independent realizations $\{ X_{i,l} \}_{l=1}^{N_i}$ that are obtained for each of  the distributions $\mu_i$. There are then two independent random mechanisms that generate the data. The first of these generates  random distributions $\{ \mu_i \}_{i=1}^n$; the second generates  randomly drawn samples $\{ X_{i,l} \}$ from each $\mu_i.$ 
Based on the $\{ X_{i,l} \}_{l=1}^{N_i}$, cdfs  $\{ F_i\} $ or quantile functions $\{ Q_i \}$ can be estimated with available methodology  \citep{falk1983relative, leblanc2012estimating}. Denoting the estimated cdfs by $\widehat{F}_i$, the corresponding quantile function estimates are  $ \widehat{Q}_i(a) = \inf\{ x \in \mathcal{S} \;| \; \widehat{F}_i \geq a \}, a \in [0,1] $. Alternatively, one can directly estimate quantile functions  \citep{cheng1997unified} 
or start with density estimates  and convert these to cdfs  using numerical integration, obtaining rates such as  
$\sup_{\mu \in \mathcal{W}} E[d_{\mathcal{W}}^2(\widehat{\mu}, \mu)] = O(1/\sqrt{N}),  $ where $N  = \min \{ N_i: i =1,2, \dots, n \}$ \citep{panaretos2016amplitude} under suitable assumptions or alternatively $\sup_{\mu \in \mathcal{W}^{ac}_{R}} E[d_{\mathcal{W}}^2(\widehat{\mu}, \mu)] = O(N^{-2/3})$ on the set of absolutely continuous distributions  \citep{petersen2016functional}.
With estimates for quantile functions and cdf in hand, 
one then obtains optimal transport map estimates  $\widehat{T}_i = \widehat{Q}_i \circ \widehat{F}_{\mathcal{F}} \text{ or } \widehat{T}_i = \widehat{Q}_{i+1} \circ \widehat{F}_{i},$ where $ \widehat{F}_{\mathcal{F}} = \widehat{Q}^{-1} _{\mathcal{F}} $ and $ \widehat{Q}_{\mathcal{F}} = \sum_{i=1}^{n} \widehat{Q}_i/n $, implying 
\begin{align*}
\tau \lesssim \sup_i E[d_1(\widehat{T}_i, T_i)] \lesssim \max\{  \sup_i (E[d_{\mathcal{W}}^2(\widehat{\mu}_i, \mu_i)])^{1/2}, (E[d_{\mathcal{W}}^2(\widehat{\mu}_{\mathcal{F}}, \mu_{\mathcal{F}})])^{1/2} \} 
\end{align*}
for the rate $\tau$ in Theorem 2, 
where  $a \lesssim b$ means that there exists a constant $C>0$ such that $a \leq Cb$. Depending on assumptions and estimation procedures as mentioned above, one then obtains convergence rates ranging from $\tau \sim N^{-1/4}$ to  
$\tau \sim N^{-1/3}$. 

\bco

when quantile functions $\{Q_i\}$ as well their estimates $\{ \widehat{Q}_i \}$ are Lipschitz continuous with the same Lipschitz constant $L$  \citep{panaretos2016amplitude}, and 
\begin{align*}
\max\{  E[ d_1(Q_i \circ F, Q_i \circ F' ) ] , E[ d_1(\widehat{Q}_i \circ F, \widehat{Q}_i \circ F' ) ] \} \leq L d_1 ( F, F' )
\end{align*}
restricting to the set of absolutely continuous distribution, we have $ \tau =  O(N^{-2/3})$ using method of \cite{petersen2016functional}. Furthermore, the uniform Lipschitz continuous condition on $ \{ Q_i, \widehat{Q}_i \} $ can be relaxed to 
for any fixed cumulative distribution functions $F, F'$ supported on $\mathcal{S}$.

\fi


\section{AUTOREGRESSIVE TRANSPORT MODELS OF ORDER $p$}

\subsection{Stationary Solution}

\no Autoregressive transport models of order $p$ (ATM($p$)) are defined as
\begin{align}
\label{eq:orderp}
T_i = \alpha_p \odot T_{i-p} \oplus  \alpha_{p-1} \odot T_{i-p+1} \oplus \dots \oplus  \alpha_1 \odot T_{i-1} \oplus \varepsilon_{i},
\end{align}
where $\alpha_1, \dots, \alpha_p \in \mathbb{R}$ are model parameters  and $\varepsilon_i$ are i.i.d.  random distortion maps with $E(\varepsilon_i) = id$. To show the existence of stationary solutions, we construct a chain of functions and again  apply the geometric-moment contraction condition \citep{wu2004limit}. Let $ \mathcal{T}^{p} = \mathcal{T} \times \dots \times \mathcal{T}$ be the product space, $ \mathbf{S} = (S_1, S_2, \dots, S_{p}), \mathbf{R} = (R_1, R_2, \dots, R_{p} ) \in \mathcal{T}^p$ and define the random functions $ \Upsilon_{\varepsilon}, \widetilde{\Upsilon}_{i,m} : \mathcal{T}^{p} \rightarrow \mathcal{T}^{p} $ as
\begin{align*}
    \Upsilon_{\varepsilon} (\mathbf{S})  &= ( S_{2}, \dots, S_{p},  \alpha_p \odot S_{1} \oplus \dots \oplus  \alpha_1 \odot S_{p} \oplus \varepsilon), \\
     \widetilde{\Upsilon}_{i,m}(\mathbf{S})   &= \Upsilon_{\varepsilon_i} \circ \Upsilon_{\varepsilon_{i-1}} \circ \dots \circ \Upsilon_{\varepsilon_{i-m+1}}(\mathbf{S}),
\end{align*}
where $\varepsilon, \varepsilon_i$  are random distortion transports. We employ the product $L^q$-metric on $ \mathcal{T}^{p}$ given by 
$d_{q} (\mathbf{S}, \mathbf{R}) = \left\lbrace \sum_{i=1}^{p}  d_{q}^{2}(S_i, R_i)   \right\rbrace^{1/2},$ where $q \ge 1$ is a fixed constant in the following. 

\begin{theorem}\label{thm3}
    Suppose there exists $ \eta >0$, $ \mathbf{S}_0 \in \mathcal{T}^p$, $C >0$ and $r \in (0,1)$ such that 
\begin{align}
\label{eq:AR}
    E \left[ d_{q}^{\eta} \left(\widetilde{\Upsilon}_{i,m}( \mathbf{S}_0), \widetilde{\Upsilon}_{i,m}( \mathbf{R}) \right) \right] \leq C r^m  d_{q}^{\eta}  ( \mathbf{S}_0, \mathbf{R} )
\end{align}
holds for all $ \mathbf{R} \in \mathcal{T}^{p} $ and $m \in \mathbb{N}$.  Then, for all $\mathbf{S} \in \mathcal{T}^p$, 
$$
(\widetilde{T}_{i-p+1},\widetilde{T}_{i-p+2}, \dots, \widetilde{T}_i ) := \lim\limits_{m \rightarrow \infty} \widetilde{\Upsilon}_{i, m}(\mathbf{S}) \in \mathcal{T}^p
$$ 
exists almost surely and does not depend on $\mathbf{S}$. In addition, $(\widetilde{T}_{i-p+1},\widetilde{T}_{i-p+2}, \dots, \widetilde{T}_i )$ is a stationary solution of the following system of stochastic equations 
\begin{align*}
T_i = \alpha_p \odot T_{i-p} \oplus  \alpha_{p-1} \odot T_{i-p+1} \oplus \dots \oplus  \alpha_1 \odot T_{i-1} \oplus \varepsilon_{i}, \; i \in \mathbb{Z}
\end{align*}
and is unique almost surely.
\end{theorem}

For  motivation of   $\Upsilon_{\varepsilon}$ and condition \eqref{eq:AR}, consider the classical AR($p$) model in $\mathbb{R}$, i.e. 
$Y_{i} = \sum_{j=1}^{p} \beta_j Y_{i-j} + \epsilon_i \in \mathbb{R}$, which can be represented as a vector autoregressive model of order 1 (VAR(1)) in the form $ \mathbf{Y}_i = B \mathbf{Y}_{i-1} + \bm{\epsilon}_i$, where $\mathbf{Y}_i = (Y_i, \dots, Y_{i-p+1})^T$, $\bm{\epsilon}_i = (\epsilon_i, 0, \dots, 0)^T \in \mathbb{R}^p$ and 
\begin{align*}
B = \left(
\begin{array}{ccccc}
\beta_1 & \beta_2& \dots & \beta_{p-1} & \beta_p  \\
  1   & 0  & \dots  & 0 &0 \\
0     & 1  & \dots  & 0 &0 \\ 
0  &0  & \dots  & 1  & 0 
\end{array}
\right).
\end{align*}
With  (nonrandom) starting points  $\mathbf{Y}_0$ and $\mathbf{Y}_0'$, running the VAR(1) model recursively $m$ times, one obtains   $ \mathbf{Y}_{m} = B^m\mathbf{Y}_0 + \sum_{j=1}^m B^{m-j}\bm{\epsilon}_j $ and $ \mathbf{Y}_{m}' = B^m \mathbf{Y}_0' + \sum_{j=1}^m B^{m-j}\bm{\epsilon}_j$. With $\| \cdot \|_2$ denoting the Euclidean norm, condition \eqref{eq:AR} for this model becomes $ E\left[ \| \mathbf{Y}_m - \mathbf{Y}_m' \|_2 \right] \lesssim r^m \| \mathbf{Y}_0 - \mathbf{Y}_0' \|_2$ for some $0<r<1$. With
 a slight abuse of notation,  denoting the spectral norm of $B$  as $\| B \|_2$, 
\begin{align*}
E\left[ \| \mathbf{Y}_m - \mathbf{Y}_m' \|_2 \right] = \| B^m(\mathbf{Y}_0 - \mathbf{Y}_0') \|_2^{\eta}  \leq \| B^m \|_{2}  \| \mathbf{Y}_0 - \mathbf{Y}_0' \|_2. 
\end{align*}

Now if the absolute values of the eigenvalues of $B$ are bounded above by a constant $0<r <1$, i.e. they are inside the  unit circle, then $ \| B^m \|_2 \lesssim r^m $, and this is equivalent to the fact that the  roots of $\phi(z) = 1 - \sum_{j=1}^p \beta_j z^j$ all lie outside the unit circle. The latter is a standard assumption for the existence of stationary solutions of AR($p$) processes in Euclidean space. In  linear spaces   the terms containing the innovation errors  in $ \mathbf{Y}_{m}$ and $ \mathbf{Y}_{m}' $ cancel, which for this case simplifies the verification of  Condition \eqref{eq:AR}. 

To select the order of the ATM, we propose an approach based on rolling-window validation and refer to \cite{zivot2007modeling} for more details on rolling-window analysis for time series. To train the ATM($p$) on a given sequence $\{\mu_{t}, \mu_{t+1}, \dots, \mu_{t+m-1}\}$ of length $m$ with starting time $t$, we assume that there exists a pre-sample of length $k$, i.e., $\{\mu_{t-k} , \dots,  \mu_{t-1} \}$. For each fixed $p$ in a candidate set, the sample $\{ \mu_{t-k}, \mu_{t-k+1}, \dots, \mu_{t-k+m-1} \}$ is used as  training set to predict the distribution at time $t-k+m$. Denoting this predicted distribution as  $\widehat{\mu}_{t-k+m}$, the prediction accuracy can be measured by Wasserstein distance $d_{\mathcal{W}}(\mu_{t-k+m}, \widehat{\mu}_{t-k+m})$. Then  roll the window one step forward and use $\{ \mu_{t-k+1}, \mu_{t-k+1}, \dots, \mu_{t-k+m} \}$ as training set to make a prediction at time $t-k+m+1$ and compute the error $d_{\mathcal{W}}(\mu_{t-k+m+1}, \widehat{\mu}_{t-k+m+1})$. Rolling the training window forward repeatedly until the last window covering  time $t-1$ to $t+m-2$ is reached and computing the error $d_{\mathcal{W}}(\mu_{t+m-1}, \widehat{\mu}_{t+m-1})$ then leads to the selection of the  autoregressive order $p$   as the minimizer of  $\sum_{i=t+m-k}^{t+m-1} d_{\mathcal{W}}(\mu_i, \widehat{\mu}_i)$ over a candidate set of orders. 

\subsection{Estimation of Model Parameters}

\no Hereafter, we denote the true model parameters as $(\alpha_1^{*}, \dots, \alpha_p^{*})$ to avoid confusion. Obvious estimates of the  ATM($p$) parameters $\alpha_1^{*}, \alpha_2^{*}, \dots, \alpha_p^{*} $ are  obtained as minimizers of 
\begin{align*}
L_n(\alpha_1, \alpha_2, \dots, \alpha_p) = \frac{1}{n-p} \sum_{i = p+1}^n \int_{\mathcal{S}} \left( T_{i}(x) - \alpha_p \odot T_{i-p} \oplus  \dots \oplus  \alpha_1 \odot T_{i-1}(x) \right)^2 dx.
\end{align*}
When $p>1$, the minimization of $L_n(\alpha_1, \dots, \alpha_p)$ is challenging,  as the functional $L_n$ in general is  not convex. We propose a back propagation-type algorithm to address this minimization problem.  
The partial derivatives of $ \alpha \odot T_i (x) $ with respect to $x$ are 
\begin{align*}
\frac{\partial}{\partial x} \alpha \odot T_{i} (x)  = \left\lbrace  \begin{array}{cc}
  (1 + a (g_b(x, T_i) - 1) ) \times \left( \prod\limits_{l=0}^{b-1} g_l(x, T_i) \right) , & \text{ if }\alpha > 0, \\
1, &  \text{ if } \alpha =0, \\
 (1 + a ( 1 - g_b(x, T_i^{-1}) ) ) \times \left( \prod\limits_{l=0}^{b-1} g_l(x, T_i^{-1}) \right) , & \text{ if } \alpha <0,
\end{array} \right.
\end{align*}
where $b = \lfloor |\alpha| \rfloor$, $a = |\alpha| - b$, $T', (T^{-1})'$ are the derivatives of $T, T^{-1}$ respectively, $ \prod_{l=0}^{b-1} g_l(x, T)$ is defined to be 1 if $ b-1 <0 $ and 
\begin{align*}
g_l(x, T) =  \left\lbrace  \begin{array}{cl}
T'( x ), & \text{ if } l = 0,\\
T'( \underbrace{T\circ T \circ \dots \circ T}_{l \text{ compositions of } T} (x) ) & \text{ if } l = 1,2, \dots \\
\end{array} \right.
\end{align*}
The partial derivative with respect to $\alpha$ when $a>0$ is 
\begin{align*}
\frac{\partial}{\partial \alpha} \alpha \odot T_{i} (x) = \left\lbrace  \begin{array}{cc}
T_i(h(x,T_i)) - h(x, T_i), & \text{ if }\alpha > 0 \\
h(x, T_i^{-1}) - T_i^{-1}(h(x, T_i^{-1})), & \text{ if } \alpha <0,
\end{array} \right. 
\end{align*}
where  
\begin{align*}
h(x, T) = \left\lbrace \begin{array}{cc}
x     & \text{ if } b=0, \\
 \underbrace{T \circ T \circ \dots \circ T}_{b \text{ compositions of } T} (x)    & \text{ if } b >0.
\end{array} \right.
\end{align*}

Since  $ \alpha \odot T_{i} (x) $ is not differentiable w.r.t $\alpha$ if $\alpha \in \mathbb{Z}$, we use its subdifferential (subgradient). When $\alpha = 0$, we set $ \partial \alpha \odot T_{i} (x)/\partial \alpha $ at $\alpha=0$ to be any value in the closed interval between $ T_i(x) - x $ and $ x - T_i^{-1}(x) $. In our simulations, $ \partial \alpha \odot T_{i} (x)/\partial \alpha $ at $\alpha=0$ is selected uniformly from $T_i(x) - x $ and $ x - T_i^{-1}(x)$. When $0 \neq \alpha \in \mathbb{Z} $, $ \partial \alpha \odot T_{i} (x)/\partial \alpha $ is set to be the partial derivative of $ \alpha \odot T_{i} (x) $ at a point $ \alpha' $ such that $\alpha'$ has the same sign as $\alpha$ and $ |\alpha| < |\alpha'| < (|\alpha|+1) $. For more details on the  back-propagation type algorithm for ATM of order $p$ see the  display for Algorithm \ref{alg2}. We employ  gradient clipping,  a common technique used in deep neural networks to prevent exploding gradients.

Next, we  establish consistency for the minimizer of  $L_n(\alpha_1, \dots, \alpha_p)$,  i.e.   
\begin{align*}
    \widetilde{\bm{\alpha}} := (\widetilde{\alpha}_1, \widetilde{\alpha}_2,  \dots, \widetilde{\alpha}_p)^T \in \argmin_{-c \leq \alpha_1, \dots, \alpha_p \leq c} L_n(\alpha_1, \alpha_2, \dots, \alpha_p),
\end{align*}
where $c$ is the same constant as in Theorem \ref{thm4} below, which demonstrates that $ (\widetilde{\alpha}_1, \widetilde{\alpha}_2,  \dots, \widetilde{\alpha}_p) $ converges to the true model parameters in probability with respect to the discrepancy
$$
\Delta (\widetilde{\bm{\alpha}}, \bm{\alpha}^{*}) := \int_{\mathcal{S}} E\left[ (\widetilde{\alpha}_p \odot T_{1} \oplus  \dots \oplus  \widetilde{\alpha}_1 \odot T_{p}(x) - \alpha_p^{*} \odot T_{1} \oplus  \dots \oplus \alpha_1^{*} \odot T_{p}(x)   )^2 \right] dx, 
$$
where $\bm{\alpha}^{*} = (\alpha_1^{*}, \dots, \alpha_p^{*})^T$ are the true model parameters. The key step, where the constant $c$ is used, is to show that $\sup_{-c \leq \alpha_1, \dots, \alpha_p \leq c} | L_n(\alpha_1, \dots, \alpha_p) - E[L_n(\alpha_1, \dots, \alpha_p)]|=o_p(1)$ based on Corollary 3.1 of \cite{whitney1991}. In practice, we simply set $c$ to be a large enough number.
\begin{theorem}\label{thm4}
Under the assumptions of Theorem 3 with $q=1$, 
if $T_0 \sim^{i.i.d}\widetilde{T}_{0}$ and $\{ T_i \}_{i=1}^n$ are strictly increasing, differentiable,  bi-Lipschitz continuous with Lipschitz constant $K$ and generated from equation \eqref{eq:orderp} with $T_0$ as the initial transport and $(\alpha_1, \dots, \alpha_p) = (\alpha_1^{*}, \dots, \alpha_p^{*})$ where $ -c \leq \alpha_1^{*}, \dots, \alpha_p^{*} \leq c $ for some constant $c >0$, then
\begin{align*}
    \Delta (\widetilde{\bm{\alpha}}, \bm{\alpha}^{*}) \overset{p}{\rightarrow} 0\text{ as } n \rightarrow \infty.
\end{align*}
\end{theorem}

\begin{algorithm}[htbp]
	\caption{Back Propagation  Algorithm for Fitting ATM$(p), p>1.$}
	\label{alg2}
	\SetKwInOut{In}{input}\SetKwInOut{Out}{output}
	Select a grid $ s_1 < x_1 < x_2 < \dots < x_m < s_2 $. \\
	Select step size $\eta$. \\
	Initialize  $\alpha_k^0 = 0$ for $k=2,3, \dots, p$ and $$\alpha_1^{0} = \argmin_{\alpha} \frac{1}{n-p} \sum_{i=p+1}^n \sum_{j=1}^m \left( T_i(x_j) -  \alpha \odot T_{i-1} (x_j) \right)^2.$$

	\For{$t =1,2, \dots  $}{ 
	
	\vspace{0.5cm}
	
	    \CommentSty{ \hspace{5cm} Forward Pass} \\
	    $\text{For all } i = p+1, \dots, n, j = 1, \dots, m $, compute $R_{1,ji}^t = \alpha_p^{t-1} \odot T_{i-p}(x_j).$
	    
	    \For{$k=2, 3, \dots, p$}{
	    $\text{For all } j, i $, compute 
	    $$  
	    R_{k,ji}^t =   \alpha_{p+1-k}^{t-1} \odot T_{i-(p+1-k)} (R_{k-1,ji}^t) . 
	    $$
	    }
	   $\text{For all } j, i $, compute $ L^t_{ji} =2 \left( T_i(x_j) -  R_{p,ji}^t \right). $ 
	   
	   \vspace{0.5cm}
	   
    	\CommentSty{ \hspace{5cm} Backward Pass} 
    	
    	For all $j,i$, set $ D_{0, ji}^t = 1 $ . 
    	
    	\For{$k=1,2,\dots, p-1$}{
    	For all $j,i$, compute
    	\begin{align*}
    	D_{k, ji}^t = ( D_{k-1, ji}^t ) \times \left(  \left.  \frac{\partial}{\partial x} \alpha_k^{t-1} \odot T_{i-k}(x) \right|_{x=R_{p-k,ji}^t}  \right) \text{ for all }j,i, 
    	\end{align*}
    	Update $\alpha_k$ as
    	\begin{align*}
    	 \alpha_k^t = \alpha_k^{t-1} +  \frac{\eta}{n-p}\sum_{i=p+1}^n \sum_{j=1}^m \left( L_{ji}^t  D_{k-1, ji}^t  \left. \frac{\partial}{\partial \alpha} \alpha \odot T_{i-k}( R_{p-k,ji}^t ) \right|_{\alpha = \alpha_k^{t-1}} \right).
    	\end{align*}
    	}
    	Compute
    	$
    	\alpha_p^t = \alpha_p^{t-1} + \frac{\eta}{n-p} \sum_{i=p+1}^n \sum_{j=1}^m \left( L_{ji}^t ( D_{p-1, ji}^t ) \left.  \frac{\partial}{\partial \alpha} \alpha \odot T_{i-p}( x_j ) \right|_{\alpha = \alpha_p^{t-1}} \right).
     $
    	
	\If{ stopping conditions hold }{\Return{ $(\alpha^{t}_1, \alpha_2^t, \dots, \alpha^t_p)$ }}
	}   
\end{algorithm}

\section{CONCURRENT AUTOREGRESSIVE TRANSPORT MODEL}

\no A promising extension of ATMs of order 1 is to consider model coefficients that vary with $x \in \mathcal{S}$. For a function  $\beta : \mathcal{S} \rightarrow [-1,1]$, define   the operation   
\begin{align*}
		\beta \circledcirc T (x) := \left\lbrace  \begin{array}{cc}
		x + \beta (x) (T(x)-x), & 0 < \beta (x) \leq 1 \\
		x, & \beta (x) = 0 \\
		x + \beta (x) (x-T^{-1}(x)), & -1 \leq \beta (x) <0
		\end{array} \right. .
\end{align*}
This leads to the following concurrent autoregressive transport model (CAT), 
\begin{align}
\label{eq:vary}
T_i =  \beta \circledcirc T_{i-1} \oplus \varepsilon_{i},
\end{align}
with i.i.d. random distortion transports  $\varepsilon_i$  satisfying $E(\varepsilon_i)= id$. 

To ensure monotonicity that is required for the transports  to be well defined, given the true function $\beta$,  we consider a subset of transports $\tilde{\mathcal{T}}\subset \mathcal{T}$ 
\bco=\tilde{\mathcal{T}}(\beta)$  of $\mathcal{T}$ \fi
such that 
$ \beta \circledcirc \tilde{\mathcal{T}} := \{ \beta \circledcirc T : T \in \tilde{\mathcal{T}}\} \subseteq \tilde{\mathcal{T}}$ and assume that $P(\varepsilon_i \circ \tilde{\mathcal{T}}\subseteq \tilde{\mathcal{T}})=1$ where $\varepsilon_i \circ \tilde{\mathcal{T}}:= \{ \varepsilon_i \circ T : T \in \tilde{\mathcal{T}}\}$; this obviously  holds if the function $\beta$ does not vary, i.e. is constant, whence  $\tilde{\mathcal{T}}= \mathcal{T}$ and $P(\varepsilon_i \circ \tilde{\mathcal{T}}\subseteq \tilde{\mathcal{T}})=1$. Whenever $ \beta \circledcirc \tilde{\mathcal{T}}  \subseteq \tilde{\mathcal{T}}$ and $P(\varepsilon_i \circ \tilde{\mathcal{T}}\subseteq \tilde{\mathcal{T}})=1$, the  random functions 
\begin{align*}
    \varphi_{\varepsilon}( S ) =  \beta \circledcirc S \oplus \varepsilon, \qquad
    \widetilde{\varphi}_{i,m}(S)  = \varphi_{\varepsilon_i} \circ \varphi_{\varepsilon_{i-1}} \circ \dots \circ \varphi_{\varepsilon_{i-m+1}}(S), \qquad \varphi_{\varepsilon}, \widetilde{\varphi}_{i,m} : \tilde{\mathcal{T}}\rightarrow \tilde{\mathcal{T}}\end{align*}
    are well-defined for any $S \in \tilde{\mathcal{T}}$.
    
An example for this concurrent autoregressive transport model (CAT)  is as follows. Let $s_1 = t_1 < \dots < t_k = s_2$ be a grid over $\mathcal{S}$ and $\beta:\mathcal{S} \rightarrow [0,1]$ be such that  $\beta$ is positive and is either  increasing or  decreasing on each grid interval $ [t_i, t_{i+1}] $. Here $\tilde{\mathcal{T}}$  is selected as a  set of transports such that for any $T \in \tilde{\mathcal{T}}$, $T(t_i) = t_i$, $T(x) \geq x$ if $ \beta (x) $ is increasing and  otherwise $T(x) < x$.  The   properties required for the CAT model are satisfied as  $\tilde{\mathcal{T}}$ is complete and $\{\varepsilon_i\}$ can be defined as random distortion maps taking values in $\tilde{\mathcal{T}}$.
To state our next result, we  equip $\tilde{\mathcal{T}}$ with the sup-metric $d_{\infty}(f,g) = \sup_{x \in \mathcal{S}} |f(x) - g(x)|$.

\begin{theorem}\label{thm5}
    Suppose that $\tilde{\mathcal{T}}$ is a complete metric space,  $P(\varepsilon_i \circ \tilde{\mathcal{T}}\subseteq \tilde{\mathcal{T}})=1$ and there exists $ \eta >0$, $ S_0 \in \tilde{\mathcal{T}}$, $C >0$ and $r \in (0,1)$ such that 
\begin{align}
\label{eq:momentinfty}
E \left[ d_{\infty}^{\eta} \left( \widetilde{ \varphi}_{i, m}(S_0), \widetilde{\varphi}_{i, m}(T) \right) \right] \leq C r^{m} d_{\infty}^{\eta} ( S_0, T)
\end{align}
holds for all $ T\in \tilde{\mathcal{T}}$ and $m \in \mathbb{N}$. Then, for all $S \in \tilde{\mathcal{T}}$, $\widetilde{T}_i := \lim\limits_{m \rightarrow \infty} \widetilde{ \varphi}_{i, m}(S) \in \tilde{\mathcal{T}}$ exists almost surely and does not depend on $S$. In addition, $\widetilde{T}_i$ is a stationary solution of the system of stochastic equations 
\begin{align}
\label{eq:model}
T_i =  \beta \circledcirc T_{i-1} \oplus \varepsilon_{i}, \; i \in \mathbb{Z}
\end{align}
and is unique almost surely.
\end{theorem}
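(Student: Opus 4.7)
The plan is to apply the iterated random function framework of Diaconis and Freedman (1999), in close parallel to the proof of Theorem 1, but working with the sup-metric $d_\infty$ and the restricted space $\tilde{\mathcal{T}}$ in place of $(\mathcal{T}, d_q)$. The structural assumptions $\beta \circledcirc \tilde{\mathcal{T}} \subseteq \tilde{\mathcal{T}}$ and $P(\varepsilon_i \circ \tilde{\mathcal{T}} \subseteq \tilde{\mathcal{T}}) = 1$ guarantee that every iterate $\widetilde{\varphi}_{i,m}(S)$ stays in $\tilde{\mathcal{T}}$ almost surely, so the recursion is well posed, and the hypothesis that $(\tilde{\mathcal{T}}, d_\infty)$ is complete provides the ambient space into which the limits will land.

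First, I would establish that for any fixed $S_0 \in \tilde{\mathcal{T}}$, the sequence $\{\widetilde{\varphi}_{i,m}(S_0)\}_{m\ge 1}$ is Cauchy almost surely. The key identity is the backward-shift decomposition $\widetilde{\varphi}_{i,m+1}(S_0) = \widetilde{\varphi}_{i,m}\bigl(\varphi_{\varepsilon_{i-m}}(S_0)\bigr)$, which combined with the contraction condition (\ref{eq:momentinfty}) gives
$$E\bigl[d_\infty^\eta\bigl(\widetilde{\varphi}_{i,m+1}(S_0), \widetilde{\varphi}_{i,m}(S_0)\bigr)\bigr] \le C r^m\, E\bigl[d_\infty^\eta\bigl(\varphi_{\varepsilon_{i-m}}(S_0), S_0\bigr)\bigr].$$
Because elements of $\tilde{\mathcal{T}} \subseteq \mathcal{T}$ all map $\mathcal{S}$ into $\mathcal{S} = [s_1, s_2]$, the second factor is uniformly bounded by $(s_2-s_1)^\eta$. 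A routine Markov inequality plus Borel--Cantelli argument then yields a.s. summability of the successive sup-distances, and completeness of $(\tilde{\mathcal{T}}, d_\infty)$ produces an almost sure limit $\widetilde{T}_i \in \tilde{\mathcal{T}}$. Independence of this limit from the starting point is immediate from (\ref{eq:momentinfty}): for any other $S \in \tilde{\mathcal{T}}$, $E[d_\infty^\eta(\widetilde{\varphi}_{i,m}(S_0), \widetilde{\varphi}_{i,m}(S))] \le Cr^m d_\infty^\eta(S_0, S) \to 0$, so both sequences converge to the same a.s. limit.

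Next, I would verify that $\widetilde{T}_i$ satisfies (\ref{eq:model}). From $\widetilde{\varphi}_{i,m}(S_0) = \varphi_{\varepsilon_i}\bigl(\widetilde{\varphi}_{i-1,m-1}(S_0)\bigr)$, the inner iterate converges a.s. to $\widetilde{T}_{i-1}$. Continuity of $\varphi_{\varepsilon_i}$ in the sup-metric (for each realization of $\varepsilon_i$ compatible with membership in $\tilde{\mathcal{T}}$) then lets us pass to the limit and obtain $\widetilde{T}_i = \varphi_{\varepsilon_i}(\widetilde{T}_{i-1}) = \beta \circledcirc \widetilde{T}_{i-1} \oplus \varepsilon_i$. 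Stationarity of $\{\widetilde{T}_i\}_{i \in \mathbb{Z}}$ then follows because each $\widetilde{T}_i$ is a measurable function of the tail $(\varepsilon_i, \varepsilon_{i-1}, \ldots)$ through the same construction, and the innovations are i.i.d.

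Finally, uniqueness proceeds by a coupling argument. If $\{T_i'\}$ is any second stationary solution of (\ref{eq:model}), iterating $m$ times yields $T_i' = \widetilde{\varphi}_{i,m}(T_{i-m}')$, whence (\ref{eq:momentinfty}) implies $E[d_\infty^\eta(\widetilde{T}_i, T_i')] \le Cr^m E[d_\infty^\eta(\widetilde{T}_{i-m}, T_{i-m}')] \le C r^m (s_2-s_1)^\eta$, which vanishes as $m \to \infty$ and forces $\widetilde{T}_i = T_i'$ a.s. The main obstacle I anticipate is verifying the a.s. continuity of $\varphi_{\varepsilon_i}$ in $d_\infty$ for the composition-based addition $\oplus$, since $\varepsilon \circ S$ inherits only the regularity of $\varepsilon$; this is where the role of the restricted class $\tilde{\mathcal{T}}$ becomes essential, and it can be handled by invoking (\ref{eq:momentinfty}) with $m=1$ as the quantitative continuity estimate rather than trying to prove it from scratch.
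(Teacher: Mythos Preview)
Your proposal is correct and mirrors the paper's approach: the paper does not give a separate proof of Theorem~\ref{thm5}, relying instead on the argument for Theorem~\ref{thm1} (which in turn follows Theorem~2 of \cite{wu2004limit}) verbatim with $(\tilde{\mathcal{T}},d_\infty)$ substituted for $(\mathcal{T},d_q)$ and $\varphi$ for $\phi$. Your extra care about the continuity of $\varphi_{\varepsilon_i}$ when passing to the limit in the recursion is a point the paper simply glosses over with ``Clearly, $\{T_i\}_{i\in\mathbb{Z}}$ is stationary and satisfies $T_i=\alpha\odot T_{i-1}\oplus\varepsilon_i$''; your suggested workaround via the $m=1$ contraction estimate is in the right spirit, though note that condition~\eqref{eq:momentinfty} is only anchored at $S_0$, so to push the argument through cleanly you should condition on $\sigma(\varepsilon_{i-1},\varepsilon_{i-2},\ldots)$ and use independence of $\varepsilon_i$ before invoking the one-step bound.
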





 The estimation of the CAT model function $\beta$  proceeds  similarly to the  estimation  of the scalar coefficient in the  ATM(1). If $ \{T_1, \dots, T_n\} $ satisfy model \eqref{eq:vary}, then for all $ x\in \mathcal{S} $
\begin{align*}
\beta(x) = \left\lbrace
\begin{array}{cc}
   \frac{E[ (T_{i+1}(x)-x)(T_i(x)-x) ]}{ E[(T_{i}(x)-x)^2]},  & \text{ if } \beta(x) \geq 0, \\
  \frac{E[ (T_{i+1}(x)-x)(x-T_i^{-1}(x)) ]}{ E[(x-T_{i}^{-1}(x))^2]},   & \text{ if } \beta(x) < 0. 
\end{array} \right.
\end{align*}
This suggests estimates  $\widehat{\beta}(x)$ for $\beta(x)$ given by 
\begin{align*}
\widehat{\beta}(x) =  \left\lbrace
\begin{array}{cc}
\widehat{\beta}_{+}(x), & \text{ if }  l_{+}( \widehat{\beta}_{+}(x)|x ) \leq  l_{-}( \widehat{\beta}_{-}(x)|x ),   \\
\widehat{\beta}_{-}(x), & \text{ if }  l_{+}( \widehat{\beta}_{+}(x)|x ) > l_{-}( \widehat{\beta}_{-}(x)|x ),
\end{array} \right.
\end{align*} 
where $ \widehat{\beta}_{+}(x)  = \argmin_{\beta} l_{+}(\beta|x) $, $ \widehat{\beta}_{-}(x)  = \argmin_{\beta} l_{-}(\beta |x) $ and 
\begin{align*}
  l_{+}(\beta|x)   &= \sum_{i=2}^n  \left( \widehat{T}_i(x) - x - \beta ( \widehat{T}_{i-1}(x) - x  ) \right)^2 , \\
  l_{-}(\beta | x)  &= \sum_{i=2}^n  \left( \widehat{T}_i(x) - x - \beta ( x- \widehat{T}_{i-1}^{-1}(x)   ) \right)^2 .
\end{align*} 
Then we obtain  pointwise convergence of $ \widehat{\beta}(x)  $ to $\beta(x)$ in probability.

\begin{theorem}\label{thm6}
    Suppose $T_0 \sim^{i.i.d}\widetilde{T}_{0}$ and $\{ T_i \}_{i=1}^n$ are strictly increasing, continuous and generated from equation \eqref{eq:model} with $\beta$ such that $-1< \beta(x) <1$ for all $x\in \mathcal{S}$ and with $T_0$ as the initial transport.  Under the assumptions of Theorem \ref{thm5}, if $ E[ ( T_{1}(x)-x)^2 ] > 0 $,
\begin{align*}
|\widehat{\beta}(x) - \beta(x) | = O_p\left(\tau(x) + \frac{1}{ \sqrt{n} } \right),
\end{align*}
where $\tau(x)  = \sup_i E[ |\widehat{T}_i(x)- T_i(x)| ]$.
\end{theorem}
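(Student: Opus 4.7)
The plan is to reduce to the scalar case at each fixed $x \in \mathcal{S}$, mirroring the proof of Theorem \ref{thm2} pointwise. First, I would write the minimizers of $l_{\pm}(\cdot|x)$ in closed form as
\begin{align*}
\widehat{\beta}_{+}(x) &= \frac{\sum_{i=2}^n(\widehat{T}_i(x)-x)(\widehat{T}_{i-1}(x)-x)}{\sum_{i=2}^n(\widehat{T}_{i-1}(x)-x)^2}, \\
\widehat{\beta}_{-}(x) &= \frac{\sum_{i=2}^n(\widehat{T}_i(x)-x)(x-\widehat{T}_{i-1}^{-1}(x))}{\sum_{i=2}^n(x-\widehat{T}_{i-1}^{-1}(x))^2},
\end{align*}
and introduce oracle versions $\overline{\beta}_{\pm}(x)$ formed from the true $T_i$. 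The identities displayed just before the theorem identify the population ratio of $\overline{\beta}_{+}(x)$ with $\beta(x)$ when $\beta(x) \ge 0$, and symmetrically for $\overline{\beta}_{-}(x)$ when $\beta(x) < 0$.

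Next I would control the plug-in error and the sampling error separately. The plug-in bound $|\widehat{\beta}_{\pm}(x) - \overline{\beta}_{\pm}(x)| = O_p(\tau(x))$ follows from the elementary inequality $|\widehat{a}\widehat{b}-ab|\le|\widehat{a}-a||\widehat{b}|+|a||\widehat{b}-b|$ applied summandwise, boundedness of $T_i(x), \widehat{T}_i(x) \in \mathcal{S}$, Markov's inequality, and the definition $\tau(x)=\sup_i E|\widehat{T}_i(x)-T_i(x)|$, provided the sample denominator is pinned away from zero in probability. The oracle rate $|\overline{\beta}_{\pm}(x) - \beta(x)| = O_p(n^{-1/2})$ along the correct branch follows by applying a central limit theorem for geometric-moment contracting sequences, as developed by \cite{wu2004limit}, to the bounded stationary functionals $(T_i(x)-x)(T_{i-1}(x)-x)$ and $(T_{i-1}(x)-x)^2$; stationarity comes from Theorem \ref{thm5} and the required contraction is inherited from \eqref{eq:momentinfty} because pointwise evaluation is $1$-Lipschitz in $d_\infty$. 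Denominator nondegeneracy for the plus branch is exactly the assumption $E[(T_1(x)-x)^2]>0$; for the minus branch it follows from a pointwise Cauchy-Schwarz argument analogous to the $c'>0$ calculation following Theorem \ref{thm2}.

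It remains to verify that the branch-selection rule picks the correct branch with probability tending to one. Substituting the closed-form minimizers into $l_{\pm}$ reduces the comparison $l_{+}(\widehat{\beta}_{+}(x)|x)$ versus $l_{-}(\widehat{\beta}_{-}(x)|x)$ to comparing $\widehat{\beta}_{+}(x)^2 \widehat{D}_{+}(x)$ with $\widehat{\beta}_{-}(x)^2 \widehat{D}_{-}(x)$, where $\widehat{D}_{\pm}(x)$ are the corresponding sample denominators. By the previous step these quantities converge in probability to their population analogues, and a Cauchy-Schwarz-type argument paralleling the $c'>0$ derivation after Theorem \ref{thm2} yields a strictly positive population gap whenever $\beta(x) \ne 0$, forcing correct selection with probability tending to one; when $\beta(x)=0$ both branches are $O_p(\tau(x)+n^{-1/2})$ consistent and the selected estimator inherits the claimed rate regardless.

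The main obstacle I expect is the $n^{-1/2}$ concentration of the oracle averages, because $\{T_i\}$ is not i.i.d.\ and the contraction in \eqref{eq:momentinfty} is stated in the supremum metric rather than pointwise. The $1$-Lipschitz transfer to pointwise functionals makes geometric-moment contraction available for the scalar summands, but a clean invocation of a CLT requires either appealing directly to the geometric-moment-contraction CLT of \cite{wu2004limit} or constructing a coupling via \eqref{eq:momentinfty} that yields exponentially decaying dependence and then applying a standard mixing CLT. A secondary subtlety is ensuring the Cauchy-Schwarz gap used in the branch-selection step is strict at each $x$; this may warrant a mild additional regularity condition precluding degenerate collinearity of $T_1(x)-x$ and $x-T_1^{-1}(x)$ as random variables.
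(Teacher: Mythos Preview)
Your proposal is correct and follows essentially the same strategy as the paper: pointwise closed-form least-squares minimizers, a plug-in/oracle decomposition controlled respectively by $\tau(x)$ and by the Wu--Shao limit theory for geometric-moment-contracting sequences (with the $1$-Lipschitz transfer from $d_\infty$ to pointwise evaluation that you identify), and a Cauchy--Schwarz gap argument for branch selection; the paper organizes the numerator/denominator bounds into a separate lemma (Lemma~\ref{lem2}) rather than working with the ratios directly, but the content is the same. One small tightening: to obtain the stated $O_p(\tau(x)+n^{-1/2})$ rate you need the probability of selecting the wrong branch to itself be $O(\tau(x)+n^{-1/2})$, not merely $o(1)$---this follows immediately from your own rate bounds on the sample quantities together with the fixed population gap $c'>0$ via Markov's inequality, which is exactly how the paper closes the argument.
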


\section{NUMERICAL STUDIES} \label{sec:6}

\no
In the following,  $\text{ATM}_m$ and $\text{CAT}_m$ indicate  models that are based on optimal transport maps $\{T_i\}$  from the Fr\'{e}chet mean to  individual distributions, while  $\text{ATM}_d$ and  $\text{CAT}_d$ indicate  models  based  on optimal transport maps between adjacent distributions. Specifically,  model \eqref{eq:atmm} is  denoted as $\text{ATM}_m$($p$),  model \eqref{eq:atmd}  as $\text{ATM}_d$($p$), model \eqref{eq:vary} with  $T_i = F_{i}^{-1} \circ F_{\mathcal{F}}$  as  $\text{CAT}_m$ and model \eqref{eq:vary} with $T_i = F_{i+1}^{-1} \circ F_{i}$ as $\text{CAT}_d$.

To examine the performance of  these ATMs, we compare them in simulations with a recently proposed autoregressive model for distributional time series  \citep{mull:20:7}, which we refer to as WR (Wasserstein Regression). This approach is based on using manifold logarithmic  maps in the Wasserstein manifold to map  distributions  to a tangent space anchored by the overall barycenter.  Since the tangent space is a subspace of a $L^2$-space, functional linear regression techniques can  be applied in this space, followed by a projection on the convex injectivity set and an application of the exponential  map to get back to  the Wasserstein manifold.   Due to the local linearization this is an extrinsic approach, while the proposed  ATMs  are intrinsic to the Wasserstein manifold. 

We also include comparisons with 
 the log quantile (LQD) approach, which ignores the manifold structure of the distribution space, providing a direct 1:1 mapping of distributions to a Hilbert space by the invertible log quantile transformation or other transformations   \citep{petersen2016functional}. 
 After applying  the LQD  transformation,  standard autoregressive models for functional time series can be employed in the ensuing Hilbert space  \cp{bosq:00}, followed by mapping back into distribution space by the inverse LQD map.   For autoregressive modeling of functional time series we used the R package ``ftsa".\vspace{.4cm}



\subsection{Interpretation of ATMs}

\no We illustrate the process of transporting distributions by ATMs with a simple example. Let  $\mu_1, \mu_2, \mu_3$ be three normal distributions $ N(-1, 2.25),  N(2.5, 1.44)$ and $ N(-2, 0.81)$ (here all distributions are truncated to the interval $[-10,10]$, where the miniscule mass left outside of the truncation interval is ignored). We apply  ATM$_m(2)$ and ATM$_d(2)$  models  to produce the distribution $\mu_4$ at time $t=4$. 
Figure \ref{fig:atmmcoef} illustrates  the densities of $\mu_2, \mu_3$ as well as the density of $\mu_4$ generated by $\text{ATM}_m(2)$,  where the Fr\'echet mean is chosen as the standard normal distribution $\mu_0$.  For the optimal transport maps  $T_2, T_3$ that map $\mu_0$ to $\mu_2$ and $\mu_3$,  respectively, we observe that $T_2$ shifts the density of $\mu_0$ to the right and increases its variance, while $T_3$ shifts $\mu_0$   to the left and decreases its variance.   

Note that $\text{ATM}_m(2)$ transforms $\mu_0$  by first applying transport map $\alpha_1 \odot T_3$,  followed by an application of transport map  $\alpha_2 \odot T_2$. For example, from the first row of the figure, when $\alpha_1 = 0.5, \alpha_2=0.5$, the resulting overall transport is close to the identity, whereas for  $\alpha_1 = 0.5, \alpha_2=0$, the resulting density of $\mu_4$ is on the Wasserstein geodesic connecting $\mu_0$ and  $\mu_3$. When $\alpha_1=0$, it can be seen  from the middle row  of Figure \ref{fig:atmmcoef} that the density of $\mu_4$ is  moving  to the left with decreasing variance when  $\alpha_2$ moves from $0.5$ to $-0.5$. This illustrates  the effect of the changing value of $\alpha_2$ for the transport $\alpha_2 \odot T_2$. Similar effects can be seen in the third row of the figure. 

The  densities of $\mu_1, \mu_2, \mu_3$ and the  density of $\mu_4$ that is obtained by applying the difference based models  $\text{ATM}_d(1)$ and $\text{ATM}_d(2)$ are depicted in  Figure \ref{fig:atmdcoef}.  Denoting by  $T_1$ the optimal transport map that maps  $\mu_1$ to $\mu_2$ and by $T_2$ the  transport map that maps $\mu_2$ to $\mu_3$, one finds  that $T_1$ represents a shift to the right with a simultaneous decrease in variance, while  $T_2$ represents a shift to the left, also  accompanied by a decrease in  variance.  Applying  $\text{ATM}_d$ with $T_1, T_2$ as predictors, i.e. model (\ref{eq:orderp}) with $T_3 = \alpha_1 \odot T_2 \oplus \alpha_2 \odot T_1$,  leads to the transport map $T_3$, which is then  applied to  $\mu_3$, resulting in $\mu_4$. To illustrate the effect of  the coefficients, all panels show that decreasing $\alpha_2$ enhances a shift to the left, while decreasing $\alpha_1$ is associated with a shift to the right. 
\vspace{.4cm} 

\begin{figure}
    \centering
    \begin{tikzpicture}
\matrix (m) [row sep = -1em, column sep = - 1.2em]{ 
	 \node (p11) {\includegraphics[scale=0.35]{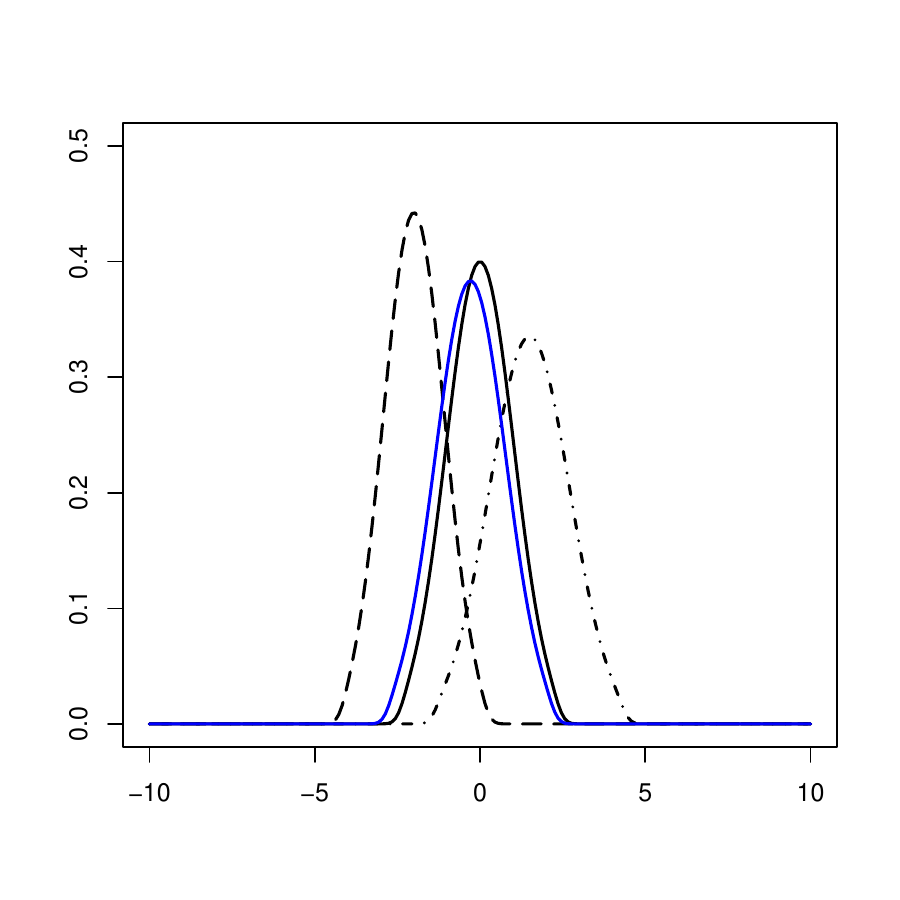}}; 
	 \node[above = -0.8cm of p11] (t11) {$\alpha_1=0.5, \alpha_2=0.5$}; &
	 \node (p12) {\includegraphics[scale=0.35]{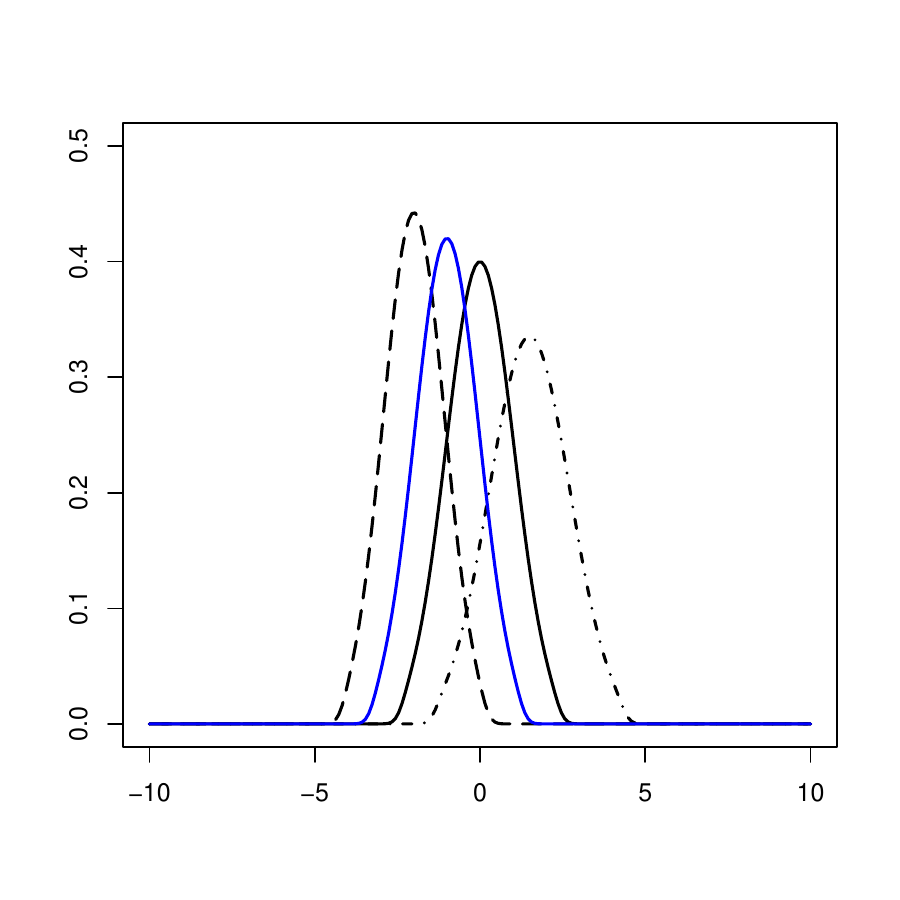}};
	 \node[above = -0.8cm of p12] (t12) {$\alpha_1=0.5, \alpha_2=0$}; &
	 \node (p13) {\includegraphics[scale=0.35]{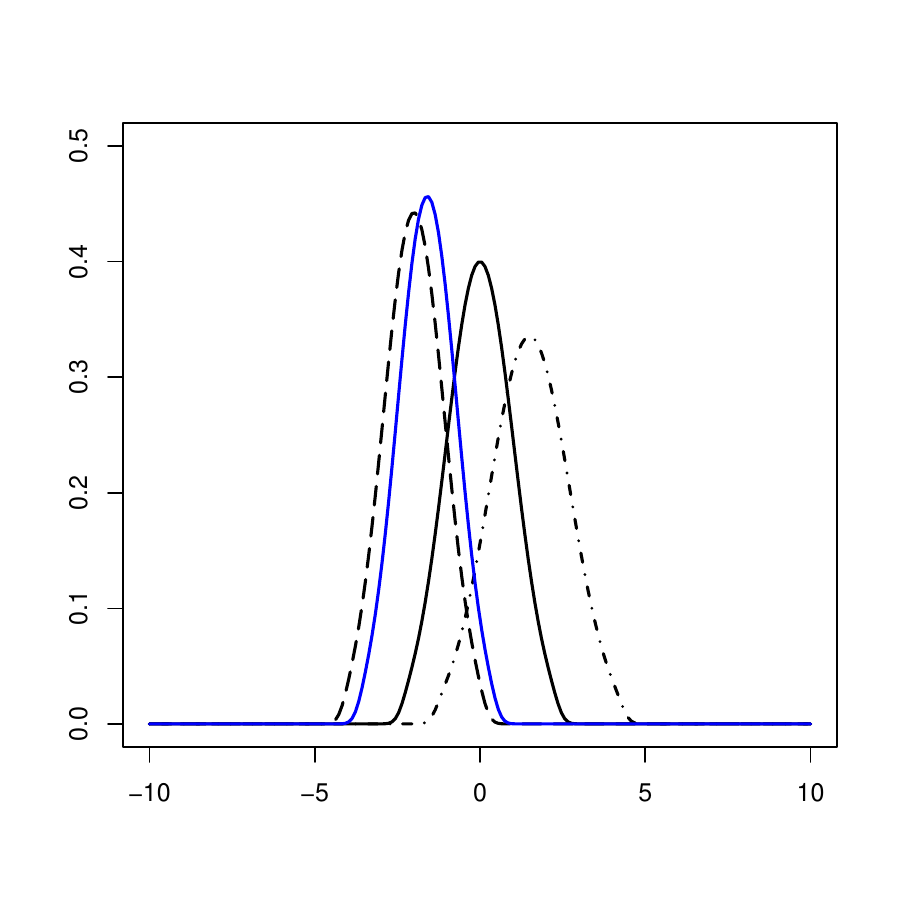}};
	 \node[above = -0.8cm of p13] (t13) {$\alpha_1=0.5, \alpha_2=-0.5$}; 
	 \\ 
	 \node (p21) {\includegraphics[scale=0.35]{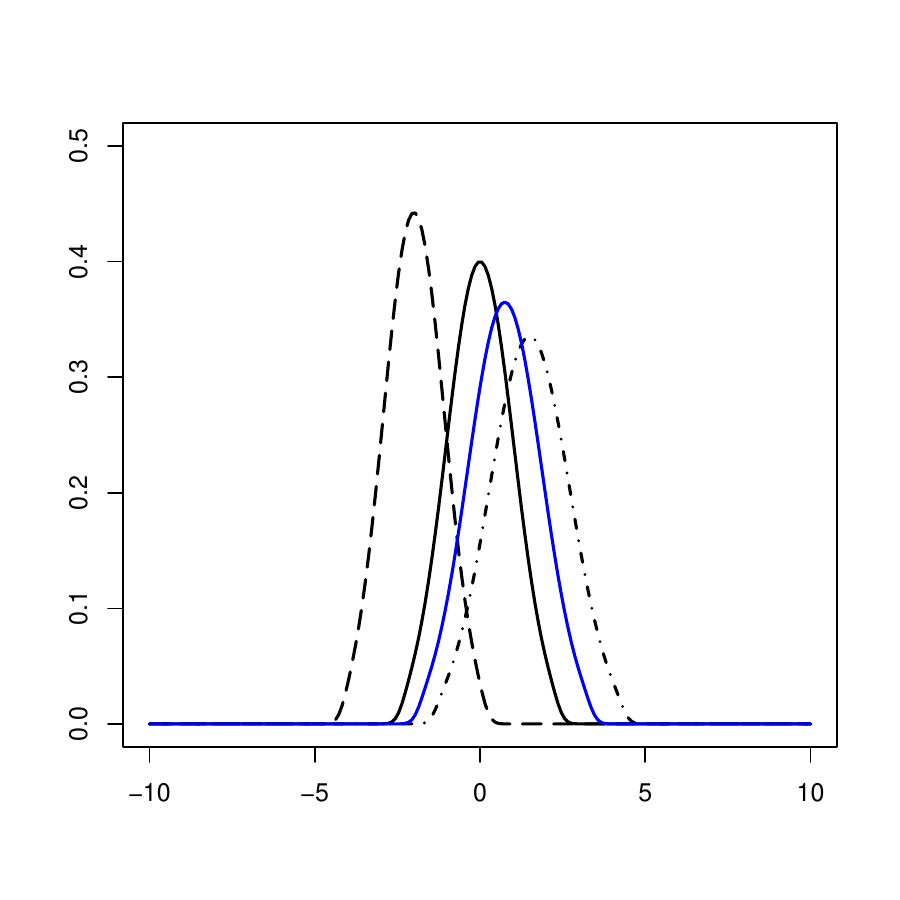}};
	 \node[above = -0.8cm of p21] (t21) {$\alpha_1=0, \alpha_2=0.5$}; &
	 \node (p22) {\includegraphics[scale=0.35]{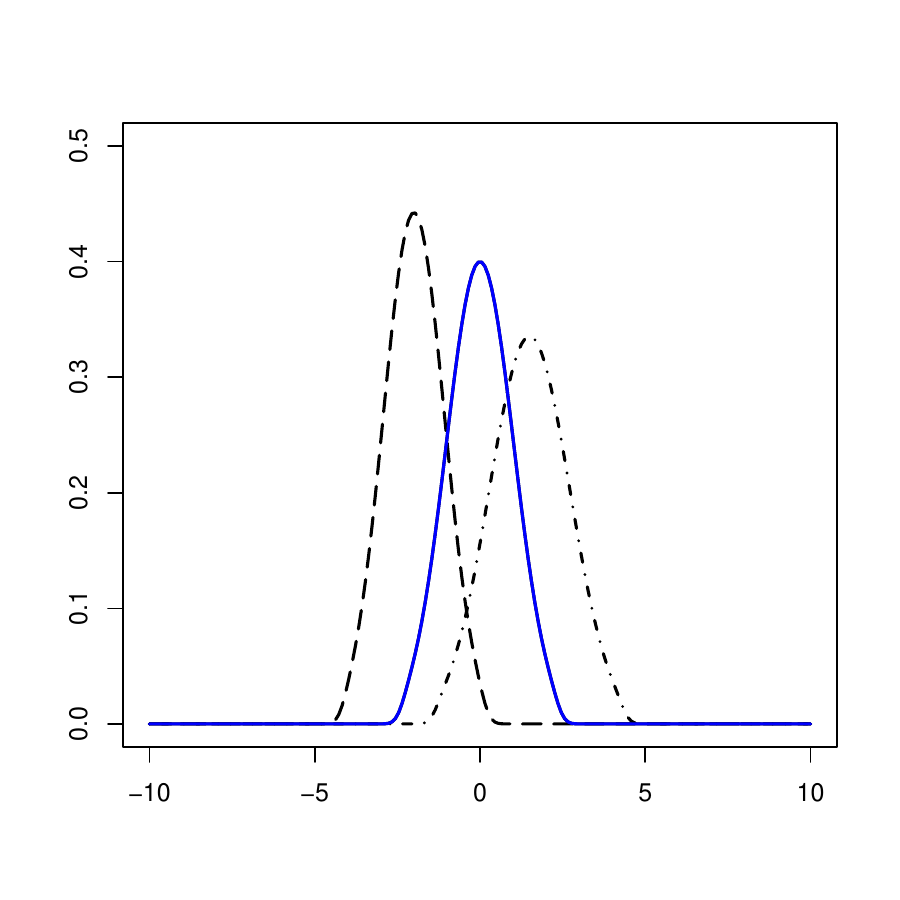}}; 
	 \node[above = -0.8cm of p22] (t22) {$\alpha_1=0, \alpha_2=0$}; &
	 \node (p23) {\includegraphics[scale=0.35]{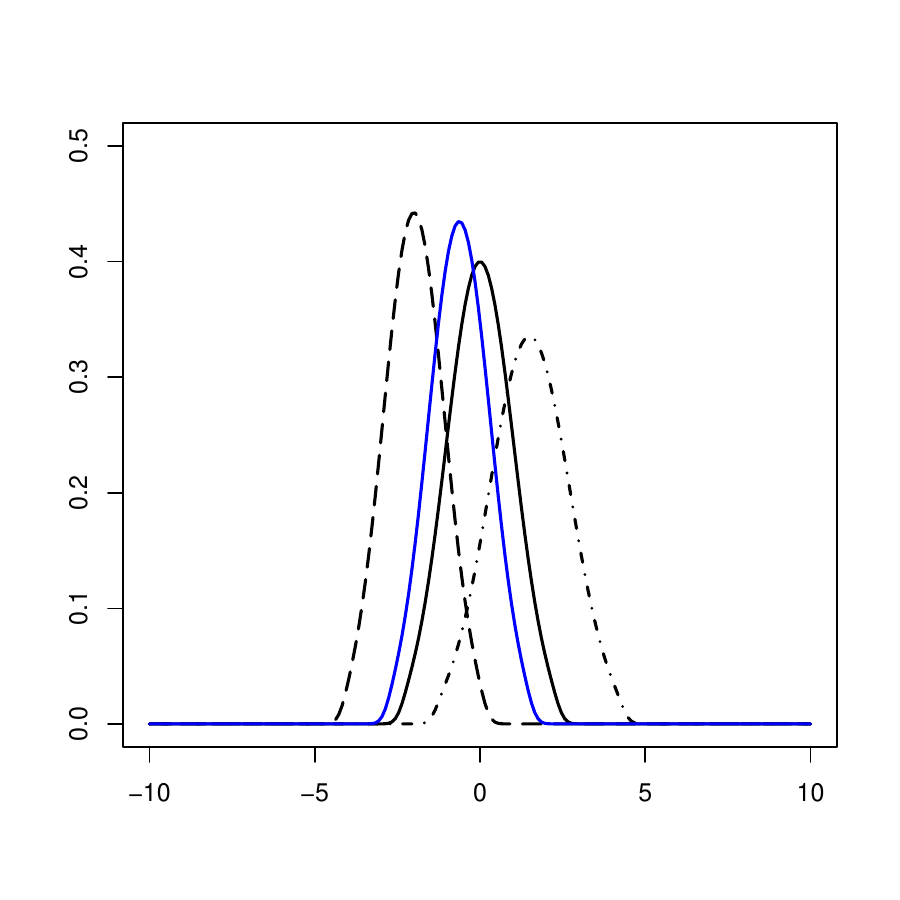}}; 
	 \node[above = -0.8cm of p23] (t23) {$\alpha_1=0, \alpha_2=-0.5$}; 
	 \\
	 \node (p31) {\includegraphics[scale=0.35]{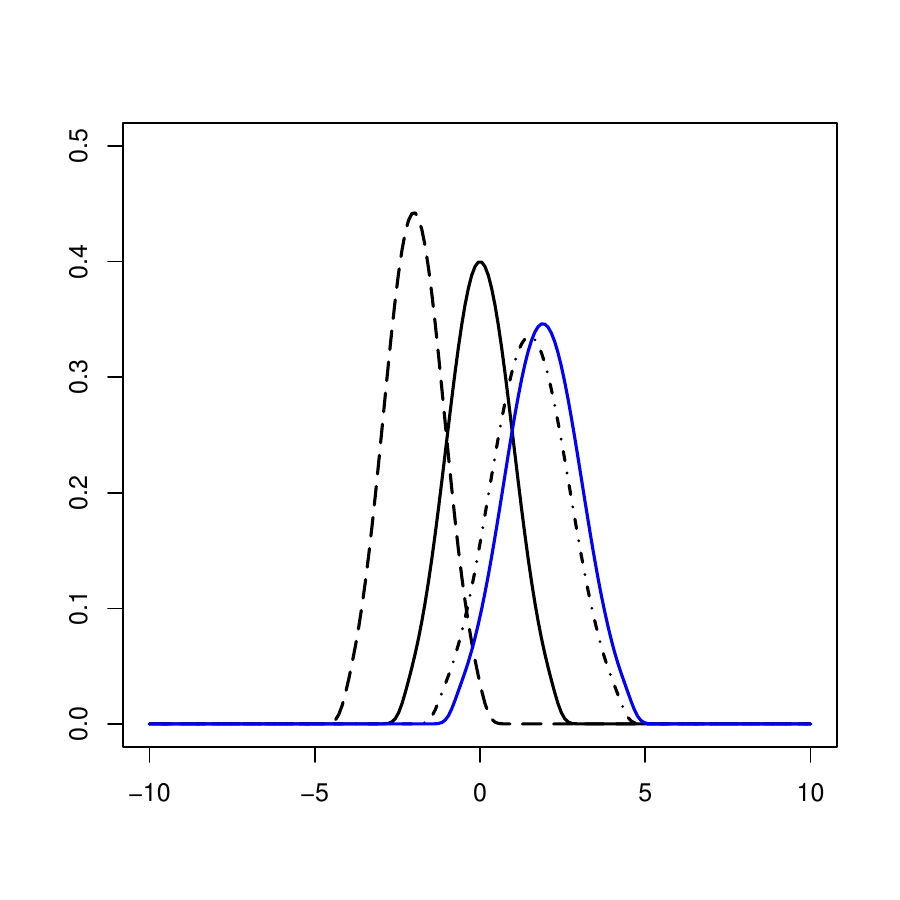}};
	 \node[above = -0.8cm of p31] (t31) {$\alpha_1=-0.5, \alpha_2=0.5$}; &
	 \node (p32) {\includegraphics[scale=0.35]{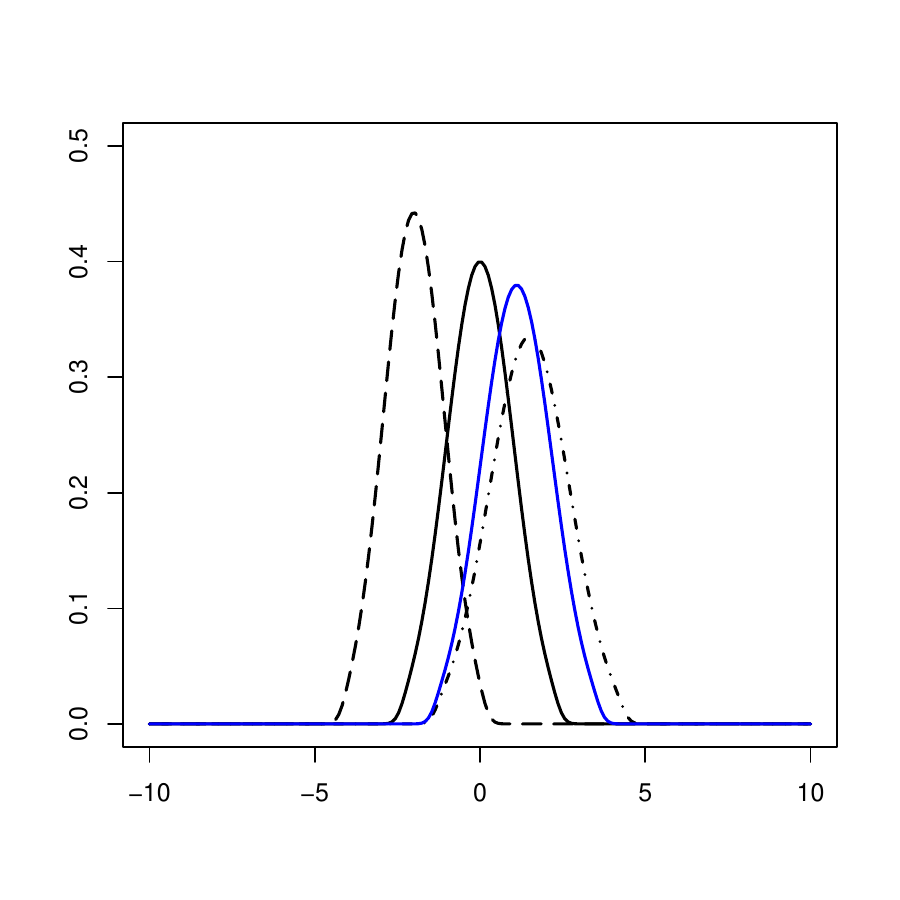}};
	 \node[above= -0.8cm of p32] (t32) {$\alpha_1=-0.5, \alpha_2=0$}; &
	 \node (p33) {\includegraphics[scale=0.35]{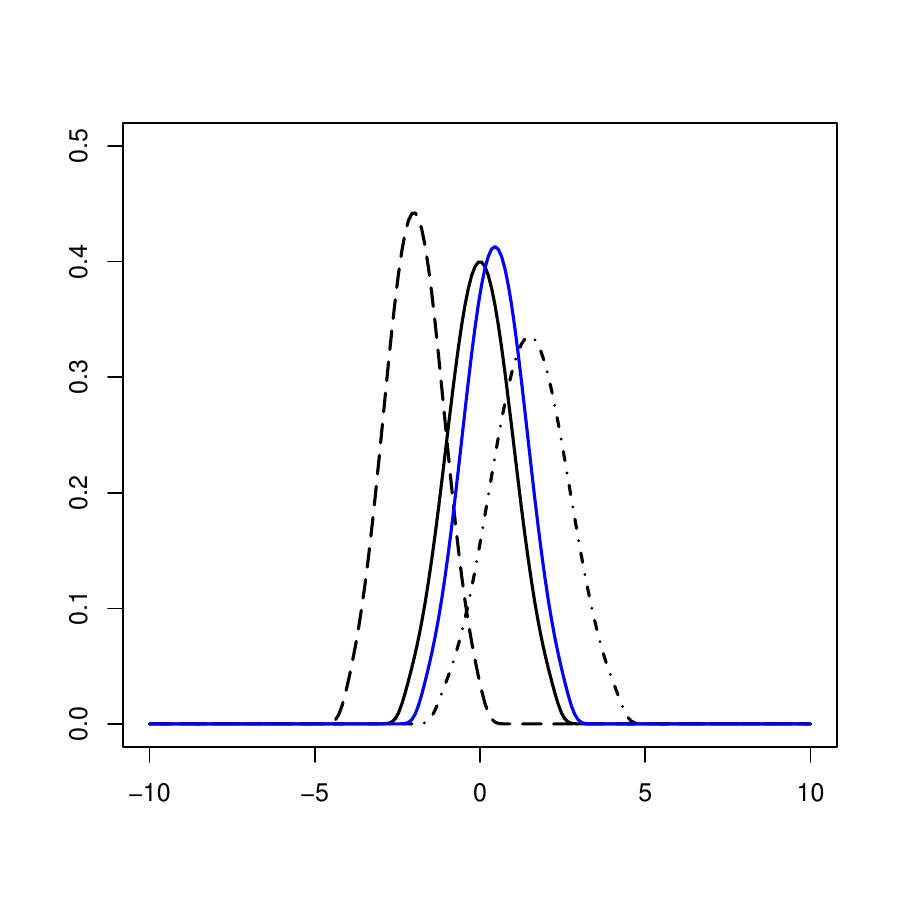}}; 
	 \node[above= -0.8cm of p33] (t33) {$\alpha_1=-0.5, \alpha_2=-0.5$}; 
	 \\
};
\end{tikzpicture}
    \caption{Illustrating  ATM$_m(2)$ and  ATM$_m(1)$  models for distributional time series. Each panel depicts  the  density functions for distributions $\mu_2$ (dot-dashed), $\mu_3$ (dashed)
   (these are the same across all panels)  and the density of distribution $\mu_4$ generated by $\text{ATM}_m$ (blue), which varies across panels.  For all panels  the density of $\mu_0$ (standard normal) is also included (solid black).}
    \label{fig:atmmcoef}
\end{figure}

\begin{figure}
    \centering
    \begin{tikzpicture}\dots
\matrix (m) [row sep = -1em, column sep = - 1.2em]{ 
	 \node (p11) {\includegraphics[scale=0.35]{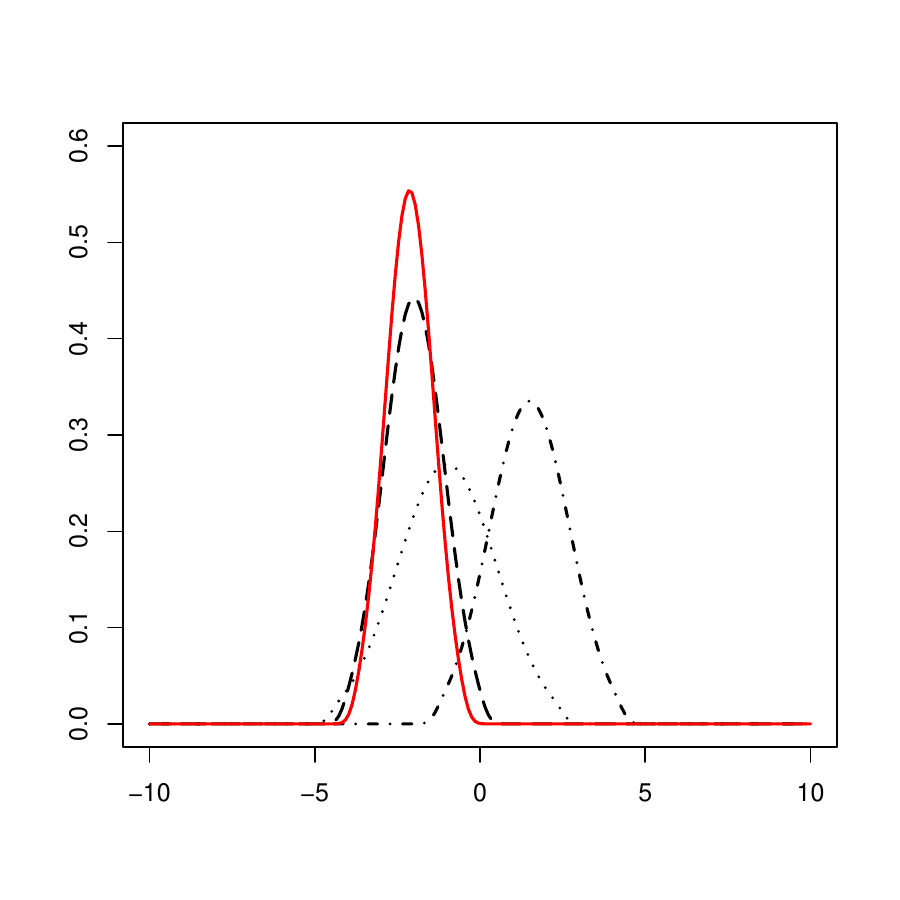}}; 
	 \node[above = -0.8cm of p11] (t11) {$\alpha_1=0.5, \alpha_2=0.5$}; &
	 \node (p12) {\includegraphics[scale=0.35]{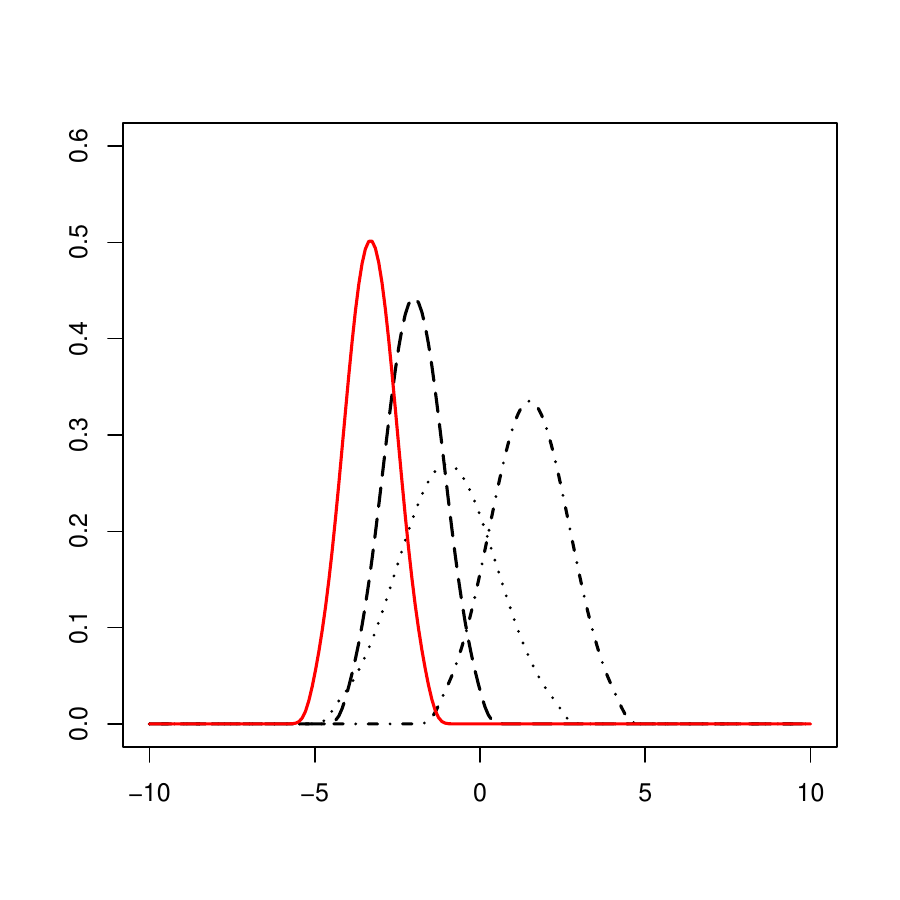}};
	 \node[above = -0.8cm of p12] (t12) {$\alpha_1=0.5, \alpha_2=0$}; &
	 \node (p13) {\includegraphics[scale=0.35]{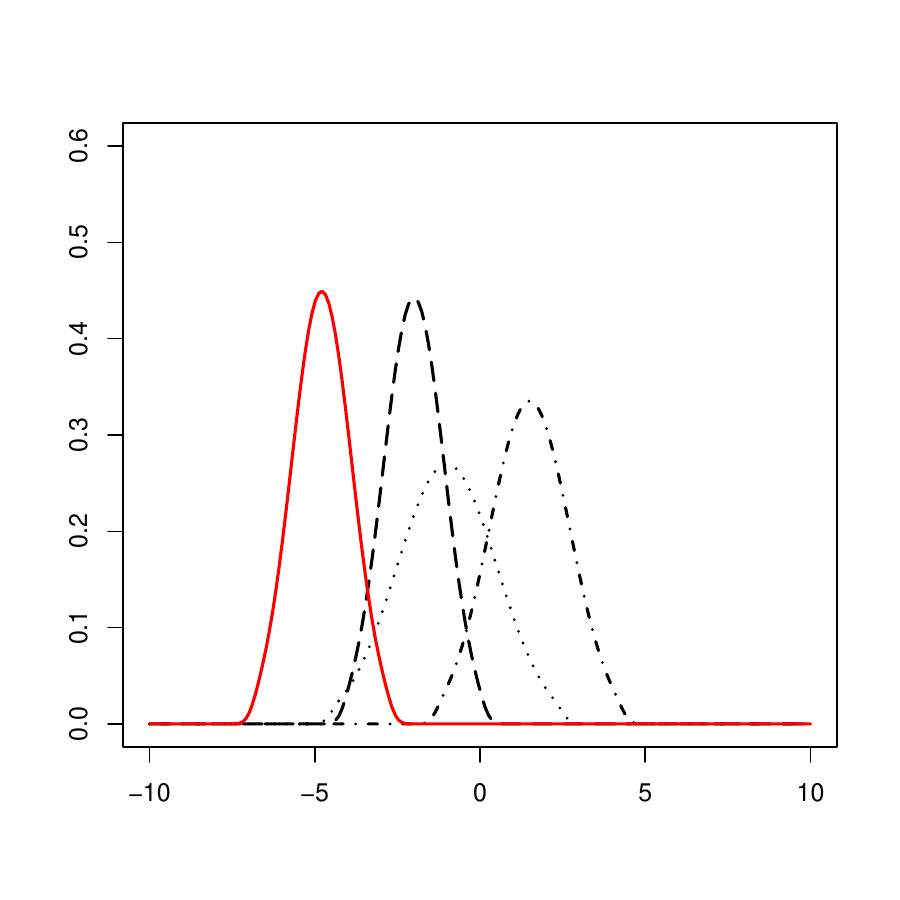}};
	 \node[above = -0.8cm of p13] (t13) {$\alpha_1=0.5, \alpha_2=-0.5$}; 
	 \\ 
	 \node (p21) {\includegraphics[scale=0.35]{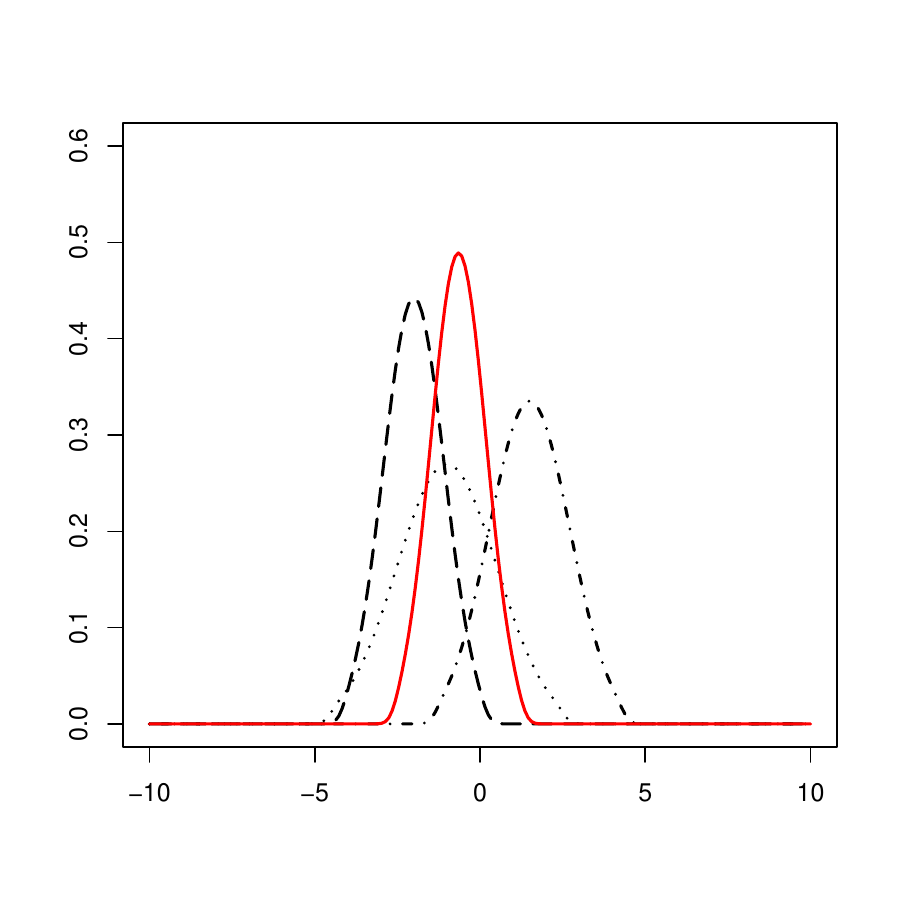}};
	 \node[above = -0.8cm of p21] (t21) {$\alpha_1=0, \alpha_2=0.5$}; &
	 \node (p22) {\includegraphics[scale=0.35]{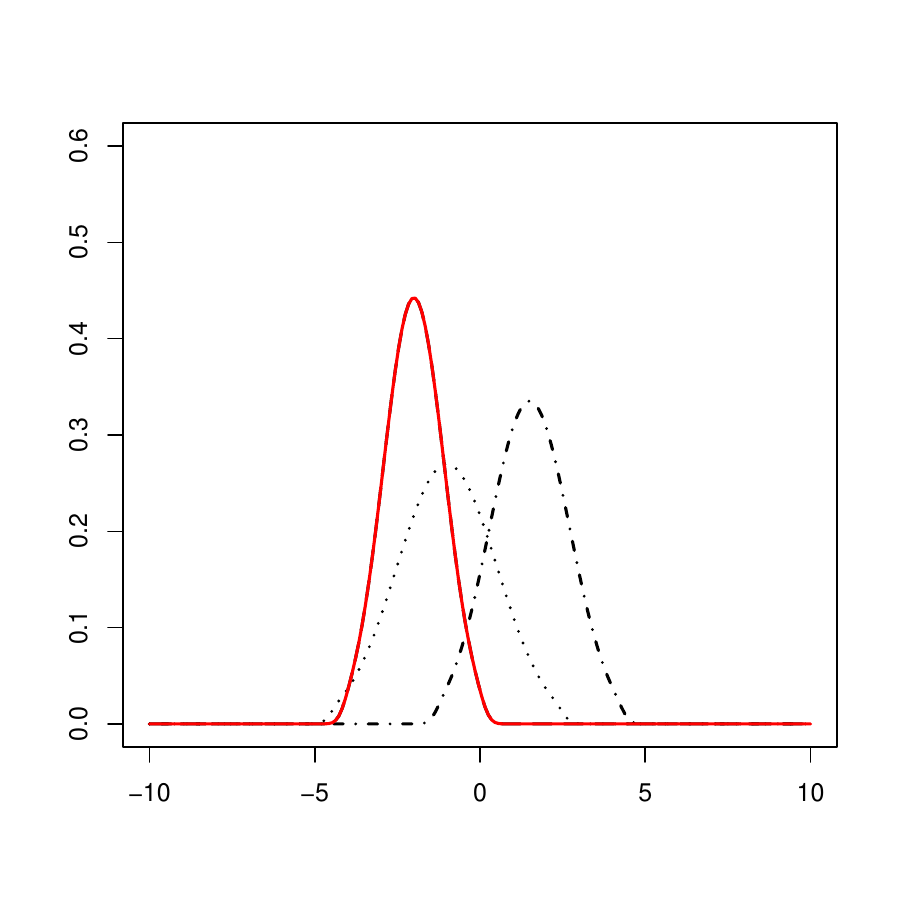}}; 
	 \node[above = -0.8cm of p22] (t22) {$\alpha_1=0, \alpha_2=0$}; &
	 \node (p23) {\includegraphics[scale=0.35]{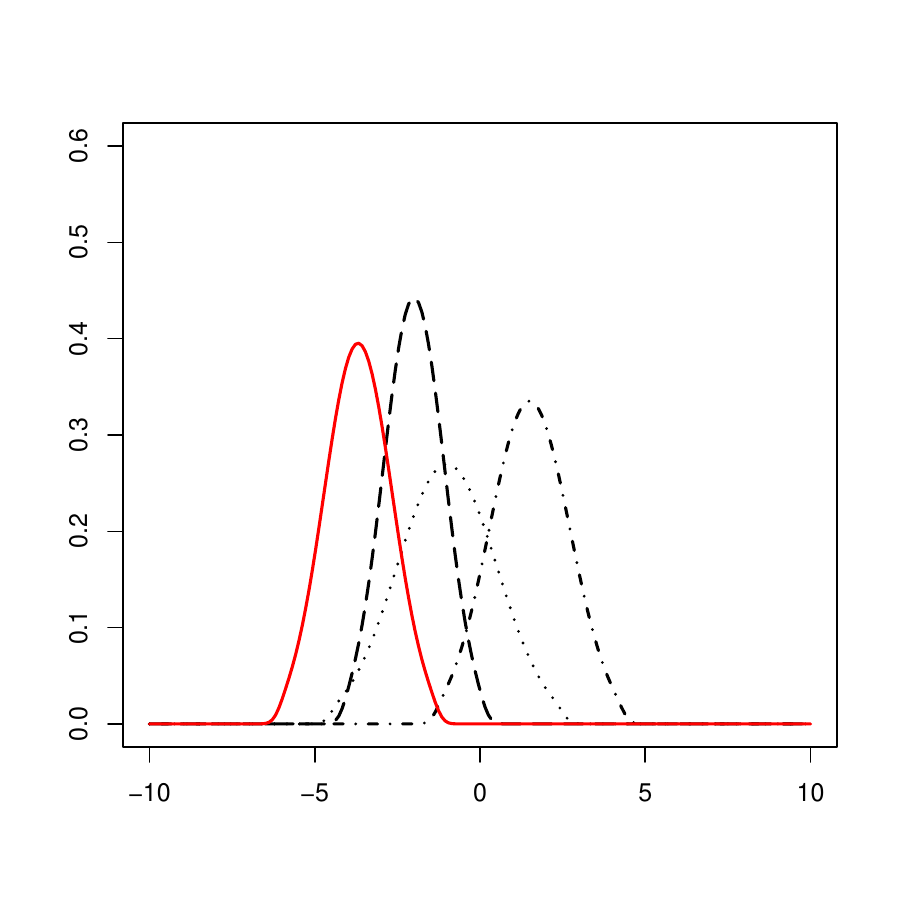}}; 
	 \node[above = -0.8cm of p23] (t23) {$\alpha_1=0, \alpha_2=-0.5$}; 
	 \\
	 \node (p31) {\includegraphics[scale=0.35]{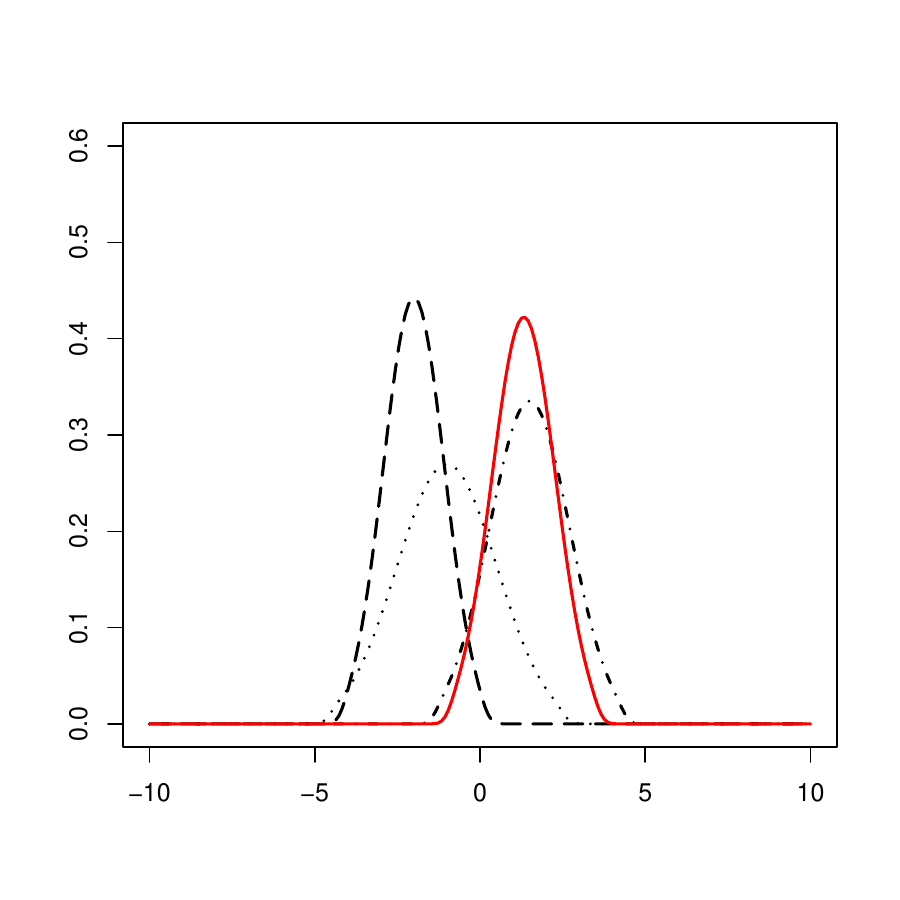}};
	 \node[above = -0.8cm of p31] (t31) {$\alpha_1=-0.5, \alpha_2=0.5$}; &
	 \node (p32) {\includegraphics[scale=0.35]{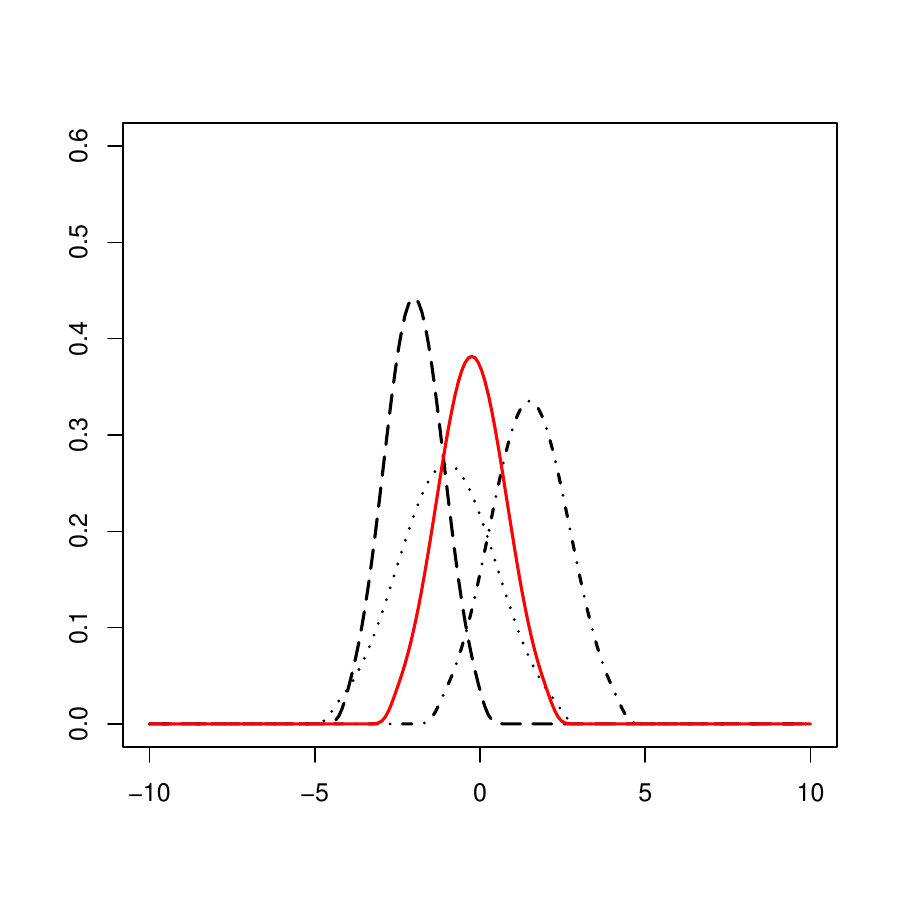}};
	 \node[above= -0.8cm of p32] (t32) {$\alpha_1=-0.5, \alpha_2=0$}; &
	 \node (p33) {\includegraphics[scale=0.35]{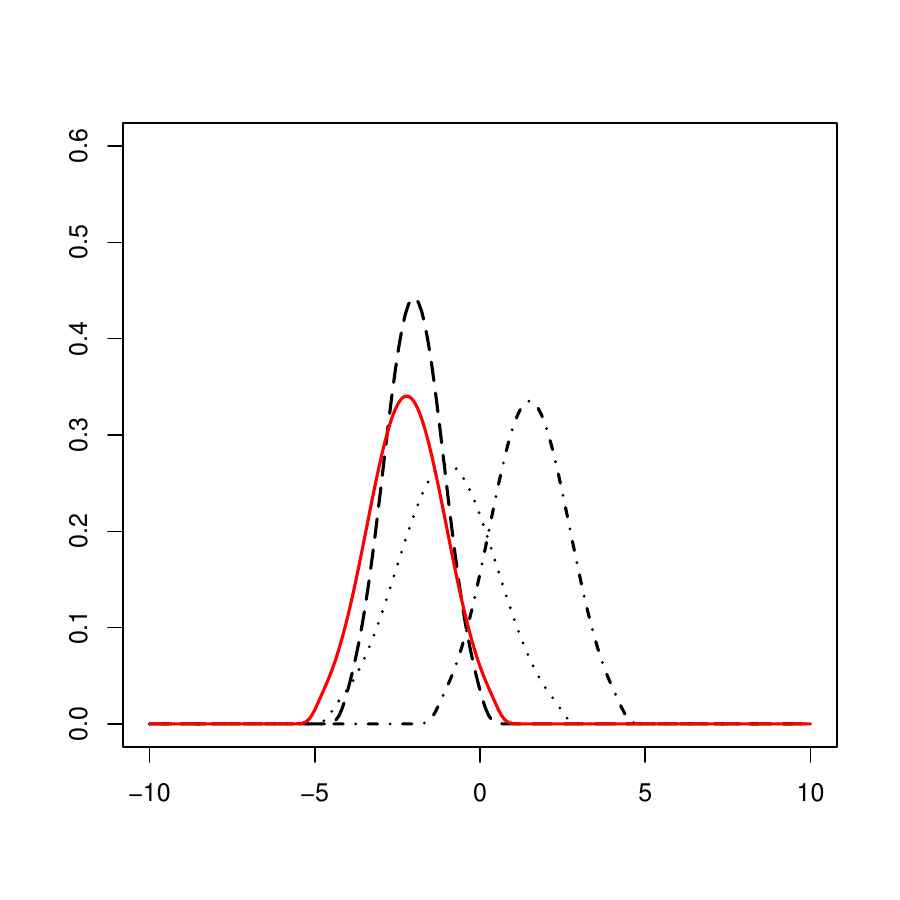}}; 
	 \node[above= -0.8cm of p33] (t33) {$\alpha_1=-0.5, \alpha_2=-0.5$}; 
	 \\
};
\end{tikzpicture}

\caption{Illustrating  ATM$_d(2)$ and  ATM$_d(1)$  models for distributional time series. Each panel depicts  the  density functions for distributions $\mu_1$ (dotted), $\mu_2$ (dot-dashed) and $\mu_3$ (dashed)    (these are the same across all panels) and the density of distribution $\mu_4$ generated by $\text{ATM}_d$ (red), which varies across panels.}
    \label{fig:atmdcoef}
\end{figure}

\subsection{Reducing Non-stationarity}

\no The following example  illustrates that the difference-based models  $\text{ATM}_d$ and $\text{CAT}_d$ are advantageous compared to  $\text{ATM}_m$, $\text{CAT}_m$ and $\text{WR}$ if the assumption that $\{\mu_1, \dots, \mu_n\}$ is a stationary sequence does not hold. Stationarity  of the sequence $\{\mu_1, \dots, \mu_n\}$ is a basic assumption for models  $\text{ATM}_m$, $\text{CAT}_m$ and  $\text{WR}$, whereas models  $\text{ATM}_d$ and $\text{CAT}_d$ only require stationarity for differences, i.e. the sequence of optimal transport maps constructed by taking transports between consecutive distributions $\{\mu_1, \dots, \mu_n\}$ as predictors.  

Consider  Gaussian distributions  $\{ \mu_1, \dots, \mu_6 \}$   with mean 0 and decreasing standard deviations $ 4.8, 4, 3, 1.6, 1.15, 1,  $ respectively. This sequence of distributions is non-stationary.  We use $\{ \mu_t: t=1,2, \dots, 6 \}$ as training data and aim to predict the distribution $\mu_7$ with models $\text{ATM}_m(1)$, $\text{ATM}_d(1)$, $\text{CAT}_m$, $\text{CAT}_d$, WR and LQD.
The densities of the training data are visualized  in the left panel of Figure \ref{fig:nonlinear}. One  would  expect $\mu_7$ to follow this trend, i.e. to also have mean 0 with even smaller variance than $\mu_6$. 
The right panel shows the predicted densities obtained with the different  methods, and we find  that only $\text{ATM}_d$ and $\text{CAT}_d$  capture the underlying trend and provide  reasonable predictions for next element $\mu_7$ in the sequence. 

\begin{figure}
\centering
\begin{tikzpicture}
\matrix (m) [row sep = -3em, column sep = - 1.2em]{ 
	 \node (p11) {\includegraphics[scale=0.45]{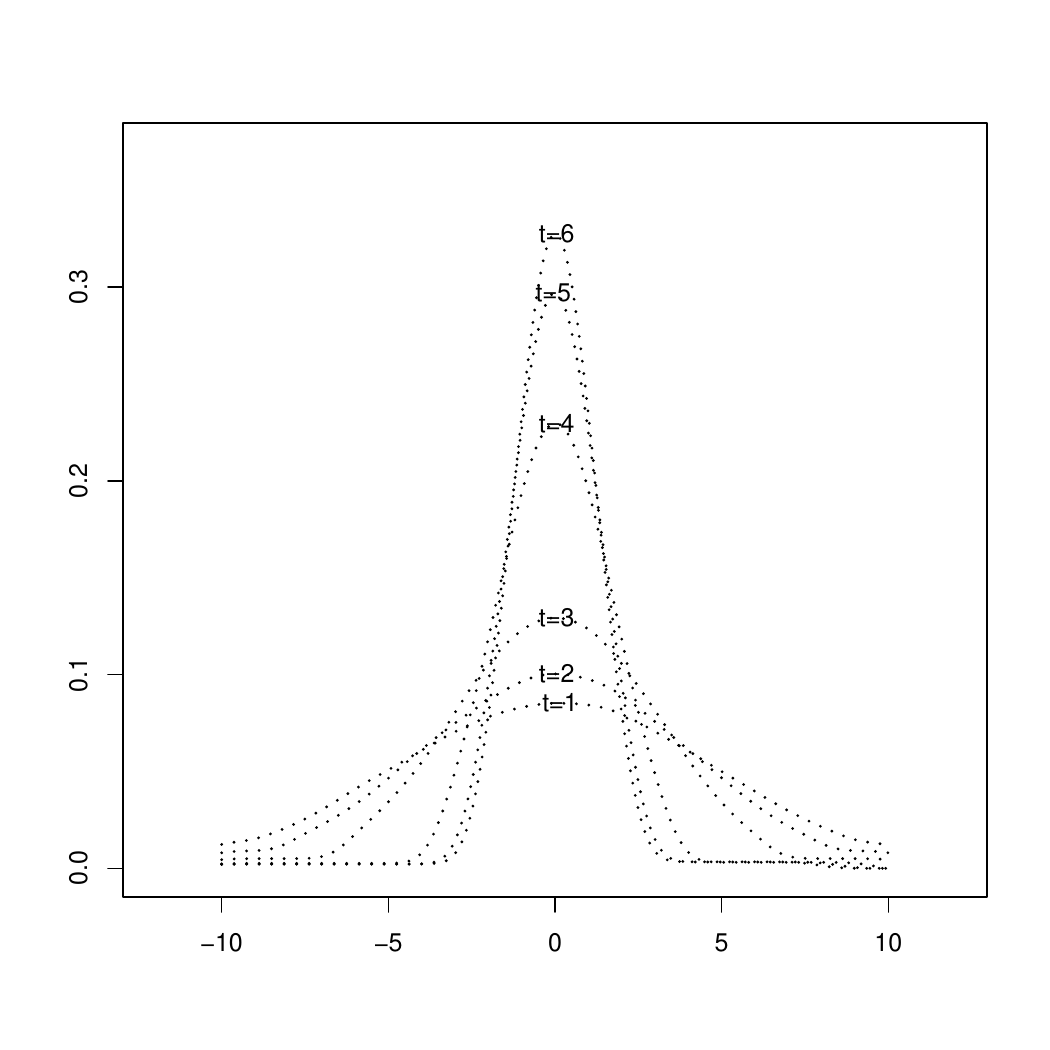}}; &
	 \node (p12) {\includegraphics[scale=0.45]{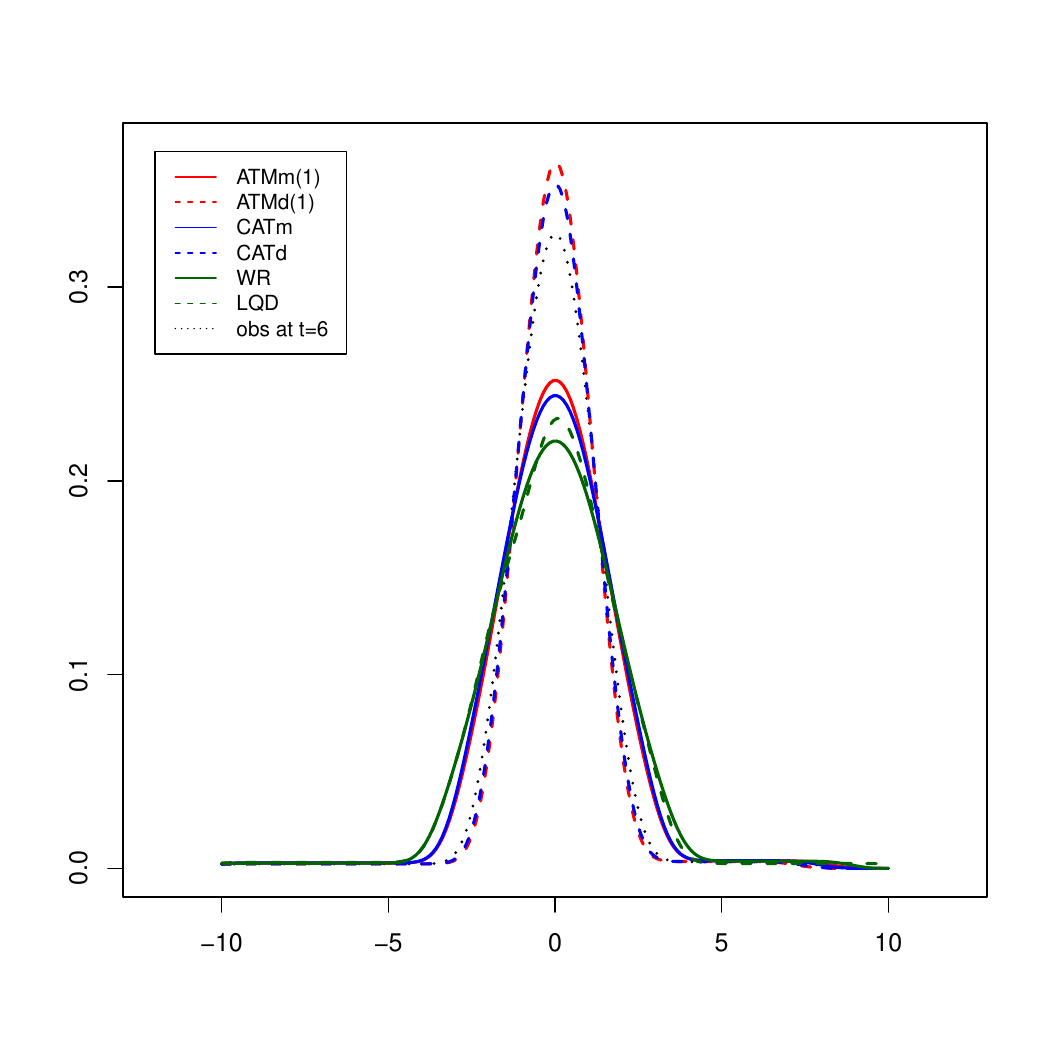}};
	 \\ 
};
\end{tikzpicture}
    \caption{Left panel: The  training sample introduced in Section 6.2. Right panel:  The one-step forecasts obtained for different methods at $t=7$, where only predictions obtained from  $\text{ATM}_d(1)$ and $\text{CAT}_d$ reflect the declining trend in variances, as only these two predictions have smaller variance compared  to the last observed density at $t=6$, which is also plotted on the right panel.}
    \label{fig:nonlinear}
\end{figure}

\vspace{.4cm}

\subsection{Simulations}

\no We generate random transports according to 
\begin{align}
\label{eq:exp1}
T_i = \alpha_4 \odot T_{i-4} \oplus \alpha_3 \odot T_{i-3} \oplus \alpha_2 \odot T_{i-2} \oplus \alpha_1 \odot T_{i-1} \oplus \varepsilon_{i}, \; i \in \mathbb{Z},
\end{align}
where $\varepsilon_i(x) = \frac{1}{2} ( (1+\xi_i)g (h^{-1}(x)) + (1+\xi_i)h^{-1}(x) ) $, $h(x) = \frac{1}{2} ( (1-\xi_i)g(x) + (1+\xi_i)x )$, $x \in \mathcal{S} = [0,1]$ and $\{ \xi_i \} \sim^{i.i.d} \text{Uniform}(-1, 1)$. Here  $g(x)$ is the natural cubic spline passing through points $(0,0), (0.33, 0.7), (0.66, 0.8), (1,1)$. We note that 
this construction ensures that the $\varepsilon_i$ are transports. When  representing these transports as quantile  functions, for   
$0< \xi_i < 1 $ the function $g(x)$ is shifted along the direction perpendicular to the diagonal towards the identity map and for $ -1< \xi_i < 0$  this shift is applied to  $g^{-1}$ instead; see Figure \ref{fig:f} for an illustration of $g$ and $\varepsilon_i(x)$. By construction, $E(\varepsilon_i) = id$.

To compare prediction accuracy across  different models,  we generated $\{ T_i \}_{i=1}^{101}$ from the above model, using  $ \{ T_i \}_{i=1}^{100} $ as training set, aiming to  predict $T_{101}$. The Wasserstein distance between $T_{101}$ and its prediction was computed for different combinations of $\alpha_1, \alpha_2$ by treating the transport maps $\{T_{i} \}$ as quantile functions. For these comparisons, we modified  LQD to operate on  transport maps, rather than predictor distributions
(as originally devised). The simulation results for  1000 Monte Carlo replications are in Table \ref{tab:exp1} (numbers multiplied by 100). The order of $\text{ATM}_m$ was obtained by rolling-window validation based on a pre-sample of size 50. When $\alpha_2 = 0 $, model \eqref{eq:exp1} reduces to an autoregressive model of order 1. Overall, ATM was found to outperform  WR and LQD. We also  use this example with $h(x)$ chosen as  natural cubic spline passing through $(0,0), (0.3, 0.5), (0.6, 0.8), (1,1)$ to illustrate the empirical rate of  convergence  of the estimates for the  parameters of $\text{ATM}_m(1)$. Figure \ref{fig:rate} displays  estimation error versus $\sqrt{n}$ based on 200 Monte Carlo repetitions,  demonstrating that  finite sample  performance with increasing sample sizes matches  the root-$n$ convergence rate predicted by Theorem 2. 

It is also of interest to consider a sequence of distributions that are not generated from any of the examined models. Starting with  the sequence of square integrable functions
\begin{align} \label{eq:exp2}
    R_i(x)  =  \text{sin}(\zeta_i x), 
\end{align}
where $x \in [0,1]$ and the $\{ \zeta_i \}$ are generated from the AR(2) model, $ \zeta_i = \alpha_1 \zeta_{i-1} + \alpha_2 \zeta_{i-1}+ \alpha_3 \zeta_{i-3} + \alpha_4 \zeta_{i-4} + \epsilon_i $, $  \epsilon_i \overset{i.i.d.}{\sim}\text{Uniform}(-4 \pi, 4 \pi)$, we  convert the $\{ R_i \}$ to distributions by applying the inverse log quantile density transformation \citep{petersen2016functional}, scaling the resulting distributions to be supported on $[0,1]$; see Figure \ref{fig:f} (right panel) for an illustration. Again, we generate 100 distributions for training  and report the results for 1000 Monte Carlo replications. The simulation results are in the lower part of Table \ref{tab:exp1}. For this case, we find that $\text{ATM}_m(1)$ is the overall preferred model. \vspace{.4cm} 

\begin{figure}
		\centering
		\begin{tikzpicture}[scale = 0.4]
		\node[anchor=south west] (straw) at (0,0) {\includegraphics[width=.3\textwidth, height=.3\textwidth]{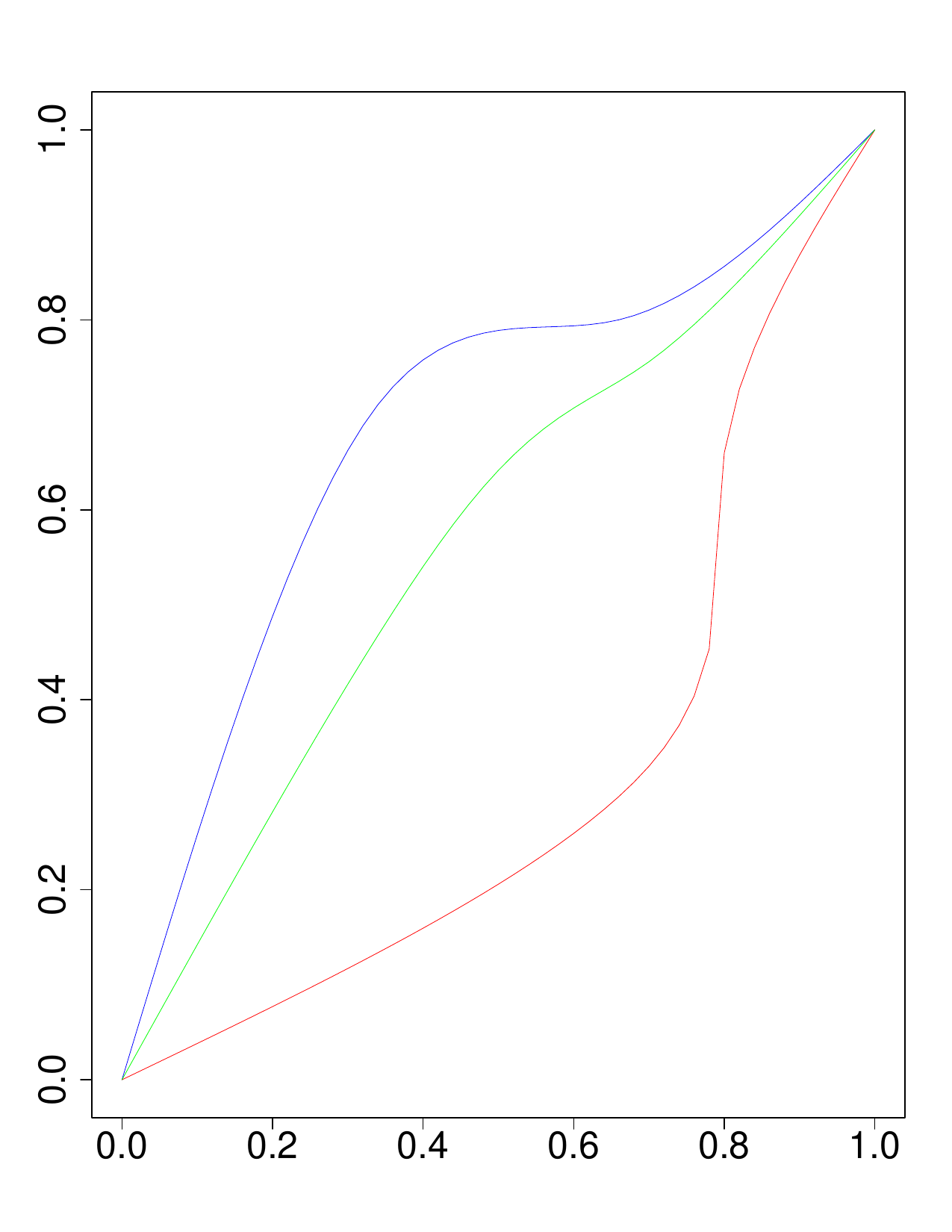}};
		
		\node[anchor=south west] (Tseq) at (straw.south east) {\includegraphics[width=.3\textwidth, height=.3\textwidth]{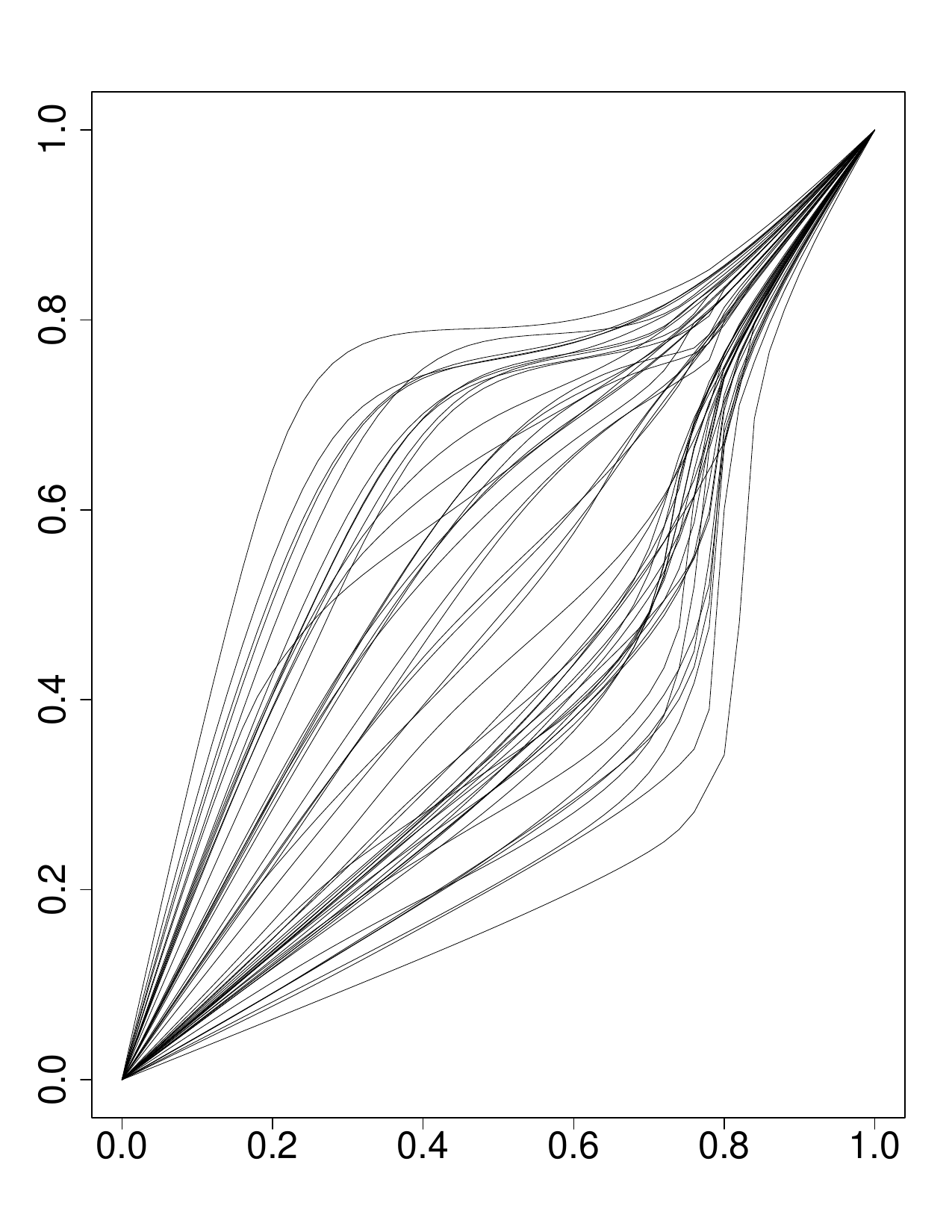}};
		
	   	\node[anchor=south west] at (Tseq.south east) {\includegraphics[width=.3\textwidth, height=.3\textwidth]{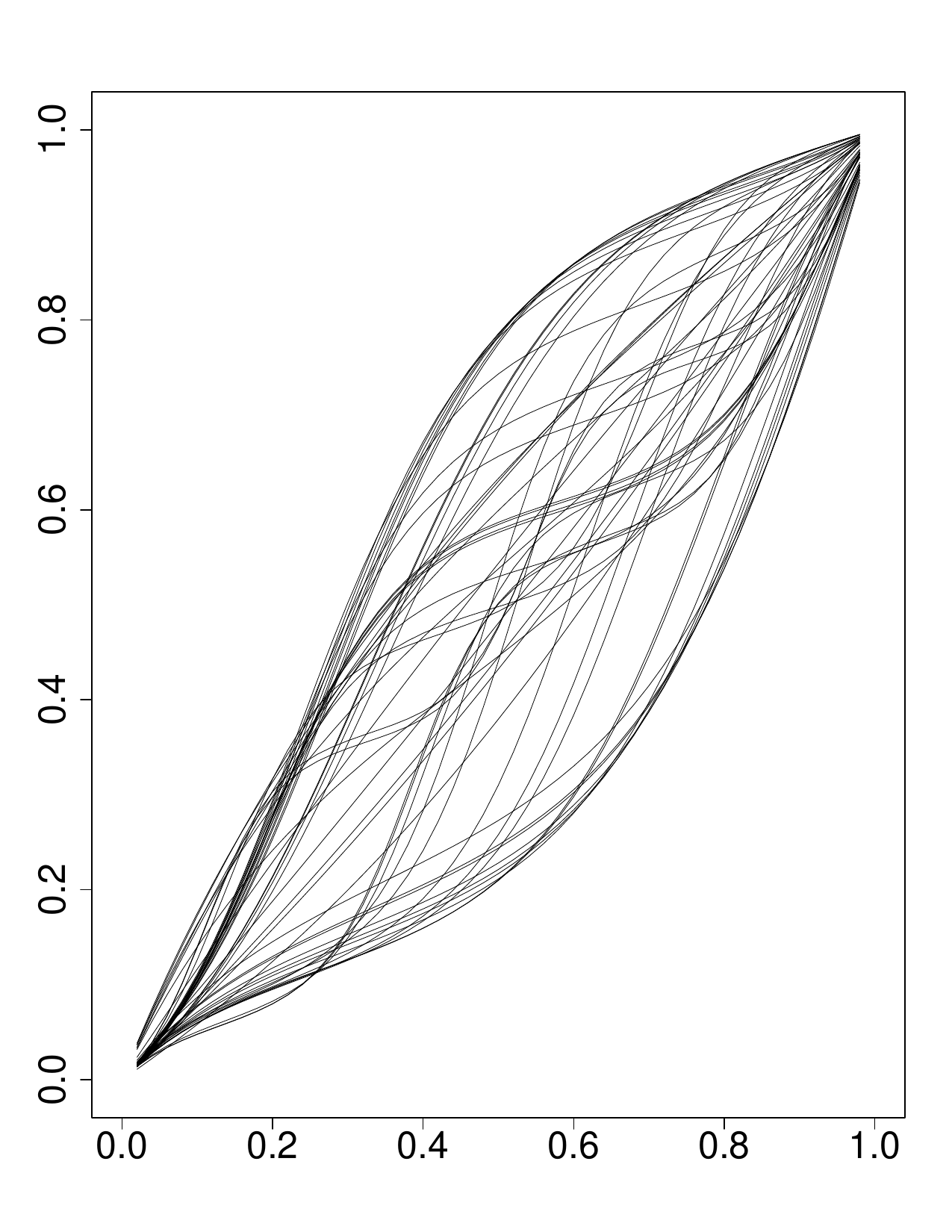}};
		\end{tikzpicture}
		\caption{Auxiliary functions for the simulation. Left panel: The monotone function $g$ (blue), $g^{-1}$ (red) and $\epsilon_i$ with $\xi_i = -0.7$ (green) in the simulation \eqref{eq:exp1}. Middle panel:  Sequence of generated transport maps $T_i$ for simulation  \eqref{eq:exp1} with  $\alpha_1 = -0.3$, $\alpha_2 = 0.2$. Right panel: Quantile functions generated for  simulation  \eqref{eq:exp2} with $\alpha_1 = -0.3$, $\alpha_2 = 0.2$.}
		\label{fig:f}
	\end{figure}

\begin{figure}
    \centering
    \begin{tikzpicture}
\matrix (m) [row sep = -3em, column sep = - 1.2em]{ 
	 \node (p11) {\includegraphics[scale=0.5]{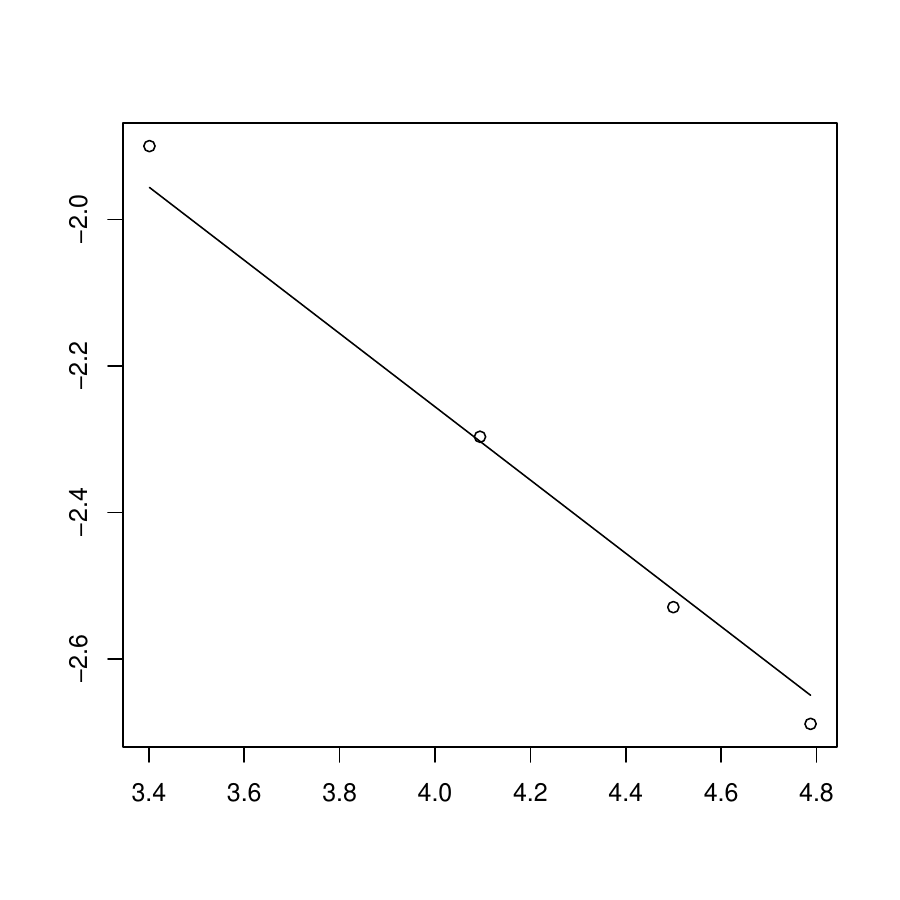}}; 
	 \node[below = -1cm of p11] (t11) {$\text{ln}(n)$};  \node[above left = -3cm and -0.65cm of p11, rotate=90] (t11) {$\text{ln}(|\widehat{\alpha} - \alpha|)$}; & 
	 \node (p12) {\includegraphics[scale=0.5]{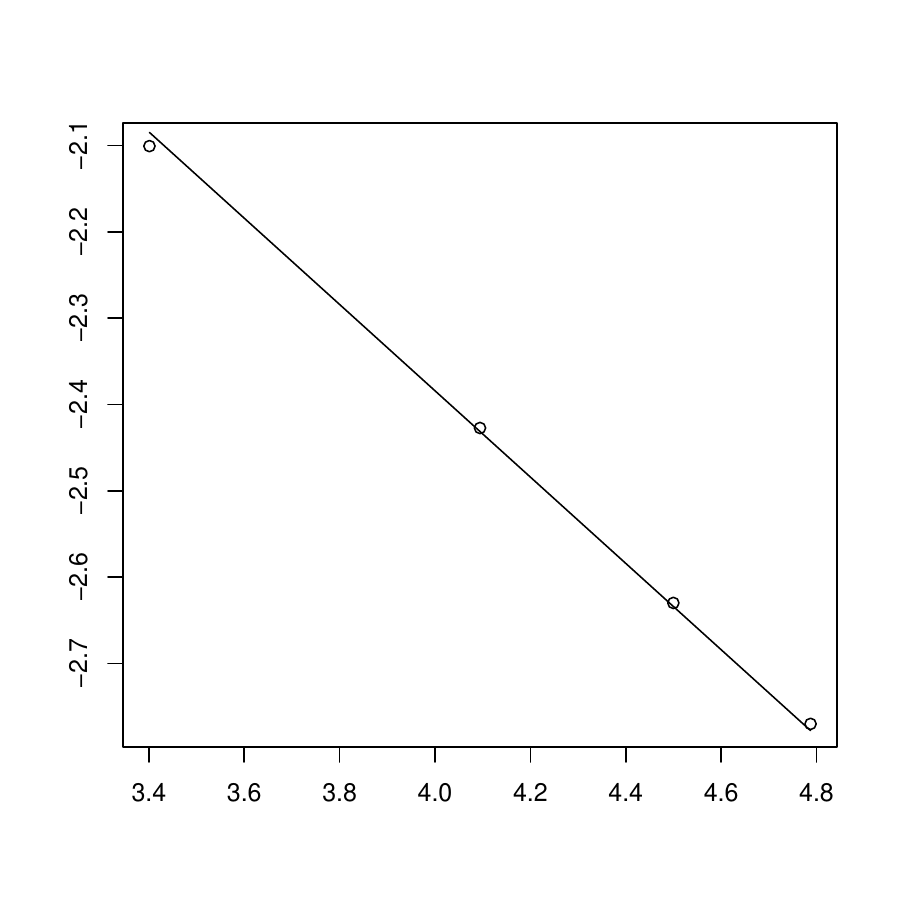}};
	 \node[below  = -1cm of p12] (t12) {$\text{ln}(n)$}; 
	 \\ 
};
\end{tikzpicture} \vspace{-.5cm}
    \caption{Log-estimation error of $\hat\alpha$ versus log sample size $n$ for $\text{ATM}(1)$, for  $\alpha = 0.5$ (left) and $\alpha=-0.5$(right). The solid black line in each panel is a line with slope -0.5 that is predicted by theory.}
    \label{fig:rate}
\end{figure}

\begin{table}[t] \centering 
\begin{tabular}{@{\extracolsep{5pt}} cccccc} 
\\[-1.8ex]\hline 
\hline \\[-1.8ex] 
& ($\alpha_1$, $\alpha_2$, $\alpha_3$, $\alpha_4$) & (0.2, -0.5, 0.1, -0.3) & (0.5, 0, 0, 0)   \\ \hline
\multirow{3}{*}{Example \eqref{eq:exp1}}  & $\text{ATM}_m$ & \bf{12.264}  & \bf{11.586}   \\
  & LQD & 13.891   & 13.282    \\
  & WR & 12.535  & 11.765    \\
\hline \\[-1.8ex] 
\multirow{3}{*}{Example \eqref{eq:exp2}}  & $\text{ATM}_m$ & \bf{9.841} & \bf{9.644}  \\
  & LQD & 10.079 & 9.836 \\
  & WR & 10.082 & 9.838  \\
\hline \\[-1.8ex] 
\end{tabular} 
\caption{Forecasting accuracy comparison for simulations \eqref{eq:exp1} and \eqref{eq:exp2}.}
\label{tab:exp1} 
\end{table} 

\subsection{Temperature Data}

\no One consequence of global warming may be an increasing frequency  of warm summer nights in the Northern hemisphere. Inspired by the article of \cite{bhat:21},
we studied this with 
temperature data that  were recorded at O'Hare international airport (available at  \url{https://www.ncdc.noaa.gov/cdo-web/search?datasetid=GHCND}).  The annual distributions of daily minimum temperatures,  aggregating these temperatures over the period June 1 to September 30 over the summer months of each year,  
are illustrated in Figure 
\ref{fig:ohare} for the years from  1990 to 2019, where we use the distributions prior to 2019 as training data to predict the distribution for the  year 2019. 

For the  ATM models we  varied  $p$ from 1 to 3 and found  that $p=3$ yielded the best prediction. The observed and predicted  densities for 2019 are shown in Figure \ref{fig:ohare}. 
The Wasserstein distances  between observed  and predicted distributions  were found to be  0.334  for  $\text{ATM}_d(3)$, 1.01  for $\text{ATM}_m(3)$, 0.462 for $\text{CAT}_d$, 1.477  for $\text{CAT}_m$, 1.134  for WR, and  1.255  for LQD. The fitted  model coefficients for the best model, i.e. $\text{ATM}_d(3)$, are $ \alpha_1 = -0.724, \alpha_2 = -0.5, \alpha_3 = -0.268$. Denote by  $\mu_{2018}$, $\mu_{2017}$, $\mu_{2016}$, $\mu_{2015}$  the observed distributions for the years 2018, 2017, 2016, 2015,  respectively,  and by $T_3$ be the optimal transport from  $\mu_{2015}$ to $\mu_{2016}$, by  $T_2$ the optimal transport from $\mu_{2016}$ to $\mu_{2017}$ and by  $T_1$ the optimal transport from  $\mu_{2017}$ to $\mu_{2018}$. The training set of distributions, i.e. the observed data,  is illustrated in the form of densities in the left panel of 
Figure \ref{fig:ohare},  predicted densities are in the middle panel and the 
densities of $\mu_{2015}, \dots, \mu_{2018}$ in the right panel. 

Comparing the densities of $\mu_{2017}$ and $\mu_{2018}$, $ \mu_{2016} $ and those of  $\mu_{2017}$, $\mu_{2015}$ and $\mu_{2016}$, respectively, we find
that  $T_1$ corresponds to a shift to the right and a sharpening of the distribution, $T_{2}$ corresponds to a shift to the left and a smoothing of the distribution and $T_3$ corresponds to a shift to the right and a sharpening of the distribution. The proposed model applies deformations  $ \alpha_3\odot T_3 $, $\alpha_2 \odot T_2$ and  $\alpha_1 \odot T_1$ sequentially to $\mu_{2018}$.
it is likely that $\text{ATM}_d$ and $\text{CAT}_d$ yield the best results because of   the non-stationarity of this sequence, as  the distributions shift to the right over the years, reflecting  a warming
trend.  \vspace{.4cm}

\begin{figure}
    \centering
    \begin{tikzpicture}
\matrix (m) [row sep = -1em, column sep = - 1.3em]{ 
	 \node (p11) {\includegraphics[width=.34\textwidth, height=.34\textwidth]{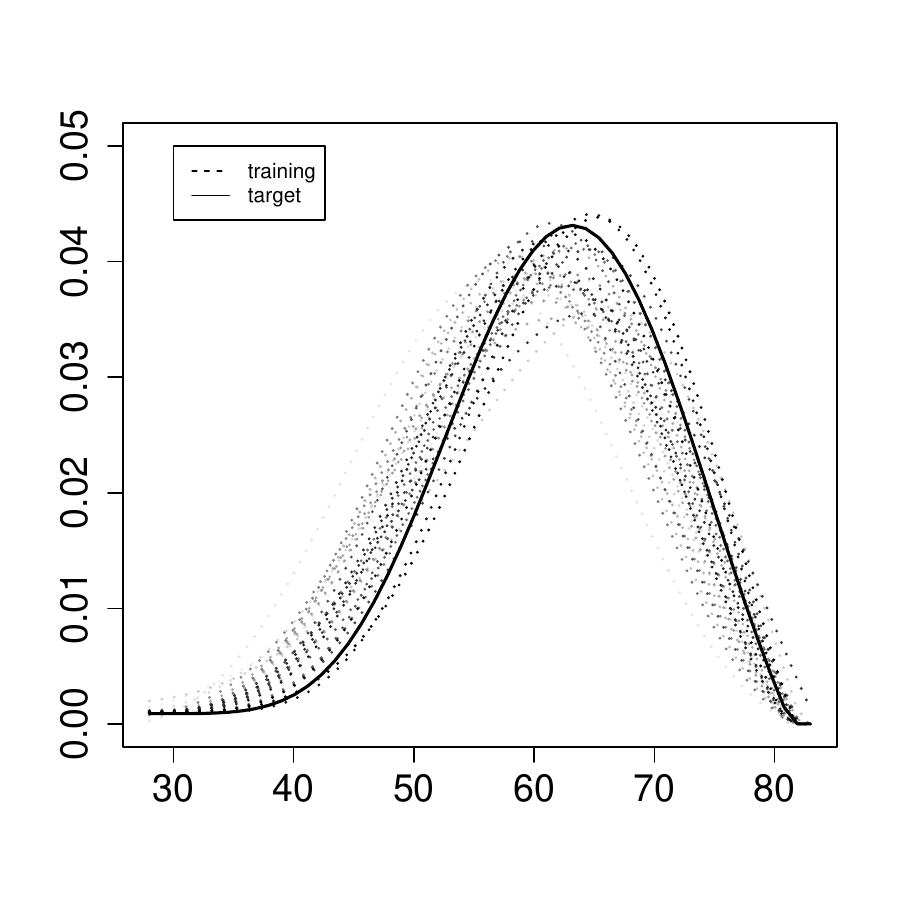}}; &
	 \node (p12) {\includegraphics[width=.34\textwidth, height=.34\textwidth]{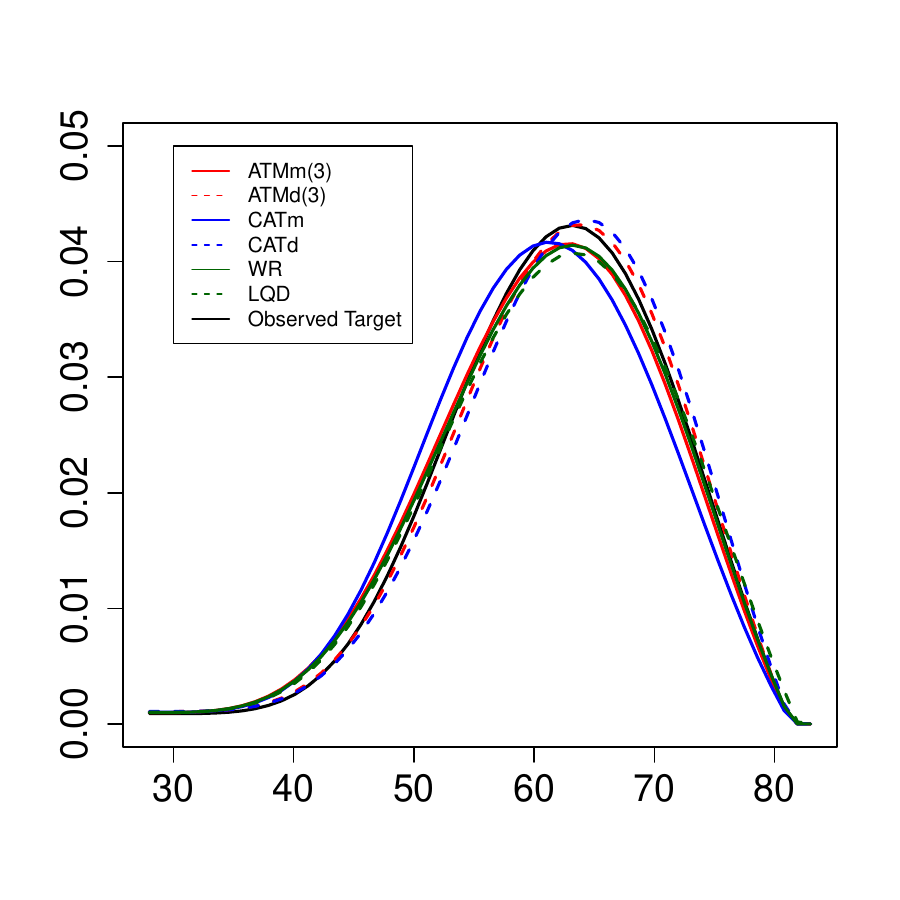}};
	 &
	  \node (p21) {\includegraphics[width=.34\textwidth, height=.34\textwidth]{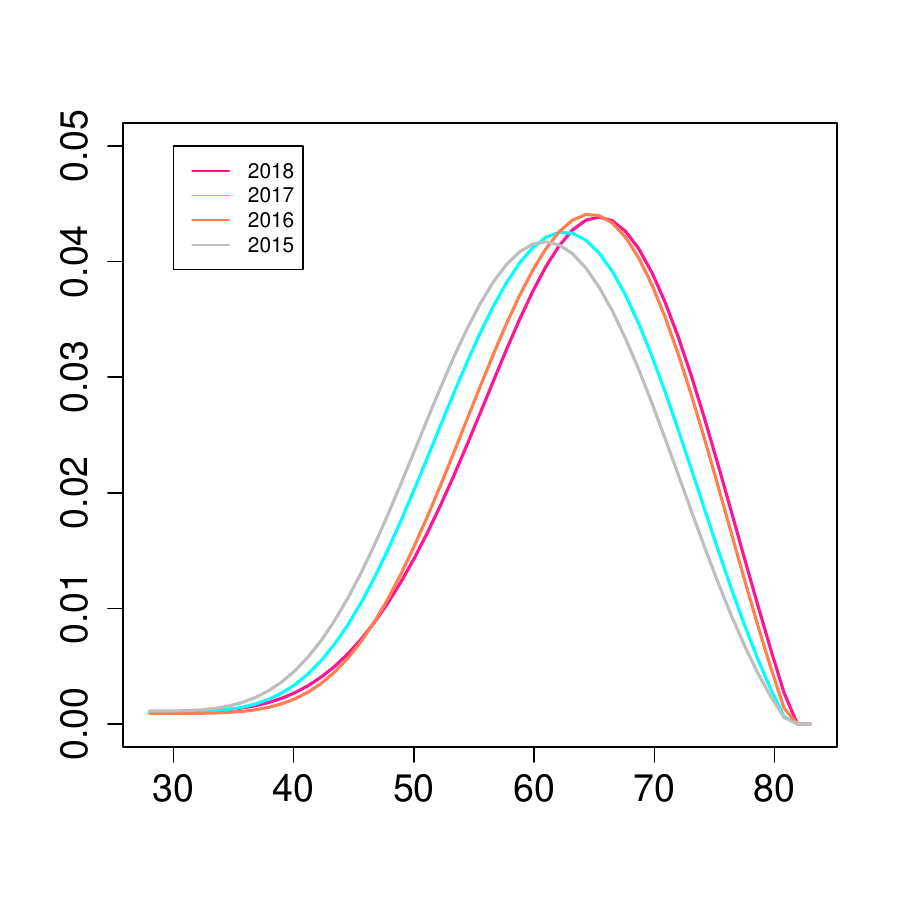}}; 
	 \\
};
\end{tikzpicture}\vspace{-1cm}
    \caption{Left panel: Densities of the annual distributions of minimum summer night temperatures at O'Hare International Airport from 1990-2018.  Middle panel: Observed density and predicted densities obtained from various models for the year 2019. Right figure shows the densities of $\mu_{2018}$, $\mu_{2017}$, $\mu_{2016}$, $\mu_{2015}$ that are the observed distributions for years 2018, 2017, 2016, 2015 respectively.}
    \label{fig:ohare}
\end{figure}

\vspace{-.8cm} 

\subsection{U.S. House Price Data}
 
\no Given the sequence of distributions $\{ \mu_1, \mu_2,  \dots, \mu_n \}$, for a starting time $s_r \in \{k + 1, k+2, \dots, n-k \} $, we used the subset $\{ \mu_{s_r}, \mu_{s_r + 1}, \dots, \mu_{s_r + k-1} \} $ to train models and to produce the prediction $\widehat{\mu}_{s_r+k}$ at time $s_r+k$. The autoregressive order $p$ was  selected so as to minimize $\sum_{i=s_r}^{s_r+k-1} d_{\mathcal{W}}(\mu_i, \widehat{\mu}_i)$, where $ \widehat{\mu}_i $ is the predicted distribution at time $i$ by $\text{ATM}(p)$ trained on the sample $\{ \mu_{i-k}, \dots, \mu_{i-1}\}$. The candidate set for $p$ was  $\{ 1,2,3,4,5 \} $ when $ k=8 $ and $ \{ 1,2,3,4,6,8 \} $ when $k >8$. We adopted the rolling window approach  \citep{zivot2007modeling} and used the prediction loss $\sum_{s_r=k+1}^{n-k} d_{\mathcal{W}}(\mu_{s_r + k}, \widehat{\mu}_{s_r+k})/(n-2k).$ 


The  US house price data  contain bimonthly median house prices for 306 U.S. cities and counties  from June 1996 to August 2015 (available at http://www.zillow.com).  We adjusted the data to account for  inflation by a monthly adjustment factor (deflator) and constructed the bimonthly house price distributions over the 306 cities/counties. The preprocessed  distributions (equivalently density or quantile functions) were  then scaled to be supported on $[0,1]$. Figure \ref{fig:real} 
presents the  house price densities over time. Setting   the learning rate $\eta=1$ in  algorithm \ref{alg2}, the prediction results are summarized in Table \ref{tab:house}. In general, $\text{ATM}_d$ emerged as  the best performing model for these data, which is not surprising due to the non-stationarity of these data.  

	\begin{figure}[t]
		\centering
		\begin{tikzpicture}[scale = 0.6]
		\node[anchor=south west] (straw) at (0,0) {\includegraphics[width=.6\textwidth, height=.4\textwidth]{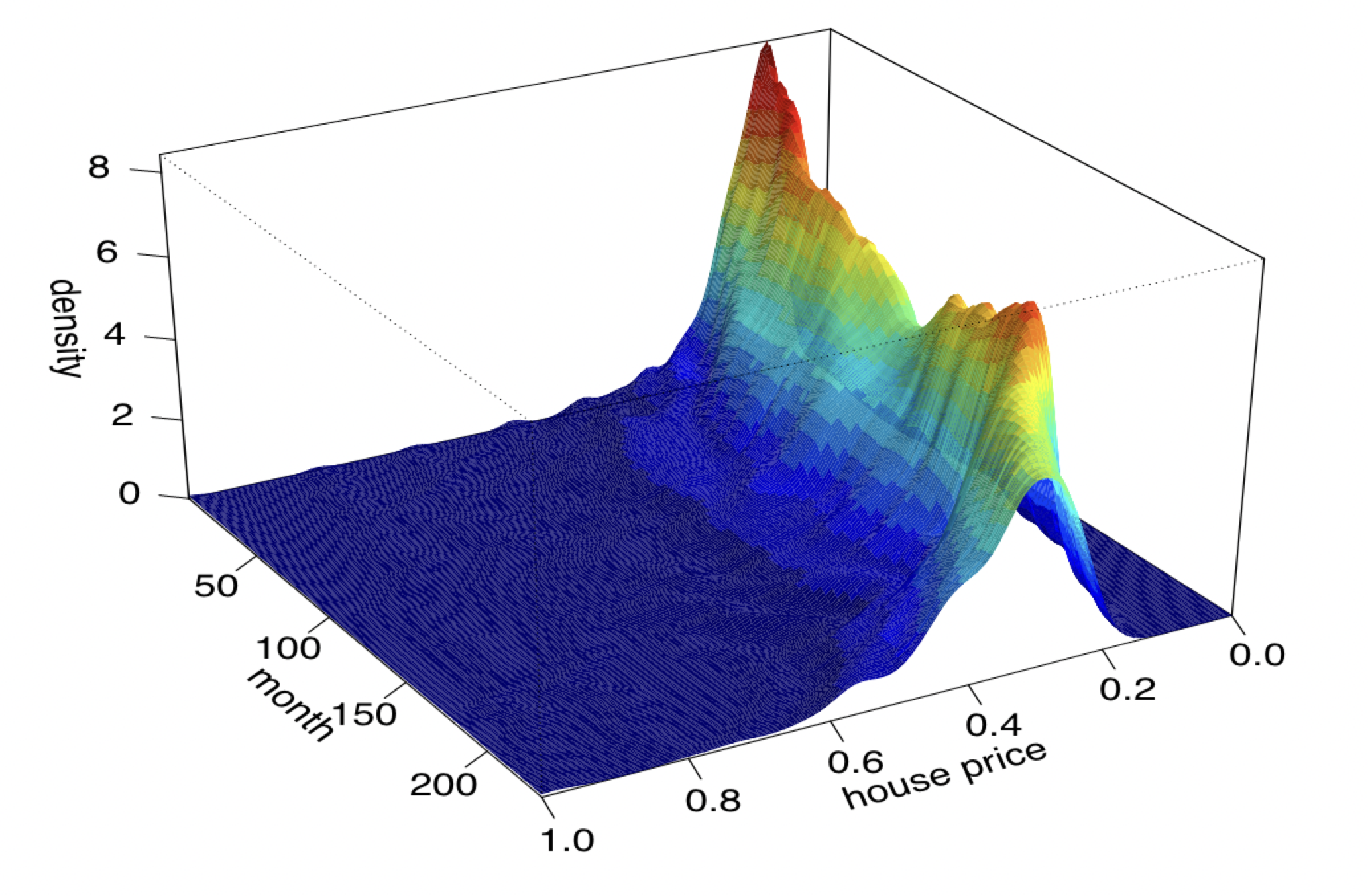}};
		
		\end{tikzpicture}
	 \caption{Distributions of US house prices across counties for 240 months between 1996 and 2015, shown as densities. 
		}
		\label{fig:real}
	\end{figure}

\section{CONCLUDING REMARKS}

\no Distributional data analysis is challenged by the fact  that distributions do not form a vector space and  basic operations such as addition and multiplication are not available. This especially affects regression models, including  distributional autoregressive models for time series analysis.  At the same time, many time series data can be viewed as sequences of distributional data that are indexed by time and there is a need for more advanced 
statistical tools to model such time series. A key innovation  of this paper is that it provides a novel class of regression models for distributional data that  are intrinsic and enjoy geometric interpretations. These models result  from adopting the point of view that predictors and responses are elements of a space of optimal transports that is equipped with basic algebraic operations.  The existence of stationary solutions of the associated ATM models can be guaranteed if a geometric moment-contraction condition is satisfied. 

\begin{table}[t] \centering 
\begin{tabular}{@{\extracolsep{5pt}} cccccc} 
\\[-1.8ex]\hline 
\hline \\[-1.8ex] 
$k$ &8 & 12 & 18 & 26 & 36 \\ \hline
$\text{ATM}_m$ & $1.878$ & $1.754$ & $1.771$ & $2.715$ & $2.952$   \\ 
$\text{CAT}_m$ &$2.660$ & $2.473$ & $2.345$ & $2.327$ & $2.363$ \\ 
$\text{ATM}_d$ & $\bf{1.647}$ & $\bf{1.611}$ & $\bf{1.652}$ & $\bf{1.708}$ & $\bf{1.778}$ \\
$\text{CAT}_d$ & $1.797$ & $1.787$ & $1.802$ & $1.845$ & $1.924$ \\ 
WR & 4.052 & 3.986 & 4.074 & 4.045 & 4.322 \\
$\text{LQD}$ & $3.405$   & $3.079$   & $2.927$  & $2.730$  & $2.860$  \\
\hline \\[-1.8ex] 
\end{tabular}  
\caption{Comparison of prediction errors  for the US house price distributional time series,  where $k$ is the length of the training set. Actual prediction errors to be multiplied by $10^{-3}$.} 
 \label{tab:house} 
\end{table} 

The proposed models not only provide new ways of modeling distributional time series, but also shed light on the possibility of developing models for time series that take values in other geodesic metric spaces. The  proposed approach  is not limited to optimal transport and other transports that correspond to geodesics with respect to relevant metrics in distribution spaces could similarly be considered, for example Fisher-Rao transports \citep{dai2021statistical}.   
Modeling time series that take values in the space of multivariate distributions will be a challenging future problem; see also the  discussion of this case in \cite{mull:20:7}.

\section*{ACKNOWLEDGMENTS}

We express our gratitude to the Associate Editor and several referees for very helpful remarks that led to numerous improvements in the paper.\\


\appendix
\section{APPENDIX}

\subsection{Proof of Theorem \ref{thm1}}

\begin{proof} The proof proceeds analogously to that of  Theorem 2 in \cite{wu2004limit}. Set $T$ in equation \eqref{eq:moment} as  $\phi_{\varepsilon_{i-m}}(S_0)$, 
\begin{align*}
E \left[ d_q^{\eta} ( \widetilde{ \phi}_{i, m+1}(S_0), \widetilde{\phi}_{i, m}(S_0)) \right]  \lesssim r^m d_q^{\eta} (  \phi_{\varepsilon_{i-m}}(S_0), S_0)   \lesssim r^m.
\end{align*}
Then $P( d_q ( \widetilde{ \phi}_{i, m+1}(S_0), \widetilde{\phi}_{i, m}(S_0) ) \geq r^{m/(2\eta)} ) \leq r^{m/2}.$ Applying the Borel-Cantelli lemma, we have $ P( d_q ( \widetilde{ \phi}_{i, m+1}(S_0), \widetilde{\phi}_{i, m}(S_0) ) \geq r^{m/(2\eta)}  \text{ infinitely often} ) = 0 $, which entails that $ \widetilde{\phi}_{i,m} (S_0) \rightarrow  \widetilde{T}_{i} $ almost surely due to the completeness of $\mathcal{T}$. Also,
\begin{align*}
E \left[ d_q^{\eta} ( \widetilde{ \phi}_{i, m}(S_0), \widetilde{T}_{i} ) \right] \lesssim E\left[ \sum_{j=0}^{\infty} d_p^{\eta}( \widetilde{ \phi}_{i, m+j+1}(S_0), \widetilde{ \phi}_{i, m+j}(S_0) ) \right] \lesssim r^{m}.
\end{align*}
For any $S \in \mathcal{T}$,
\begin{align*}
E \left[ d_q^{\eta} ( \widetilde{ \phi}_{i, m}(S), \widetilde{T}_{i} ) \right] \lesssim E \left[ d_q^{\eta} ( \widetilde{ \phi}_{i, m}(S_0), \widetilde{T}_{i} ) \right] + E \left[ d_q^{\eta} ( \widetilde{ \phi}_{i, m}(S_0), \widetilde{\phi}_{i, m}(S) ) \right] \lesssim r^m,
\end{align*}
which entails that $ \widetilde{ \phi}_{i, m}(S) \rightarrow  \widetilde{T}_{i} $ almost surely. Clearly, $\{T_i\}_{i \in \mathbb{Z}}$ is stationary and satisfies $T_{i} = \alpha \odot T_{i-1}\oplus \varepsilon_i $. Suppose $\widetilde{S}_i \in L^q(\mathcal{S})$ is another stationary solution, it holds that $\widetilde{S}_i = \widetilde{\phi}_{i, m} (\widetilde{S}_{i-m})$ and $ E[d_q^{\eta}( \widetilde{T}_i, \widetilde{S}_i )] \lesssim r^m E[d_q^{\eta}(\widetilde{T}_{i-m}, \widetilde{S}_{i-m}  )]  $. Since $m$ is arbitrary and $ \widetilde{T}_{i-m}, \widetilde{S}_{i-m} $ are bounded functions, we can conclude that $\widetilde{T}_i = \widetilde{S}_i  $ almost surely in $\mathcal{L}^q(\mathcal{S})$.
\end{proof}

\subsection{Proof of Theorem \ref{thm2}}

\begin{proof}
The estimator $\widehat{\alpha}$ depends on the following quantities 
\begin{align*}
\begin{array}{l}
 \widehat{\rho}^1_{+}  =  \frac{1}{n-1} \sum_{i=2}^n \int_{\mathcal{S}} ( \widehat{T}_i(x) -x )( \widehat{T}_{i-1}(x) -x ) \, d x,  \\
 \widehat{\rho}^1_{-}  =  \frac{1}{n-1} \sum_{i=2}^n \int_{\mathcal{S}} ( \widehat{T}_i(x) -x )(x- \widehat{T}_{i-1}^{-1}(x) )\, d x  \\
 \widetilde{\rho}^1_{+}  = \frac{1}{n-1} \sum_{i=2}^n \int_{\mathcal{S}} ( T_i(x) -x )( T_{i-1}(x) -x ) \,dx, \\
 \widetilde{\rho}^1_{-}  = \frac{1}{n-1} \sum_{i=2}^n \int_{\mathcal{S}} ( T_i(x) -x )(x-  T_{i-1}^{-1}(x)  ) \,dx, \\
 \rho^1_{+}  =  \int_{\mathcal{S}} E[( T_2(x) -x )( T_{1}(x) -x )] \,dx, \\
 \rho^1_{-}  =  \int_{\mathcal{S}} E[( T_2(x) -x )( x- T_{1}^{-1}(x)  )] \,dx, 
 \end{array}
 \end{align*}
 and
 \begin{align*}
 \begin{array}{l}
 \widehat{\rho}^0_{+}  =  \frac{1}{n-1} \sum_{i=2}^n \int_{\mathcal{S}} ( \widehat{T}_{i-1}(x) -x )^2 \, d x,   \\
 \widehat{\rho}^0_{-}  =  \frac{1}{n-1} \sum_{i=2}^n \int_{\mathcal{S}} ( x- \widehat{T}_{i-1}^{-1}(x) )^2 \, dx,   \\
  \widetilde{\rho}^0_{+}  = \frac{1}{n-1} \sum_{i=2}^n \int_{\mathcal{S}} ( T_{i-1}(x) -x )^2 \, dx,  \\
  \widetilde{\rho}^0_{-}  = \frac{1}{n-1} \sum_{i=2}^n \int_{\mathcal{S}} ( x- T_{i-1}^{-1}(x))^2 \, dx,  \\
  \rho^0_{+}  =  \int_{\mathcal{S}} E[( T_{1}(x) -x )^2] \, dx, \\
  \rho^0_{-}  =  \int_{\mathcal{S}} E[( x - T_{1}^{-1}(x)  )^2] \, d x,
\end{array}
\end{align*}
Some algebra shows that $ \widehat{\alpha}_{+} = \widehat{\rho}^1_{+}/\widehat{\rho}^0_{+} $, $ \widehat{\alpha}_{-} = \widehat{\rho}^1_{-}/ \widehat{\rho}^0_{-} $ and $\widehat{\alpha}$ can be written as
\begin{align*}
\widehat{\alpha} =   \widehat{\alpha}_{+}\mathbb{I}_{\{ l_{+}(\widehat{\alpha}_{+}) \leq l_{-}(\widehat{\alpha}_{-}) \}} + \widehat{\alpha}_{-}\mathbb{I}_{\{ l_{+}(\widehat{\alpha}_{+}) > l_{-}(\widehat{\alpha}_{-})\}}  .
\end{align*}
It can be further shown that
\begin{align*}
l_+(\widehat{\alpha}_{+})  & =  \frac{1}{n-1} \sum_{i=2}^n \int_{\mathcal{S}} ( \widehat{T}_{i}(x) -x )^2 dx -  \frac{(\widehat{\rho}^1_{+})^2}{\widehat{\rho}^0_{+}}, \\
l_-(\widehat{\alpha}_{-})  & =  \frac{1}{n-1} \sum_{i=2}^n \int_{\mathcal{S}} ( \widehat{T}_{i}(x) -x )^2 dx -  \frac{(\widehat{\rho}^1_{-})^2}{\widehat{\rho}^0_{-}}, 
\end{align*}
which entails that $ l_+(\widehat{\alpha}_{+}) > l_-(\widehat{\alpha}_{-}) $ iff $ (\widehat{\rho}^1_{+})^2/ \widehat{\rho}^0_{+} <  (\widehat{\rho}^1_{-})^2/\widehat{\rho}^0_{-}$. Since $T_i$ is increasing, the signs of $T_{i}(x) - T_{i}^{-1}(x), x- T_{i}^{-1}(x)$ and $T_i(x) - x$ are always the same and $ |T_{i}(x) - T_{i}^{-1}(x)| \geq |  T_i(x) - x| $, which implies that
\begin{align*}
& 0 < \int_{\mathcal{S}} E[(T_{i}(x)-x)^2] \,dx\leq  \int_{\mathcal{S}} E[(T_{i}(x)-x)(T_{i}(x) - T_i^{-1}(x))] \,dx= \rho_{+}^1+\rho_{-}^1.
\end{align*}
Applying the triangle inequality, 
$$\left| \widehat{\alpha} - \alpha \right| \leq \left| \widehat{\alpha}_{+} - \alpha \right| \mathbb{I}_{\{(\widehat{\rho}^1_{+})^2/ \widehat{\rho}^0_{+} \geq  (\widehat{\rho}^1_{-})^2/\widehat{\rho}^0_{-} \}}   + \left| \widehat{\alpha}_{-} - \alpha \right| \mathbb{I}_{\{ (\widehat{\rho}^1_{+})^2/ \widehat{\rho}^0_{+} <  (\widehat{\rho}^1_{-})^2/\widehat{\rho}^0_{-} \}}.$$ 
Considering  the case when $\alpha \geq 0$, where the case $\alpha < 0$ is analogous, 
\begin{align*}
 \left| \widehat{\alpha}_{+} - \alpha \right|  =  \left| \frac{\widehat{\rho}_{+}^1}{\widehat{\rho}^0_{+}}- \frac{\rho^1_{+}}{\rho^0_{+}} \right|  
 =\left| \frac{\widehat{\rho}_{+}^1 \rho^0_{+} - \rho^1_{+}\widehat{\rho}^0_{+} }{\widehat{\rho}^0_{+} \rho^0_{+}} \right|   \leq   \left| \frac{\widehat{\rho}_{+}^1 \rho^0_{+} - \rho^0_{+}\rho^1_{+}  }{\widehat{\rho}^0_{+} \rho^0_{+}} \right|  +  \left| \frac{ \rho^0_{+}\rho^1_{+} - \rho^1_{+}\widehat{\rho}^0_{+} }{\widehat{\rho}^0_{+} \rho^0_{+}} \right|.
\end{align*}
Since $ \widehat{\rho}^0_{+} \rightarrow \rho^0_{+}$ in probability, $ 1/ \widehat{\rho}^0_{+} \rho^0_{+} = O_p(1)$. Lemma 1 implies that $\left| \widehat{\alpha}_{+} - \alpha \right| = O_p(\tau +  1 /\sqrt{n} )$. We next show that $\left| \widehat{\alpha}_{-} - \alpha \right| \mathbb{I}_{\{ (\widehat{\rho}^1_{+})^2/ \widehat{\rho}^0_{+} <  (\widehat{\rho}^1_{-})^2/\widehat{\rho}^0_{-}\}}$ is asymptotically negligible if the true parameter $\alpha \geq 0$. By Cauchy's inequality, 
\begin{align*}
 (\rho_{+}^1)^2 \rho_{-}^0  =
\alpha^2 (\rho_{+}^0)^2 \rho_{-}^0 \geq \alpha^2 \left(\int_{\mathcal{S}} E[(T_{1}(x)-x)(x - T_{1}^{-1}(x))] \,dx\right)^2 \rho_{+}^0 = (\rho_{-}^1)^2 \rho_{+}^0.
\end{align*}
In addition, due to the assumption $\int_{\mathcal{S}}E[ ( T_{1}(x)-x)^2 ] \,dx\geq C$,  $ (\rho_{+}^1)^2 \rho_{-}^0 - (\rho_{-}^1)^2 \rho_{+}^0$ is bounded below by a constant $c'>0$. From Lemma 1
\begin{align*}
 P\left( (\widehat{\rho}_{-}^1)^2\widehat{\rho}_{+}^0 - (\widehat{\rho}_{+}^1)^2\widehat{\rho}_{-}^0 >  0 \right) & \leq P\left( \left| (\widehat{\rho}_{-}^1)^2\widehat{\rho}_{+}^0- (\widehat{\rho}_{+}^1)^2\widehat{\rho}_{-}^0 -  (\rho_{-}^1)^2\rho_{+}^1 + (\rho_{+}^1)^2\rho_{-}^1 \right| > c' \right) \\ 
 & \lesssim \frac{\tau + 1/\sqrt{n} }{ c'}.
\end{align*}
Since $ |\widehat{\alpha}_{-} - \alpha| $ is bounded from above,  $ \left| \widehat{\alpha}_{-} - \alpha \right|  \mathbb{I}_{\{ (\widehat{\rho}^1_{+})^2/ \widehat{\rho}^0_{+} <  (\widehat{\rho}^1_{-})^2/\widehat{\rho}^0_{-} \} } = O_p( \tau + 1/\sqrt{n} ) $.
\end{proof}

\begin{lemma}\label{lem1}
Under the assumptions of Theorem 1 with $q=1$,
\begin{align*}
 E |\widehat{\rho}_{+}^1 - \rho_{+}^1 | \lesssim \tau + \frac{1}{\sqrt{n}} , \,
 E |\widehat{\rho}_{-}^1 - \rho_{-}^1 | \lesssim \tau + \frac{1}{\sqrt{n}} , \\
 E |\widehat{\rho}^0_{+} - \rho^0_{+} | \lesssim  \tau + \frac{1}{\sqrt{n}}, \, E |\widehat{\rho}^0_{-} - \rho^0_{-} | \lesssim  \tau + \frac{1}{\sqrt{n}}.
\end{align*}
\end{lemma}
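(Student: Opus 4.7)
The plan is to decompose via the triangle inequality
\begin{equation*}
|\widehat{\rho}^1_+ - \rho^1_+| \;\leq\; |\widehat{\rho}^1_+ - \widetilde{\rho}^1_+| \;+\; |\widetilde{\rho}^1_+ - \rho^1_+|,
\end{equation*}
isolating the transport-estimation error from the purely stochastic sampling error, and to bound the first summand by $O(\tau)$ and the second by $O(1/\sqrt{n})$ in expectation. The remaining three inequalities of the lemma follow by exactly the same scheme, using the identity $d_1(S^{-1},T^{-1}) = d_1(S,T)$ (noted after \eqref{eq:arr1}) to handle the ``$-$'' quantities that involve inverse transports.

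For the first summand, boundedness of $\mathcal{S}=[s_1,s_2]$ together with $T_i,\widehat{T}_i\in\mathcal{T}$ gives $|T_i(x)-x|,|\widehat{T}_i(x)-x|\leq L:=s_2-s_1$ uniformly in $i,x$. The elementary expansion
\begin{equation*}
(\widehat{T}_i - x)(\widehat{T}_{i-1}-x) - (T_i-x)(T_{i-1}-x) = (\widehat{T}_i-T_i)(\widehat{T}_{i-1}-x) + (T_i-x)(\widehat{T}_{i-1}-T_{i-1}),
\end{equation*}
integrating over $\mathcal{S}$, averaging over $i$, and taking expectation immediately yields $E|\widehat{\rho}^1_+ - \widetilde{\rho}^1_+| \leq 2L\tau$.

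For the second summand, write $\widetilde{\rho}^1_+ = (n-1)^{-1}\sum_{i=2}^n Y_i$ with $Y_i := \int_{\mathcal{S}}(T_i(x)-x)(T_{i-1}(x)-x)\,dx$. By Theorem~1, $\{T_i\}$ is strictly stationary, so $\{Y_i\}$ is strictly stationary, uniformly bounded by $L^3$, with mean $\rho^1_+$. By Cauchy--Schwarz, the desired $E|\widetilde{\rho}^1_+ - \rho^1_+| = O(1/\sqrt{n})$ follows once $\mathrm{Var}(\widetilde{\rho}^1_+) = O(1/n)$ is established, which in turn reduces to a summable, geometrically decaying bound on the autocovariances $|\mathrm{Cov}(Y_i, Y_{i+k})|$. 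To obtain this, I would couple each $T_i$ with the truncated map $T_i^{(m)} := \widetilde{\phi}_{i,m}(S_0)$ from the proof of Theorem~1, which depends only on $\varepsilon_{i-m+1},\dots,\varepsilon_i$. Condition \eqref{eq:moment} and the argument in that proof give $E[d_1(T_i,T_i^{(m)})^\eta] \lesssim r^m$, and combined with the uniform bound $d_1(T_i,T_i^{(m)})\leq L^2$ this yields (via $E|X|\leq L^{2(1-\eta)}E|X|^\eta$ when $\eta\leq 1$, and Jensen when $\eta\geq 1$) a first-moment bound $E[d_1(T_i,T_i^{(m)})] \lesssim r^{m/\max(1,\eta)}$. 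Defining $Y_i^{(m)}$ analogously, $Y_i^{(m)}$ and $Y_{i+k}^{(m)}$ are independent whenever $k > m$; a standard three-term covariance decomposition together with boundedness of $Y_i$ and the product-expansion above then yields $|\mathrm{Cov}(Y_i,Y_{i+k})| \lesssim r^{ck}$ for some $c>0$ upon choosing $m=k-1$. Summing this geometric series gives $\mathrm{Var}(\widetilde{\rho}^1_+) = O(1/n)$.

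The main obstacle I anticipate is the coupling step and the passage from the $\eta$-moment form of \eqref{eq:moment} to a first-moment bound on $d_1$; the difficulty when $\eta<1$ is resolved by the uniform boundedness of $d_1$ on $\mathcal{T}$ as indicated above. With this in hand, the remaining verifications for $\widehat{\rho}^1_-$ and $\widehat{\rho}^0_\pm$ are a direct analogue, substituting $x-T_{i-1}^{-1}(x)$ for $T_{i-1}(x)-x$ (or $(T_{i-1}(x)-x)^2$ and $(x-T_{i-1}^{-1}(x))^2$ for the $\rho^0$ quantities) and invoking the inverse identity to bound $E[d_1(\widehat{T}_i^{-1},T_i^{-1})]\leq\tau$ where appropriate.
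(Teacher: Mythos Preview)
Your proof is correct and for the estimation-error summand $|\widehat{\rho}^1_+ - \widetilde{\rho}^1_+|$ it is essentially identical to the paper's: both use the elementary product expansion, integrate, and invoke the uniform bound $|T_i(x)-x|\leq s_2-s_1$ together with the definition of $\tau$.

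The genuine difference lies in the handling of the stochastic summand $|\widetilde{\rho}^1_+ - \rho^1_+|$. The paper does not carry out the coupling and covariance-summation argument explicitly; instead it verifies that the functional $g^1_+(S,T)=\int_{\mathcal{S}}(S(x)-x)(T(x)-x)\,dx - \text{(mean)}$ is \emph{stochastic Dini-continuous} in the sense of condition~(9) of \cite{wu2004limit}, and then invokes Theorem~3 of that reference as a black box to obtain $E|\widetilde{\rho}^1_+ - \rho^1_+|\lesssim 1/\sqrt{n}$ directly. Your approach is more hands-on: you build the truncated approximations $T_i^{(m)}=\widetilde{\phi}_{i,m}(S_0)$, extract a geometric first-moment bound on $d_1(T_i,T_i^{(m)})$ from \eqref{eq:moment} (carefully handling both $\eta\leq 1$ and $\eta\geq 1$ via boundedness of $d_1$ on $\mathcal{T}$), and then derive the summable covariance decay for $\{Y_i\}$ by a standard independence-plus-remainder decomposition. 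This is precisely the mechanism underlying the Wu--Shao limit theorem, so you are in effect reproving the relevant piece of their Theorem~3 in this specific setting. The advantage of your route is that it is self-contained and makes the role of the geometric-moment contraction fully transparent; the advantage of the paper's route is brevity, since once Dini-continuity is checked the $1/\sqrt{n}$ bound is immediate. Both arguments rely on the same structural ingredients and yield the same conclusion.
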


\begin{proof}
To study the asymptotic properties of $\widetilde{\rho}_{+}^1, \widetilde{\rho}_{-}^1, \widetilde{\rho}^0_{+}, \widetilde{\rho}^0_{-} $, given random transports $S, T \in \mathcal{T}$, define $g_{+}, g_{-}:\mathcal{T} \rightarrow \mathbb{R}$ as 
\begin{align*}
&g_{+}^0(T) =  \int_{\mathcal{S}} ( T(x) -x )^2 \, dx - E\left[\int_{\mathcal{S}} ( T(x) -x )^2 \, dx \right], \\
& g_{-}^0(T) =  \int_{\mathcal{S}} ( x -T^{-1}(x) )^2 \, dx - E\left[ \int_{\mathcal{S}} ( x -T^{-1}(x) )^2 \, dx \right]
\end{align*}
and $g_{+}^1, g_{-}^1:\mathcal{T} \times \mathcal{T} \rightarrow \mathbb{R}$ as
\begin{align*}
g_{+}^1(S, T)& = \int_{\mathcal{S}} (S(x) - x)( T(x) - x) \,dx- E\left[ \int_{\mathcal{S}} (S(x) - x)( T(x) - x) \,dx\right],\\
g_{-}^1(S, T) &= \int_{\mathcal{S}} (S(x) - x)( x - T^{-1}(x) ) \,dx- E\left[ \int_{\mathcal{S}} (S(x) - x)( x - T^{-1}(x) ) \,dx \right].
\end{align*}
For identically distributed pairs $(S_1, T_1)$ and $(S_2, T_2)$,  
\begin{align*}
    &  | g_{+}^1(S_1, T_1) - g_{+}^1(S_2, T_2) | \mathbb{I}_{\{d_1((S_1, T_1),(S_2, T_2)) \leq \delta \}}   \\
    \lesssim  & \left(  | g_{+}^1(S_1, T_1) - g_{+}^1(S_1, T_2) |     + |g_{+}^1(S_1, T_2) - g_{+}^1(S_2, T_2)| \right)  \mathbb{I}_{\{ d_1((S_1, T_1),(S_2, T_2)) \leq \delta \} }  \\
    \lesssim & \delta,
\end{align*}
where $d_1((S_1, T_1),(S_2, T_2)) = \sqrt{d_1^2(S_1, S_2) + d_1^2(T_1, T_2)}$ is the product metric. This shows that $g_{+}^1$ is stochastic Dini-continuous in view of condition (9) in \cite{wu2004limit}. Similarly, it can be shown that  $g_{+}^0$, $ g_{-}^0$ and $g^1_{-}$ are stochastic Dini-continuous. Note that
\begin{align*}
    \widetilde{\rho}_{+}^1 - \rho_{+}^1 = \frac{1}{n} \sum_{i=2}^{n} g_{+}^1 (T_i, T_{i-1}),
\end{align*}
where $\{g_{+}^1(T_{i}, T_{i-1})\}$ are bounded random variables. Thus, applying Theorem 3 \citep{wu2004limit}, we have
$ E|\widetilde{\rho}_{+}^1 - \rho_{+}^1 | \lesssim  \frac{1}{\sqrt{n}} $, and, similarly,  
\begin{align*}
E|\widetilde{\rho}_{-}^1 - \rho_{-}^1 | \lesssim  \frac{1}{\sqrt{n}} , \, |\widetilde{\rho}^0_{+} - \rho^0_{+} | \lesssim  \frac{1}{\sqrt{n}} \text{ and } |\widetilde{\rho}^0_{-} - \rho^0_{-} | \lesssim  \frac{1}{\sqrt{n}} .
\end{align*}
Then,  $\left| \widehat{\rho}_{+}^1 - \rho^1_{+} \right| \leq \left| \widehat{\rho}_{+}^1 - \widetilde{\rho}^1_{+} \right| + \left| \rho^1_{+} - \widetilde{\rho}^1_{+} \right|,$ and  
\begin{multline*}
\left| \int_{\mathcal{S}} ( \widehat{T}_i(x) -x )( \widehat{T}_{i-1}(x) -x )dx - \int_{\mathcal{S}} ( T_i(x) -x )( T_{i-1}(x) -x ) dx \right| \\
\leq  \left| \int_{\mathcal{S}} ( \widehat{T}_i(x) -  T_i(x) ) ( \widehat{T}_{i-1}(x) - T_{i-1}(x) )   dx   \right| 
 +  \left| \int_{\mathcal{S}} ( \widehat{T}_i(x) -  T_i(x) ) ( T_{i-1}(x)-x )   dx   \right| \\
 + \left| \int_{\mathcal{S}} (  T_i(x) -x ) ( \widehat{T}_{i-1}(x) - T_{i-1}(x) )   dx   \right|,
\end{multline*}
where the first term is bounded by the second and third term on the r.h.s. . Since $\mathcal{S}$ is a bounded closed interval, $T_i(x) - x$ is bounded by a constant for all $i$, and 
\begin{align*}
 E \left| \int_{\mathcal{S}} ( \widehat{T}_i(x) -  T_i(x) ) ( T_{i-1}(x)-x )   \,dx  \right| 
\lesssim  E \left[ \int_{\mathcal{S}}  | \widehat{T}_i(x) -  T_i(x) |  \,dx\right] \lesssim  \tau,  
\end{align*}
which entails that $ E \left| \widehat{\rho}_{+}^1 - \widetilde{\rho}_{+}^1 \right| \lesssim \tau $, whence  $ E \left| \widehat{\rho}_{+}^1 - \rho^1_{+} \right| \lesssim (\tau + 1 / \sqrt{n}) $. The results for $\widehat{\rho}_{-}^{1}$, $\widehat{\rho}^0_{+}$ and $\widehat{\rho}^0_{-}$ can be shown similarly.
\end{proof}

\subsection{Proof of Theorem \ref{thm4}}

\begin{proof}
Let $L(\alpha_1, \dots, \alpha_p)  = E \left[ \int_{\mathcal{S}} \left( T_{i}(x) - \alpha_p \odot T_{i-p} \oplus  \dots \oplus  \alpha_1 \odot T_{i-1}(x) \right)^2 dx \right]$ , we first show that 
\begin{align}
\label{eq:uniform}
  \sup_{-c \leq \alpha_1, \dots, \alpha_p \leq c}\left|L_n(\alpha_1, \dots, \alpha_p)- L(\alpha_1, \dots, \alpha_p)  \right|  = o_p(1).
\end{align}
To this end, we apply Corollary 3.1 of \cite{whitney1991}. Since $\{ T_i \}$ are bi-Lipschitz with a common Lipschitz constant $K$, it can be easily seen that 
\begin{align*}
    \left| L_n(\alpha_1, \dots, \alpha_p) - L_n(\alpha_1', \dots, \alpha_p') \right| \lesssim |\alpha_1 - \alpha_1'| + \cdots + |\alpha_p - \alpha_p'|.
\end{align*} 

Next, we obtain the pointwise convergence of $L_n$. Fix $(\alpha_1, \dots, \alpha_p)$, set $\mathbf{T}_i = (T_i,  \dots, T_{i-p})$ and $g(\mathbf{T}_i) = \widetilde{g}(\mathbf{T}_i) - E[\widetilde{g}(\mathbf{T}_i)]$, where
\begin{align*}
    \widetilde{g}(\mathbf{T}_i) = \int_{\mathcal{S}} \left( T_{i}(x) - \alpha_p \odot T_{i-p} \oplus  \dots \oplus  \alpha_1 \odot T_{i-1}(x) \right)^2 dx.
\end{align*}
Using the fact that $ \{T_i\}$ are bi-Lipschitz continuous and have bounded derivatives both below and above, it can be shown that
\begin{align*}
    & d_1(\alpha_2 \odot T_{i-2} \oplus \alpha_1 \odot T_{i-1}, \alpha_2 \odot T'_{i-2} \oplus \alpha_1 \odot T'_{i-1} ) \\
    \lesssim & d_1(T_{i-2}, T_{i-2}') + d_1(T_{i-1} \circ ( \alpha_2 \odot T_{i-2}),  T'_{i-1} \circ ( \alpha_2 \odot T'_{i-2})) \\
    \lesssim & d_1(T_{i-2}, T_{i-2}') + d_1(T_{i-1} \circ ( \alpha_2 \odot T_{i-2}),  T_{i-1} \circ ( \alpha_2 \odot T'_{i-2})) \\ 
    &  + d_1(T_{i-1} \circ ( \alpha_2 \odot T'_{i-2}),  T'_{i-1} \circ ( \alpha_2 \odot T'_{i-2})) \\
    \lesssim & d_1(T_{i-1}, T_{i-1}') + d_1(T_{i-2}, T_{i-2}').
\end{align*}
Applying  the above logic recursively, we can show that for $\mathbf{T}_i' \sim^{i.i.d}\mathbf{T}_i$, $ |\widetilde{g}(\mathbf{T}_i) - \widetilde{g}(\mathbf{T}_i')| \lesssim d_{1}(\mathbf{T}_i, \mathbf{T}_i')$, which  implies that $|g(\mathbf{T}_i) - g(\mathbf{T}_i')| \lesssim d_{1}(\mathbf{T}_i, \mathbf{T}_i')$ and then  equation (9) in Theorem 3 of \cite{wu2004limit}. Thus, 
\begin{align*}
    L_n(\alpha_1, \dots, \alpha_p) - L(\alpha_1, \dots, \alpha_p) = \frac{1}{n-p} \sum_{i=p+1}^n g(\mathbf{T}_i) \overset{p}{\rightarrow} 0.
\end{align*}
This proves the pointwise convergence and Corollary 3.1 of \cite{whitney1991} implies \eqref{eq:uniform}. Next, we show that $\Delta(\widetilde{\bm{\alpha}}, \bm{\alpha}^{*})$ converges to 0 in probability. For  $\bm{\alpha} = (\alpha_1, \dots, \alpha_p)^T$, observe that $L(\alpha_1, \dots, \alpha_p) = \Delta(\bm{\alpha}, \bm{\alpha}^{*})+ c_{\Delta}$, where $c_{\Delta} = \int_{\mathcal{S}} [ E(T_{i}^2(x)) - (E (T_{i}(x)))^2 ] dx$. Then 
\begin{align*}
 \Delta(\widetilde{\bm{\alpha}}, \bm{\alpha}^{*}) & =  L(\widetilde{\bm{\alpha}}) - c_{\Delta} \\ 
 & \leq L_n(\widetilde{\bm{\alpha}}) + \sup_{-c \leq \alpha_1, \dots, \alpha_p \leq c} \left|L_n(\alpha_1, \dots, \alpha_p)- L(\alpha_1, \dots, \alpha_p)  \right| - c_{\Delta}, 
\end{align*}
and since $ \inf_{\bm{\alpha}} L(\bm{\alpha}) =c_{\Delta}$,  \eqref{eq:uniform} implies that 
$ \inf_{-c \leq \alpha_1, \dots, \alpha_p \leq c} L_n(\bm{\alpha}) \rightarrow^p \inf_{\bm{\alpha}} L(\bm{\alpha}) =c_{\Delta} $ and $\sup_{-c \leq \alpha_1, \dots, \alpha_p \leq c} \left|L_n(\alpha_1, \dots, \alpha_p)- L(\alpha_1, \dots, \alpha_p)  \right| \rightarrow^p 0$. This concludes the proof.
\end{proof}

\subsection{Proof of Theorem \ref{thm6}}

\begin{proof}
For convenience, define the  quantities 
\begin{align*}
\begin{array}{l}
 \widehat{\rho}^1_{+}(x)  =  \frac{1}{n-1} \sum_{i=2}^n ( \widehat{T}_i(x) -x )( \widehat{T}_{i-1}(x) -x ) ,  \\
 \widehat{\rho}^1_{-}(x)  =  \frac{1}{n-1} \sum_{i=2}^n  ( \widehat{T}_i(x) -x )(x- \widehat{T}_{i-1}^{-1}(x) )  \\
 \widetilde{\rho}^1_{+}(x)  = \frac{1}{n-1} \sum_{i=2}^n  ( T_i(x) -x )( T_{i-1}(x) -x )  , \\
 \widetilde{\rho}^1_{-}(x)  = \frac{1}{n-1} \sum_{i=2}^n  ( T_i(x) -x )(x-  T_{i-1}^{-1}(x)  )  , \\
 \rho^1_{+}(x)  =   E[( T_2(x) -x )( T_{1}(x) -x )]  , \\
 \rho^1_{-}(x)  =   E[( T_2(x) -x )( x- T_{1}^{-1}(x)  )] , 
 \end{array}
 \end{align*}
 and also
 \begin{align*}
\begin{array}{l}
 \widehat{\rho}^0_{+}(x)  =  \frac{1}{n-1} \sum_{i=2}^n ( \widehat{T}_{i-1}(x) -x )^2 ,   \\
 \widehat{\rho}^0_{-}(x)  =  \frac{1}{n-1} \sum_{i=2}^n ( x-\widehat{T}_{i-1}^{-1}(x) )^2 ,   \\
  \widetilde{\rho}^0_{+}(x)  = \frac{1}{n-1} \sum_{i=2}^n  ( T_{i-1}(x) -x )^2 ,  \\
   \widetilde{\rho}^0_{-}(x)  = \frac{1}{n-1} \sum_{i=2}^n  ( x-T_{i-1}^{-1}(x) )^2 ,  \\
  \rho^0_{+}(x)  =   E[( T_{1}(x) -x )^2] , \\
  \rho^0_{-}(x)  =   E[( x- T_{1}^{-1}(x))^2].
\end{array}
\end{align*}
Then  $ \widehat{\alpha}_{+}(x) = \widehat{\rho}^1_{+}(x)/ \widehat{\rho}^0_{+}(x) $, $ \widehat{\alpha}_{-}(x) = \widehat{\rho}^1_{-}(x)/ \widehat{\rho}^0_{-}(x) $ and  $\widehat{\alpha}(x)$ can be written as
\begin{align*}
\widehat{\alpha}(x) =   \widehat{\alpha}_{+}(x) \mathbb{I}_{\{ l_{+}(\widehat{\alpha}_{+}(x)|x) \leq l_{-}(\widehat{\alpha}_{-}(x)|x) \}} + \widehat{\alpha}_{-}(x)\mathbb{I}_{\{ l_{+}(\widehat{\alpha}_{+}(x)|x) > l_{-}(\widehat{\alpha}_{-}(x)|x)\}}  .
\end{align*}
Similarly, we can show that $ l_+(\widehat{\alpha}_{+}(x)|x) > l_-(\widehat{\alpha}_{-}(x)|x) $ if and only if  $ (\widehat{\rho}^1_{+}(x))^2/\widehat{\rho}^0_{+}(x) <  (\widehat{\rho}^1_{-}(x))^2/\widehat{\rho}^0_{-}(x).$ Then
\begin{multline*}
 \left| \widehat{\alpha}(x) - \alpha(x) \right| \leq  \left| \widehat{\alpha}_{+}(x) - \alpha(x) \right| \mathbb{I}_{\{ (\widehat{\rho}^1_{+}(x))^2/\widehat{\rho}^0_{+}(x) \geq  (\widehat{\rho}^1_{-}(x))^2/\widehat{\rho}^0_{-}(x) \}} \\  +  \left| \widehat{\alpha}_{-}(x) - \alpha(x) \right| \mathbb{I}_{\{ (\widehat{\rho}^1_{+}(x))^2/\widehat{\rho}^0_{+}(x) <  (\widehat{\rho}^1_{-}(x))^2/\widehat{\rho}^0_{-}(x) \}} .
\end{multline*}
When $\alpha \geq 0$, 
\begin{multline*}
 \left| \widehat{\alpha}_{+}(x) - \alpha(x) \right|  =  \left| \frac{\widehat{\rho}_{+}^1(x) }{\widehat{\rho}^0_{+}(x) }- \frac{\rho^1_{+}(x) }{\rho^0_{+}(x) } \right|  
 =\left| \frac{\widehat{\rho}_{+}^1(x) \rho^0_{+}(x) - \rho^1_{+}(x) \widehat{\rho}^0_{+}(x) }{\widehat{\rho}^0_{+}(x) \rho^0_{+}(x) } \right| \\  \leq   \left| \frac{\widehat{\rho}_{+}^1(x) \rho^0_{+}(x) - \rho^0_{+}(x) \rho^1_{+}(x)  }{\widehat{\rho}^0_{+}(x) \rho^0_{+}(x) } \right|  +  \left| \frac{ \rho^0_{+}(x) \rho^1_{+}(x) - \rho^1_{+}(x)\widehat{\rho}^0_{+}(x) }{\widehat{\rho}^0_{+}(x) \rho^0_{+}(x) } \right|.
\end{multline*}
Since $ |\widehat{\rho}^0_{+}(x) - \rho^0_{+}(x)| = o_p(1)$, it holds that $ |1/ \widehat{\rho}^0_{+}(x) \rho^0_{+}(x)| = O_p(1)$. Lemma 2 implies that
\begin{align*}
\left| \widehat{\alpha}_{+}(x) - \alpha(x) \right| = O_p(\tau(x) + 1/\sqrt{n} ).
\end{align*}
Next, we show  $ \left| \widehat{\alpha}_{-}(x) - \alpha(x) \right| \mathbb{I}_{\{ (\widehat{\rho}^1_{+}(x))^2/\widehat{\rho}^0_{+}(x) <  (\widehat{\rho}^1_{-}(x))^2/\widehat{\rho}^0_{-}(x)\}}$  is asymptotically negligible. Note that 
\begin{align*}
(\rho^1_{+}(x))^2\rho_{-}^0(x) = \beta^2(x) (\rho_{+}^0(x))^2 \rho_{-}^0(x)  \geq (\rho_{-}^1(x))^2 \rho_{+}^0(x)
\end{align*}
and that the assumption $ E[ ( T_{1}(x)-x)^2 ] \geq C $ implies that $(\rho_{+}^1(x))^2 \rho_{-}^0 (x)- (\rho_{-}^1 (x))^2\rho_{+}^0(x) \geq c' $ for some constant $c'>0$. Then, from Lemma 2,
\begin{align*}
 & P\left([(\widehat{\rho}_{-}^1(x))^2\widehat{\rho}_{+}^0(x) - (\widehat{\rho}_{+}^1(x))^2\widehat{\rho}_{-}^0(x)] >  0 \right) \\ \leq & P\left( \left| (\widehat{\rho}_{-}^1(x) )^2\widehat{\rho}_{+}^0 (x)- (\widehat{\rho}_{+}^1(x))^2\widehat{\rho}_{-}^0(x) -  (\rho_{-}^1 (x))^2\rho_{+}^0(x) + (\rho_{+}^1(x))^2 \rho_{-}^0 (x) \right| > c' \right) \\
  \lesssim & \frac{\tau(x) + 1/\sqrt{n} }{ c'},
\end{align*}
which entails $ \left| \widehat{\alpha}_{-}(x) - \alpha(x) \right| \mathbb{I}_{\{ (\widehat{\rho}^1_{+}(x))^2/\widehat{\rho}^0_{+}(x) <  (\widehat{\rho}^1_{-}(x))^2/\widehat{\rho}^0_{-}(x)\}}  = O_p( \tau(x) + 1/\sqrt{n} ) $.
\end{proof}

\begin{lemma}\label{lem2}
    Under the  assumptions of Theorem 3,
\begin{align*}
& E[  |\widehat{\rho}_{+}^1(x) - \rho_{+}^1(x) | ] \lesssim \tau(x) + \frac{1}{\sqrt{n}}, \; E[  |\widehat{\rho}_{-}^1(x) - \rho_{-}^1(x) | ] \lesssim \tau(x) + \frac{1}{\sqrt{n}}, \\
& E[  |\widehat{\rho}^0_{+}(x) - \rho^0_{+}(x) | ] \lesssim \tau(x) + \frac{1}{\sqrt{n}}, \; E[  |\widehat{\rho}^0_{-}(x) - \rho^0_{-}(x) | ] \lesssim \tau(x) + \frac{1}{\sqrt{n}}.
\end{align*}
\end{lemma}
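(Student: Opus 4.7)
The plan is to mirror the proof of Lemma 1 pointwise in $x$, splitting each of the four targets via the triangle inequality into (a) a stochastic term comparing $\widetilde{\rho}(x)$, built from the true $T_i$, with $\rho(x)$, and (b) a plug-in error comparing $\widehat{\rho}(x)$ with $\widetilde{\rho}(x)$. The stochastic term will absorb $1/\sqrt{n}$ via Wu's limit theorem and the plug-in error will absorb $\tau(x)$ via a direct product expansion.

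For the stochastic term, fix $x$ and consider
\begin{equation*}
g_+^1(T_i, T_{i-1}) = (T_i(x)-x)(T_{i-1}(x)-x) - \rho^1_+(x).
\end{equation*}
Since $\mathcal{S}$ is a bounded interval, each factor $T_j(x)-x$ is uniformly bounded, and
\begin{equation*}
|g_+^1(T_i,T_{i-1}) - g_+^1(T'_i,T'_{i-1})| \lesssim |T_i(x)-T'_i(x)| + |T_{i-1}(x)-T'_{i-1}(x)| \lesssim d_\infty((T_i,T_{i-1}),(T'_i,T'_{i-1})),
\end{equation*}
so $g_+^1$ is stochastic Dini-continuous in the sense of condition (9) of \cite{wu2004limit}. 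Combined with the geometric-moment contraction \eqref{eq:momentinfty} that is in force under the assumptions of Theorem \ref{thm5}, Theorem 3 of \cite{wu2004limit} yields $E|\widetilde{\rho}^1_+(x) - \rho^1_+(x)| \lesssim 1/\sqrt{n}$. The same argument, with $T(x)-x$ replaced where appropriate by $x - T^{-1}(x)$ (also uniformly bounded on $\mathcal{S}$), handles the other three tilde-minus-true differences.

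For the plug-in term I expand
\begin{equation*}
(\widehat{T}_i(x)-x)(\widehat{T}_{i-1}(x)-x) - (T_i(x)-x)(T_{i-1}(x)-x)
\end{equation*}
into three cross products involving $\widehat{T}_j(x) - T_j(x)$ and the bounded factors $T_j(x)-x$. Averaging over $i$ and taking expectations immediately yields $E|\widehat{\rho}^1_+(x) - \widetilde{\rho}^1_+(x)| \lesssim \tau(x)$; the case of $\widehat{\rho}^0_+(x)$ is analogous and slightly simpler.

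The main obstacle is the minus-signed pair $\widehat{\rho}^1_-(x)$ and $\widehat{\rho}^0_-(x)$, whose plug-in expansions produce differences of the form $\widehat{T}_i^{-1}(x) - T_i^{-1}(x)$, whereas $\tau(x)$ only controls the forward evaluation $\widehat{T}_i(x) - T_i(x)$; the integral identity $\int|\widehat{T}^{-1}-T^{-1}| = \int|\widehat{T}-T|$ that Lemma 1 exploited is unavailable pointwise. To close this gap I will invoke the strict monotonicity and continuity assumed in Theorem \ref{thm6}, together with a uniform positive lower bound on the slopes of $T_i$ in a neighborhood of $T_i^{-1}(x)$, to obtain the local inversion bound
\begin{equation*}
|\widehat{T}_i^{-1}(x) - T_i^{-1}(x)| \lesssim |\widehat{T}_i(T_i^{-1}(x)) - x|,
\end{equation*}
whose expectation is of the same order as $\tau$ at $T_i^{-1}(x)$, i.e.\ $O(\tau(x))$ under mild continuity of $y \mapsto \sup_i E|\widehat{T}_i(y) - T_i(y)|$. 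Inserting this into the product expansions for $\widehat{\rho}^1_-(x) - \widetilde{\rho}^1_-(x)$ and $\widehat{\rho}^0_-(x) - \widetilde{\rho}^0_-(x)$ completes the proof.
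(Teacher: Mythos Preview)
Your treatment of the plus-signed quantities $\widehat{\rho}^1_+(x)$ and $\widehat{\rho}^0_+(x)$ is essentially identical to the paper's proof: the same triangle-inequality split, the same product expansion for the plug-in error giving $\tau(x)$, and the same appeal to stochastic Dini-continuity together with the geometric-moment contraction \eqref{eq:momentinfty} to invoke Wu--Shao for the $1/\sqrt{n}$ term. The paper in fact only writes out the $\widehat{\rho}^1_+(x)$ case and declares the remaining three ``analogous,'' without addressing the inverse-map issue you raise.

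Your attempt to close the minus-signed cases, however, has a genuine gap. The local inversion bound you invoke requires a uniform positive lower bound on the slopes of the $T_i$ near $T_i^{-1}(x)$, but the standing hypotheses (those of Theorem~\ref{thm5}, and the strictly-increasing-and-continuous assumption of Theorem~\ref{thm6}) do not furnish any quantitative lower bound on the derivative; strict monotonicity alone does not give the Lipschitz control on $T_i^{-1}$ that your inequality needs. More seriously, even if that bound were available, the evaluation point $T_i^{-1}(x)$ is \emph{random} and varies with $i$, so the expectation of $|\widehat{T}_i(T_i^{-1}(x)) - x|$ is not governed by $\tau(x)=\sup_i E|\widehat{T}_i(x)-T_i(x)|$ at the fixed argument $x$; ``mild continuity of $y\mapsto \sup_i E|\widehat{T}_i(y)-T_i(y)|$'' cannot rescue this, since you would need a bound uniform over all $y\in\mathcal{S}$, i.e.\ $\sup_y \tau(y)$, which is a different and strictly stronger object than $\tau(x)$. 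As written, the minus-signed plug-in bounds do not follow from your argument, and the paper itself does not supply the missing details either.
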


\begin{proof}
Here, we only show that $E[| \widehat{\rho}^1_{+}(x) -  \rho_{+}^1(x)|] \lesssim \tau(x) + 1/\sqrt{n}$, the other part is analogous. Observe
$
\left| \widehat{\rho}_{+}^1(x) - \rho^1_{+}(x) \right| \leq  \left| \widehat{\rho}_{+}^1(x) - \widetilde{\rho}^1_{+}(x) \right| +  \left| \rho^1_{+}(x) - \widetilde{\rho}^1_{+}(x) \right|.
$
Since  $\mathcal{S}$ is a bounded interval,
\begin{align*}
&  \left| \widehat{\rho}_{+}^1(x) - \widetilde{\rho}^1_{+}(x)  \right| \\
\leq & \frac{1}{n-1} \sum_{i=2}^n  \left|  ( \widehat{T}_i(x) -x )( \widehat{T}_{i-1}(x) -x ) - ( T_i(x) -x )( T_{i-1}(x) -x )  \right| \\
\lesssim & \frac{1}{n-1} \sum_{i=2}^n   \left|   \widehat{T}_i(x) -  T_i(x)      \right|  + \frac{1}{n-1} \sum_{i=2}^n  \left|   \widehat{T}_{i-1}(x) - T_{i-1}(x)      \right|,
\end{align*}
which entails that $E \left[ \left| \widehat{\rho}_{+}^1(x) - \widetilde{\rho}_{+}^1(x) \right| \right] \lesssim   \tau(x)$. We proceed to demonstrate  that $E\left[ \left|\widetilde{\rho}_{+}^1(x) - \rho_{+}^1(x) \right| \right] \lesssim 1/\sqrt{n}$. Condition \eqref{eq:momentinfty} implies that for any fixed $x \in \mathcal{S}$, 
\begin{align*}
E \left[ |  \widetilde{ \varphi}_{i, i}(T_0(x))- \widetilde{\varphi}_{i, i}(T_0'(x)) | \right] \lesssim  r^{i},
\end{align*}
where $T_0'$ is an independent copy of $T_0$. Given random transports $S,T \in \tilde{\mathcal{T}}$, define $g_{+}^1:\mathcal{T} \times \mathcal{T} \rightarrow \mathbb{R}$ as 
\begin{align*}
g_{+}^1 (S,T) = (S(x)-x)(T(x)-x) - E[(S(x)-x)(T(x)-x)].
\end{align*}
Since 
\begin{align*}
\widetilde{\rho}_{+}^1(x) - \rho_{+}^1(x) = \frac{1}{n-1} \sum_{i=2}^n g_{+}^1 (T_i,T_{i-1})
\end{align*}
and $g_{+}^1$ is stochastic Dini-continuous with respect to $d_{\infty}$, it  follows from Theorem 2  \citep{wu2004limit} that $E[| \widetilde{\rho}^1_{+}(x) -  \rho_{+}^1(x)|] \lesssim 1/\sqrt{n}$. 

\end{proof}


\bibliographystyle{rss}

\end{document}